\def\RR{{\mathbb R}}
\def\fxc{{\ensuremath \mathcal C}}
\newcommand{\fd}{\ensuremath{\rightarrow}}
\newcommand{\findem}{\nolinebreak\vspace{\baselineskip} \hfill\rule{2mm}{2mm}\\}
\renewcommand{\phi}{\ensuremath{\varphi}}
\renewcommand{\phi}{\ensuremath{\varphi}}
\newtheorem{nt}{Notation}
\newtheorem{prop}[nt]{Proposition}
\newtheorem{coro}[nt]{Corollary} 
\newtheorem{defi}[nt]{Definition}
\newtheorem{ex}[nt]{Example} 
\newtheorem{lm}[nt]{Lemma} 
\newtheorem{rem}[nt]{Remark} 
\newtheorem{thm}[nt]{Theorem}
\begin{document}
\sloppy
\title{Geometric approach for non pharmaceutical interventions in epidemiology}
\date{}
\author{Laurent Evain \footnote{laurent.evain@univ-angers.fr}, Jean-Jacques Loeb}
\maketitle


\section*{Abstract: }

Various non pharmaceutical interventions have been settled to minimise the
burden of the COVID-19 outbreak.
We build a framework to analyse the dynamics of non pharmaceutical
interventions, to distinguish between mitigations measures leading to objective
scientific improvements and mitigations based on both political
and scientific considerations. 
We analyse two possible strategies within this framework. 
Namely, we consider mitigations driven by the limited resources of the
health system and mitigations where a constant set of measures
is applied at  different moments. We describe the optimal interventions for these scenarios.  
Our approach involves sir differential systems, it is qualitative and geometrical rather than computational.
Along with the analysis of these scenarios,
we collect several results that may be useful on their own,
in particular on the ground when the variables are not known in real
time.

\section{Introduction}

In the pandemic situation, governments have settled policies, based
on socio-economic appreciations, field studies,
and modelling.  The toolbox 
for the crisis management
involved mitigations policies.
Numerical simulations suggest that these mitigations measures
change the final share $r_\infty$ of infected people in the
population, sometimes markedly.
A possible roadmap for a scientific programme to select
non pharmaceutical interventions could
be as follows. 
\begin{enumerate}
\item Discuss the choice of the model. Which models are realistic ?
\item Find the mathematical
  optimisations for the chosen model. 
\item Find the counterpart of the optimisation in real life. 
  Determine possible concrete mitigations 
  that approach the targeted mathematical optimisation. 
\item Social analysis. Analyse the public acceptance of the
  mitigation, the economic impact, and the indirect costs on the
  population health. 
\end{enumerate}
In the present article, we choose a variant of  the sir-model. We
concentrate on the second item of this
roadmap. Our target is to obtain theoretical results, and in
particular proved qualitative results. Our
results shed light and give an understanding of the  
the numerical simulations of the spread of a disease.
We have a particular interest in qualitative
results independent of the input parameters, as these results could be more
robust on field with little known parameters. The  theoretical
background, the constructions, and the mathematical results are exposed in
full generality in the supplementary material. 
The main text is dedicated to a larger audience, it
explains, contextualises and illustrates  the results with
simulations.

Our work started with the article \cite{britton}. We were puzzled by some 
simulations exhibiting final fractions infected  depending on the
choice of the intensity of
preventive measures, via a constant $\alpha$ in the next generation
matrix. Several remarks were formulated, suggesting
qualitative explanations of the phenomena observed on the simulations.
For instance, ``preventive measures were not imposed from
the start and were lifted before the epidemic was over'' or
``lifting restrictions gradually [ can prevent ]
overshoot ``. So
it was implicit but clear from the article \cite{britton} 
that an adequate scheduling was required to minimise the burden of
the epidemic. However, we could
not identify the adequate scheduling in precise
mathematical terms, nor could we identify
the assumptions required
to obtain the qualitative behaviour of the examples. Thus our  goal
was  to clarify and give a general picture of what could mean an ``optimal scheduling''
of the preventive measures for a pandemic outbreak
described  with a
sir-model.

Our interest for qualitative results was reinforced by contradictory
results in the literature with respect to the relevance of an early
and strong set of mitigation measures. Whereas overshoot was pointed out as a
risk in  \cite{britton} and implicitly in \cite{ferguson}, other
other sources \cite{sofonea}, \cite{carcione-ducrot}
advocated for early or strong mitigations
to save lives. 

In contrast to papers where strategies are analysed at fixed dates
\cite{sofonea}, sometimes to wait for some new drugs or vaccines
\cite{ferguson}, we are concerned with the very long term. We try to
minimise the mortality after an infinitely long time using only non pharmaceutical interventions.
In other words, this paper considers non pharmaceutical interventions
as an active medical tool to minimise the burden of the epidemic in the long
run rather than as a tool to postpone the mortality till some new drug
comes on the market. 

Here is a summary of our results.
\begin{itemize}
  \item Even in the absence of medicine to wait for, finite time interventions may be  
  considered to minimise the burden of an epidemic because
  of the dynamics involved. The situation is
  analogous to a bike on a sloping road : it is not possible to stop
  before the low point using a finite time breaking, but
  breaking is nevertheless useful to avoid moving far beyond 
  the low point due to inertia. In symbols, let  $r_{\infty}$ be the ratio of finally infected people 
and let $R_0$ be
  the classical reproduction number. 
 A well scheduled finite time intervention can
  drive $r_{\infty}$ close to
  $r_{herd}=1-\frac{1}{R_0}$, whereas
  no intervention often leads to 
  $r_{\infty}>>r_{herd}$. The role of the
  dynamics is more important when $R_0$
  has a medium value ($R_0\simeq 2.5)$  where nearly 30\% of the population may avoid
  the disease thanks to a suitable finite time intervention.
\item There is a fundamental qualitative difference between finite
  time interventions and infinite time interventions. Infinite time
  interventions can lead to an arbitrarily small ratio $r_{\infty}$ of
  infected people. In
  contrast,  finite
  time interventions result in situations where the inequality
  $r_{\infty}>r_{herd}$ always holds, so that  $r_\infty$ close to $r_{herd}$ is the best
  possible value. 
\item Planning is important. Examples show that awkward planning lead to mitigations
  that are long, costly, with little effect on $r_{\infty}$. The
  analogy with  bikes on a sloping road makes sense again : breaking hard far from the
  low point hardly has an impact on the inertia and on the distance
  covered after the low point.  In contrast, the analogy with a bike
  on a flat road is badly suggestive, as it encourages early
  intensive mitigations with poor results. This
  leads to a problem of control theory : what are the mitigations
  which minimise the effort on the population for a fixed result ? 
\item We build a scientific framework
to distinguish between the
political level and the scientific level in the decision process. 
Obviously, the mitigations have a social cost that require a personal
subjective appreciation. A rational decision process
includes political considerations to aggregate the divergent
wishes of the citizen. Nevertheless, some conclusions
may be true independently of the subjectivity. Considering two possible
choices $A$ and $B$, there is a scientific ground to prefer
mitigation $A$ to mitigation $B$ if
there are simultaneously fewer infected people and fewer restrictions on the
population when $A$ is chosen. In contrast, a political trade-off is
necessary when a middle ground between infections and constraints
has to be found. We keep these notions informal in the main text, but
the concepts are rigorously defined
in the supplementary material in terms of cost functions and
diffeomorphisms.  In the following, when we say that a choice $A$ is
better than a choice $B$, we always refer to the scientific meaning :  
choice $A$ leads both to less
restrictions  and to less infected people than choice $B$. 
\item We analyse the scenario where a same
  constant intervention is applied one or several times.
 In this context, the two problems of minimising the duration of the
 constraint for a ﬁxed burden or
minimising the burden for a ﬁxed duration are equivalent, and there is
no compromise between the duration of the intervention and the number
of infections to be found : both are
minimised simultaneously. We thus approach the problem
with a fixed predefined duration and we determine the adequate
planning. This
 scenario includes
 for instance the comparison between
 two strategies, where the first strategy promotes a change every Monday for seven
weeks, whereas the second strategy promotes the same change a whole week one month after
the starting point of the epidemic.
More generally, we compare strategies with a same type of
intervention, and the same total duration, and we analyse the optimal
planning of the mitigations. 
We show that in this scenario, splitting the mitigations through
several short periods is never optimal.
The mitigation minimising the
number of finally infected people has always exactly 
one unique long lasting mitigation.
Our model does not support
the idea sometimes expressed on the media
to plan a strong mitigation as soon as possible. It is quite the
opposite. Intensity and timing have to
be tuned in a consistent balanced manner : 
an earlier mitigation must be lighter than an
intervention that starts later. Early and strong mitigations are not
balanced and yield to poor results. The timing of an optimal
planning is understood : it
boils down to “as soon as possible” if the herd immunity threshold has been crossed,
and around the herd immunity threshold otherwise.
Among the possible consistent choices
of timing and intensity, the case of a late intensive
mitigation, starting when the herd immunity threshold is crossed,  has a
special interest. The corresponding strategy can be implemented
on the ground using measurements in wastewater as in \cite{obepine1}.
It may be more easily planned than alternative optimal strategies
since estimating $R_0$ is
not necessary, thus bypassing the
diﬃculty of its  estimate. 
  \item In a second scenario, we analyse the case where the health system would be saturated in
    the absence of interventions. Non pharmaceutical interventions are used to maintain
    the health system below its maximal load. 
    We compare several mitigation strategies with
    different loads, possibly different from 100\%. 
    For instance, the mitigation may start when 
the health system is filled at 90\%. This second scenario is
    divided in two sub-scenarios :
    \begin{itemize}
    \item 2a) : The mitigation is shaped so that the health system
      stays filled at 90\% and it is relaxed when the herd immunity
      threshold is reached. The relaxation occurs when the epidemic
      naturally decreases. 
    \item 2b) :  This scenario starts like scenario 2a), but the mitigation is
    relaxed sooner. A
    rebound of the epidemic occurs and the limit of 90\% is exceeded after relaxing, but the total load
    remains below 100\% forever. In other words, the relaxation is launched as soon as
returning to normal does not overload the health system in the future despite
    of a rebound.
  \end{itemize}
  Instead of considering arbitrarily a load of
 90\% as in the above example, we address the problem of determining the optimal load between $0\%$
 and $100\%$ for the health system. What are the optimal loads for scenarios 2a) and 2b) ?  
    First, we show that in scenario 2a), there is no scientific
    answer. Political trade-offs are unavoidable :
    a higher load abuts to fewer infected people at the price of more
    constraints.  In scenario 2a),  the duration of the mitigation
    tends quickly to infinity when the considered
    load goes to zero. Simulations show that 
    the time of mitigation is often very large.
    It is thus natural to consider scenario 2b)
    which comes with fewer constraints. 
    We show that, maybe surprisingly, all the strategies considered in
    the scenario 2b)
    have the same number of finally infected people, independently of
    the chosen load. Consequently, a load
    $A$ is preferable than a load $B$ if and only if the corresponding
    strategy leads to fewer
    constraints on the population for the same result. 
    Small loads are inefficient in scenario 2b). A minimal load  is necessary,
    otherwise the policy is surpassed by other
    better planned strategies. In the simulations considered, this minimal
    load of the health system to reach before launching the intervention is large : more than
    80\% of the maximal load of the health system. 
    This may be viewed as an other incarnation in the context of
    possibly overloaded health systems of the slogan ``A 
    strong and early mitigation is inefficient''. When
    this minimal load is reached, the question of still enlarging the
    launching load becomes a political one, it is not a scientific question
    any more : for the same number of finally infected people,
    a higher launching load requires an effort for the population
    which lasts longer, but the maximal effort 
    is lower.
  \item The above scenarios are built upon a more general
    analysis which
    carries several results
    useful on their own. We 
    give a  focus on 
  a function $h$  which plays a role similar to energy
  in physics. In mechanics, a falling object undergoes important
  damages on
  the ground if it was thrown
  with a high kinetic or potential energy. Similarly in our model, if
  a mitigation is relaxed with a high value of $h$, this will lead to many infected
  people and fatal cases because of the implied dynamics. Since $h$ is
  an indirect measure of the finally infected people,  a mitigation that lowers $h$ 
  has a positive impact on the final burden. In contrast,  if $h$ is
  only slightly changed by a mitigation, the mitigation hardly has an
  impact on the final burden : The mitigation is  a temporal shift rather than
  an amelioration, the infections will happen later. 
  A sensible objective for the public policy is thus to
  lower $h$  using interventions of short duration,  hence the importance of 
  the derivative  $\frac{dh}{dt} $. The computation shows that
  $\frac{dh}{dt} $ is proportional to the ratio $i(t)$ of
    infected people for a fixed intervention. This leads to the
    very important qualitative result that  for a fixed 
    level of constraint, the interventions are more
    efficient if they occur when many people are
    infected. The same phenomena has been observed using numerical
    simulations in \cite{ferguson}, however for a different
    model. This  suggests that our qualitative result for the sir
    model could be extended and could provide an explanation of the numerical
    observations in other contexts.  The variation of the
    energy function $h$ shows that an early intensive intervention
    acts on mortality similarly
    to a free loan, with a positive effect on the short term but not
    on the long term after the loan is repaid. This
    explains the apparent contradiction between the authors studying
    the aftermath of the intervention at a fixed date with a positive result \cite{sofonea},
    whereas \cite{britton},\cite{ferguson} 
    have an opposite conclusion with a further time horizon. 
\end{itemize}

\subsection*{Context and limitations of the findings}
\label{sec:cont-limit-find}

Mitigation strategies
may be promoted through coercive laws, or
they may be exposed to the public as mere
recommendations with no obligation. Citizen may
have more choice when the intervention occurs
( new option for remote work or for a day off for instance) or fewer choices due to restrictions. 
Our paper is agnostic about the implementations.
We consider a model which modifies the coefficients of
the sir systems when people modulate their interactions,
but we make no assumption on the social tools used
to modify these coefficients.

Several questions have not been considered.

We do not discuss the economical or global health impact  nor the social acceptance
in this paper.

Field testing, 
studies with animal models and experiments to
support the numerical and theoretical results would be welcome.
Insect pathogens have been used
to test equations because of their tractability \cite{dwyer,dwyer2}.
Minimising the spread of the epidemic in plants  while
minimising the intervention is a natural question,
and we have not explored how our results could be enlightening for agriculture
or other epidemiological contexts.

These questions are out of the scope of the
article, and they belong to a field of work where we have no expertise. 
However, we believe that these are  important questions to be
discussed by qualified researchers  in these fields.


We discuss now the choice of the model.
Some modellings rely on simple models, whereas other modellings
require many interacting parameters. Both have their
pros and cons, depending on the
objectives, quantitative or qualitative, and on the quality of the
data. As a rule of
thumb, simple models with few variables 
allow qualitative explanations, they have a lower sensitivity to the
parameters.  When high quality data and model is available, complex models with 
more variables lead to more precise predictions. 
Qualitative interpretation
of the changes implied by the modifications of the input constants 
is difficult for complex models. Both
approaches are complementary rather than opposite.


Since our goal was to 
identify the qualitative phenomena  that drive and circumscribe the
computations, a variant of
the simple and robust sir-model was a sound choice. In the variant considered,
a coefficient that was constant in the original sir model varies with
the mitigation policies implemented.  
Although the sir model is suitable
for qualitative analysis, our modelling carries
the simplifications and limitations attached to this model :
people are infected only once, deaths are not considered, the
population is supposed to be geographically homogeneous,
all individuals are equally susceptible, viruses undergo no mutations,
to cite a few limitations.

There are slight variations of the sir-model for which we can
carry our results or follow an approach along the
same lines ( remark \ref{rem:otherModels}). 
However, there are other models to
experiment to approach reality, and they may be
quite different.  For instance, ``on the experimental side, Dwyer et al. (1997) measured
nonlinear relationships between transmission and densities of
susceptible hosts, implying that the bilinear term in the classical
susceptible-infected-recovered (SIR) model may not be appropriate.  ``
\cite{gomes,dwyer}. Many variations are possible, and probably many
are necessary depending on the problem under consideration.
Closed formulas for a differential system are an  exception.
Most variations from the SIR-system will lead to models
which are not computable with closed formulas. 
While a large part of our analysis is geometrical and clarifies
the involved phenomenons,  other arguments still depend
on the closed formulas of the sir system and will not apply to non
computable models. Generalisations 
based on a better understanding of the geometrical architecture
could be explored,  to prove our results
for a larger class of models.

Many parameters are not constant, their value evolves with time. For instance,
we do not know how many
people will be vaccinated, how often, and the efficiency of the vaccines on the
variants to come. 
In this rapidly changing environment, 
we hope that our qualitative results
may be  useful, in particular those
independent
of the numerical data in input. 

\section{Main text}

\subsection*{Mitigation as an active tool}
In a pandemic context, mitigation policies are usually understood as
a tool to give deciders and physicians time for dealing with the
problem. Delaying the epidemic gives time for researchers to find new
remedies and gives time to setup the logistics to vaccinate people. 
Our approach in this paper is different. We consider mitigations as a
tool to minimise the number of finally infected people, even in the
absence of change in the remedies. The goal of this article is to
develop this point of view of mitigation strategies as an active tool
for driving the epidemic. In this section, we exhibit 
simulations to expose the problematic 
involved.

A mitigation that improves the final
situation without  any new drug is illustrated in the next figure.
The red drawing is the trajectory of the pandemic without mitigation.
The green drawing is the same pandemic, where a mitigation is applied
from week 22 to week 26. During  the four weeks of mitigation, the
reproduction number $R_0$ has been reduced from $2.4$ in the absence
of mitigation to $R_1=0.4$.
Thanks to the mitigation, the share $r_\infty$ of finally infected
people dropped from 0.88 to 0.82. 

The evolution of the pandemic  is shown
in the $(s,r)$-plane, more precisely in the triangle $s> 0,r\geq 0,
s+r\leq 1$. Here $s$ and $r$ are  the share of susceptible
people and the share of removed people in the population, with the standard
notations of the sir model.  The share $i$ of infected people 
is implicit since $i=1-r-s$.
The point $M(t)=(s(t),r(t))$ that represents the epidemic at time $t$ 
starts at $t=0$ with $M(0)$ at the bottom  right ($M(0)\simeq
(1,0)$). 
As $t$ increases, $M(t)$ moves to the left ($s$ decreases) and the
limit point
$M_{\infty}=(s_\infty,r_\infty)$ is on the diagonal  $r+s=1$ ($i_\infty=0$).
Note that  by construction $r_{\infty}$ is
the share of people  that have been infected
at some time $t$ during the epidemic, with $0<t<\infty$.
The limit point $M_\infty$ is lower  for the
green curve than for the red curve.
This means means that the mitigation has been efficient,
it has lowered $r_{\infty}$. 

\begin{figure}[h]
\centering
\includegraphics[scale=.40]{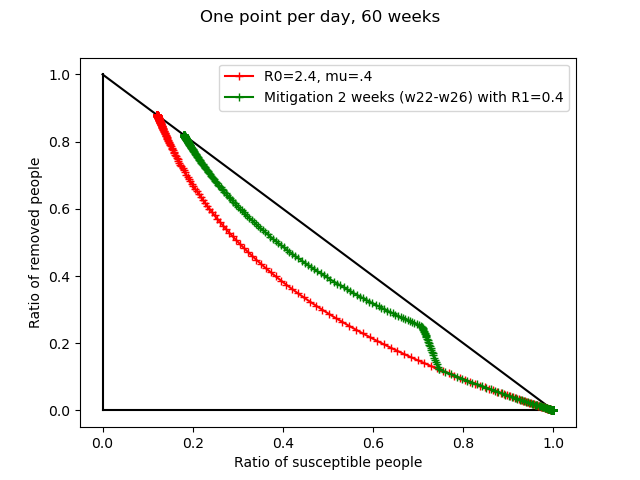}
\caption{A simple mitigation}
\label{simpleMitigation}
\end{figure}

What is a good mitigation strategy ? The problem can be thought in
analogy with the  braking of a bike on
a slope. All braking
strategies are not equivalent : a rider  does not
apply a constant braking force on downslopes. 
There are moments where braking is useless, and  other moments where
braking is necessary to take the turn.  Technically, this is a
question of control theory. The riders on the Tour de France
unconsciously apply  some control theory to find the optimal
timing and intensity for the braking.

The same phenomenon appears in an epidemiological context.
A mitigation measure is the analogue
of a braking action to slow down the epidemic. Since mitigations limit
the possibilities for citizen, one wants to minimise the duration and intensity
of these mitigations for the same braking performance, or to minimise the number
of infected people for a fixed level of braking effort.

A na\"\i ve approach would suppose that
control theory is straightforward, that no planning is required, and that
the value of $r_{\infty}$ depends only on how strict and how long the
mitigations are. 
This is not the case :  adequate scheduling is important. A lighter and shorter
mitigation may outperform a harsher mitigation  thanks to a better
scheduling, as illustrated with the following example. The
mitigations for the green trajectory are shorter-lived than those for the
red trajectory ( 4 weeks vs 6 weeks ), are less intensive
( reproduction during the mitigation R1=0.8 vs R1=0.7), yet the share $r_{\infty}$  of finally infected people is
lower for the green curve. A public policy should recommend the green
trajectory over the red trajectory. 
\begin{figure}[h]
\centering
\includegraphics[scale=.40]{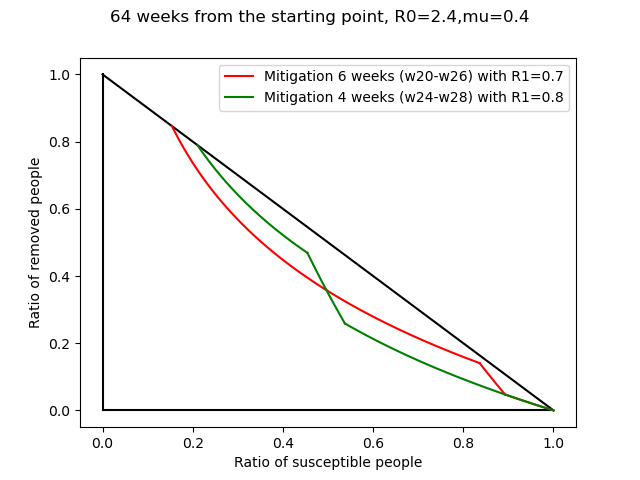}
\caption{Two mitigations with different plannings.}
\label{2mitigations}
\end{figure}

\section{Orders of magnitude}

What is the difference between a perfect mitigation and no mitigation
? How many  saved lives and how many people may avoid infection using an efficient mitigation
? We give some estimates in this section.

In a free environment without mitigation, the share $r_{\infty,free}$ of finally
infected people in a sir model is the unique positive solution of the equation
\begin{displaymath}
  \ln(1-r_{\infty,free})+R_0r_{\infty,free}=0.
\end{displaymath}
( Theorem \ref{thm:damageIncreasesWithEnergy}). With an optimal mitigation, the ratio of finally infected people is
about  
\begin{displaymath}
r_{\infty,opt}=1-\frac{1}{R_0}
\end{displaymath}
(
Theorem \ref{thm:optimum_absolu_pour_une_strategie}).
The share $\Delta_r$ of the population avoiding an infection with an
optimal mitigation is thus
$$\Delta_r= r_{\infty,free}-r_{\infty,opt}.$$
The share
$\Delta_{deaths}$ of lives saved is
$$ \    \Delta_{death}=(r_{\infty,free}-r_{\infty,opt})IFR. $$
where IFR denotes the infection fatality rate.
\\

  \begin{figure}[h]
    \centering
  \input{tableau2}
  \caption{Orders of magnitude}
      \label{orders_of_magnitude}

   \end{figure}

Some estimates of these quantities are  given in the table of figure \ref{orders_of_magnitude} 
for different values of $R_0$ and $IFR$. When $R_0$ is slightly above 2 : up to 30\% of
the general population avoids an infection with an optimal
mitigation.  This high figure means that, from a mathematical point of
view, mitigation as a tool to reduce the burden of an epidemic makes
sense. There is no guarantee however that this strategy is possible in real life.

We remarked quite surprisingly that the importance of mitigation
increases for medium $R_0$. The natural guess that mitigation should
play a
more important role for $R_0$ large is wrong : for large $R_0$, the difference
$\Delta_r=(r_{\infty,free}-r_{\infty,opt})$ is small because $0<r_{\infty,opt}<r_{\infty,free}<1$ and $r_{\infty,opt}=1-\frac{1}{R_0}$
is close to $1$. This phenomenon is illustrated in figure
\ref{DeltarFunctionOfR0}
showing $\Delta_r$ as a function of $R_0$. 
\begin{figure}[h]
\centering
\includegraphics[scale=.30]{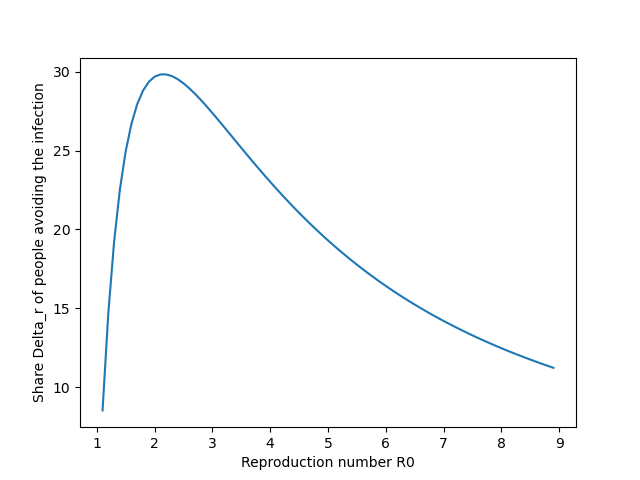}
\caption{$\Delta_r$ as a function of $R_0$ }
\label{DeltarFunctionOfR0}
\end{figure}

Note that the figures in table \ref{orders_of_magnitude} for
$\Delta_r$ are only upper-bounds for
the objectives of the public
policies. People often react naturally
when a brother or a friend is
ill. In a growing epidemic, the population limits its
interactions by itself. For instance, it was remarked in
\cite{r0_used_in_owid}
that ``most of the decline in mobility in
[the]  sample happened before the introduction of lockdowns.
Failing to account for voluntary
changes in behaviour leads to substantially over-estimated effects of
non pharmaceutical interventions
''.
We call $\Delta_{nat}$ the share of the population
avoiding an infection  due to this reaction
of the public. The share 
$\Delta_{public\ policy}$ of the population protected
against infection by the public
policies is in addition to $\Delta_{nat}$. 
The population
that avoids an infection thanks to mitigation
is $\Delta_{nat}+\Delta_{public\ policy}$.
If the combined effect of natural reaction and  public
policies is optimal, $\Delta_{nat}+\Delta_{public\ policy} =
\Delta_r$. Without the optimality hypothesis,
$\Delta_{public\ policy}\leq \Delta_r-\Delta_{nat}$. 
In our views, the role of
the political institutions is to coordinate and amplify if necessary the natural movement of
the public to
maximise $\Delta_{public\ policy}$, in accordance to the history and
social context of the country. Planning and scheduling the information to the public is an
ingredient of this maximisation, as media
campaigns can increase the level of compliance to
safe attitudes at the appropriate time.  Prophylactic measures
are dependent on the mode of transmission of the disease. Promoting
the correct prophylactic attitudes at the key moments could also be
an objective of the public policy.

\subsection*{Studying scenarios}
\label{sec:studying-scenarii}

Our next goal is to give a qualitative analysis of the involved
phenomena during the mitigations. To this aim,
we tried to go beyond numerical simulations, because their qualitative interpretation
is often difficult, and because extracting a general
behaviour from examples depending on the input data is in our opinion a slippery
methodology. Rather, we consider two scenarios, and we prove qualitative
results that apply independently of the numerical
data in input. The first scenario has  fixed mitigations.
The second scenario considers situations where the health
system is overwhelmed in the absence of mitigation. 

In the first scenario, we consider a fixed action : for instance a
larger part of the population 
works remotely. This leads to a mitigation with reproduction number
$R_1$ which is lower than the initial reproduction number
$R_0>1$. Both cases $R_1>1$ and $R_1<1$ make sense. We fix a total
duration $d$
for the mitigation measure. This whole mitigation  is 
split in $k+1$ 
shorter uninterrupted mitigations of duration $d_0,\dots,
d_k$.  The total duration of the mitigation is the sum of the
duration of the uninterrupted  mitigations, hence $\sum d_i=d$.
In this context, the question is : what is the
optimal value for the number $k$ 
and when should these $k+1$ mitigations occur ?

Our answer
is that the optimal strategy satisfies $k=0$. The optimal strategy
is an uninterrupted mitigation, which is not split in several shorter
mitigations. This is a general fact independent of the values of
$R_0$ and $R_1$. It is illustrated in  the left part of figure
\begin{figure}[h]
\centering
\includegraphics[scale=.4]{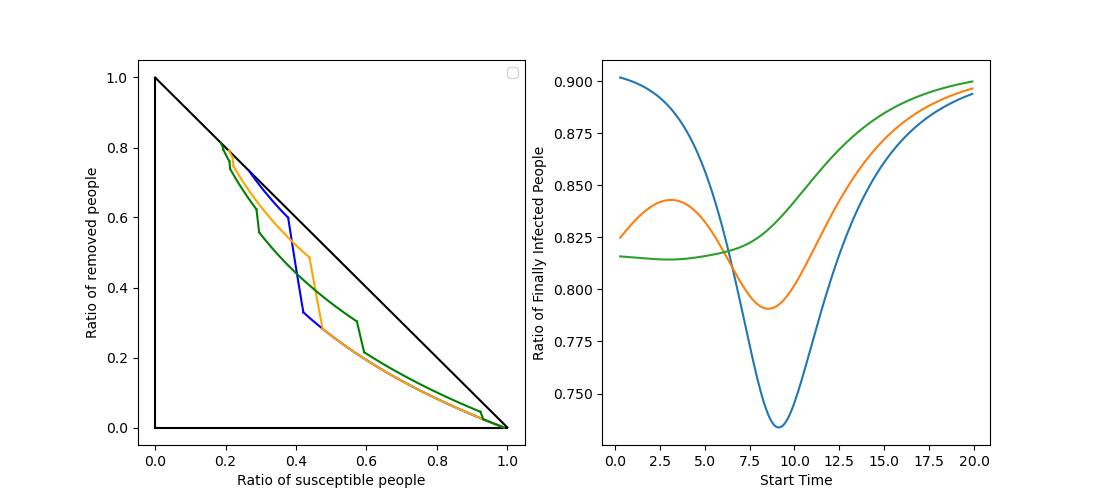}
\caption{Splitting mitigations}
\label{Splitting_mitigations}
\end{figure}
\ref{Splitting_mitigations} : a 60 days mitigation lowers the
reproduction from $R_0=2.6$ to
$R_1=0.4$ ( $\mu=0.4$). This 60 days mitigation is not split 
(blue curve, partially hidden by the orange curve), split in 
two shorter mitigations of 30 days (orange curve), or five shorter mitigations
of 12 days (green curve). The pause between two successive mitigations is three times the duration of
the mitigations, namely 90 days ( orange curve) and 36 days ( green curve). For each of these three
scenarios, the start time for the first
mitigation has been chosen
to give the smallest possible $r_{\infty}$. If the start time of the
first mitigation is
changed,  the value of $r_{\infty}$
depending on the start time (in weeks ) is given in the right part of
the figure for each scenario.  For instance, for two mitigations of 30
days distant of 90 days, the orange curve on the right part of figure \ref{Splitting_mitigations}
says that in our simulations,
the optimal start time for the first mitigation is a little more than 8 weeks after
the first few imported cases, and yields to $r_\infty$ between $0.775$ and
$0.8$. The corresponding epidemic is drawn on the left side.

Moreover, in the unsplit case $k=0$, we have an estimate
for the optimal moment for the intervention. Optimality requires that 
the equality $s(t)=s_{herd}=\frac{1}{R_0} $ occurs 
at a moment $t$ during the mitigation. This is also a general result
independent of the numerical values of $R_0$ and $R_1$. 
If the mitigation ends at a time $t$ with $s(t)<s_{herd}$, it is too
early. If the mitigation starts with $s(t)>s_{herd}$, it is too late
( Theorem \ref{thm:optimal_constant_mitigations}).
In the
example of the figure, with $R_0=2.6$, we have $s_{herd}=0.38$.
The optimal strategy illustrated by the blue curve of the figure has
numerical results consistent with the general theorem :  the vertical line $s=s_{herd}=0.38$ is
crossed during the mitigation period, represented by the most vertical
part of the left blue curve. 

Using the analogy with a bike on a slope, and considering that the low
point on the road is the analogue of the herd ratio, our theorem says that the optimal breaking occurs in
one step, and we should be breaking in a zone which encompasses
the bottom of the slope. Ending
the breaking before the lowest point of the valley would be
inefficient, starting the breaking after the low point would be
inefficient too.

\medskip

In the scenario considered so far, the constraint was the same for all
strategies ( same restrictions, same total duration for the mitigation).
In the next scenario, it will be necessary to compare heterogeneous
constraints, which are not constant in time nor have the same
duration. The comparison is easy in some cases. For instance, 
it is less constraining to have a soft mitigation lasting two days than
a harsh mitigation lasting five days. For some other cases,
the comparison is not possible : There is no natural
choice between a long soft constraint, and a short harsh constraint.
Finally, there are comparisons which are possible
but may require a moment to reflect.
As an example, a strategy
$S_1$ imposing a partial set of constraints for 2 days and a
total set of constraints for 3 days  is less constrained than an
alternative strategy $S_2$ imposing the same partial mitigation for $1$ day, and 
the same total mitigation for 5 days ( reason: $S_2$ is obtained from $S_1$ by
replacing one day of partial constraints with two days of complete
constraints).
This approach to order the constraints on the above examples can be formalised and
written rigorously.  In the appendix, we formalise
and extend these ideas to compare the constraints
of two different  strategies $S_1$ and $S_2$ to the
case of mitigations whose constraints vary continuously
with time.  To keep things simple yet intuitive, 
there are fewer
constraints for the strategy $S_1$ than for the strategy $S_2$ if
every person prefers $S_1$ than $S_2$, whatever her
personal cost function. This occurs in particular when $S_2$ is
obtained from $S_1$ by replacing mitigations of duration $d$
with harsher mitigations of duration $d'>d$.

In the second scenario, we consider a risk of overwhelmed health
systems. We fix an upper bound $i_{trig}$ for the maximal ratio of
infected people. For instance,
$i_{trig}=0.1$ means that  when 10\% of people are infected simultaneously, a mitigation
is triggered to prevent the increase of the number of hospitalised patients. The
level of mitigation is then settled so that
the ratio of infected people stays exactly at $i=i_{trig}=0.1$. 
Equivalently, there are as many people getting sick as people
being cured by
unit of time during the mitigation period. The process is
illustrated in the next figure. The epidemic starts with very few infected
persons, and the initial propagation is drawn in red. 
Then, the level of infection which triggers the mitigation measures is
reached and the mitigation period with constant $i$  
corresponds to the blue segment. After some time
(13 weeks in the example of the figure, $\mu=0.33$ ), the level of infected people naturally
decreases. The mitigations are relaxed as they are not necessary
any more ( in orange on the figure).  A similar strategy can be
set up with an other value for $i_{trig}$.
The question in this context is the determination of the
optimal $i_{trig}$ ? 
\begin{figure}[h]
\centering
\includegraphics[scale=.3]{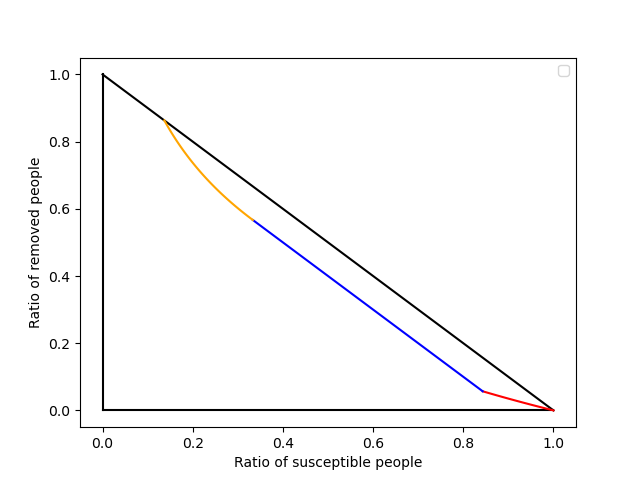}
\caption{Fixing the maximum level of infected people }
\label{SansRebondUnique}
\end{figure}

We show that there is no scientific answer to this question.
A political trade-off is necessary, as minimising
constraints and  minimising the ratio $r_{\infty}$ of
finally infected people are opposite objectives.  A small $i_{trig}$
corresponds to a smaller $r_{\infty}$  at the price of more
constraints (Theorem \ref{thm:witoutRebound}).
\begin{figure}[h]
\centering
\includegraphics[scale=.3]{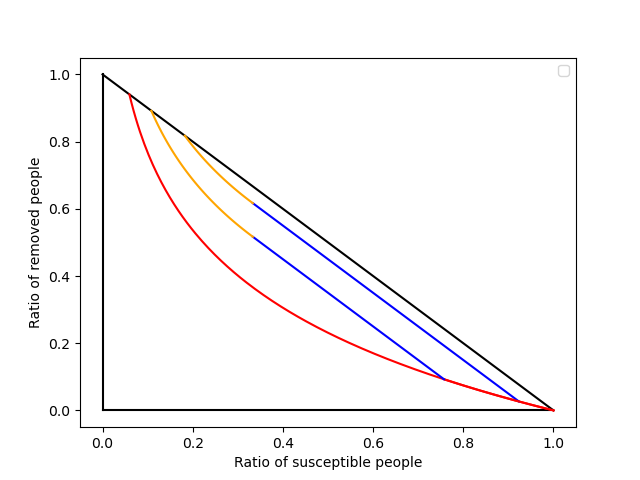}
\caption{Comparing two different values  of $i_{trig}$.}
\label{sansRebond}
\end{figure}
Figure \ref{sansRebond} illustrates this fact with
two mitigation levels $i_{trig}=5\%$  and $i_{trig}=15\%$. 
The red curve represents an epidemic without mitigation.
The two possible mitigations are drawn in
blue. The ratio $r_{\infty}$ is smaller
for $i_{trig}=5\%$, but the mitigation lasts far longer than for
$i_{trig}=15\%$ ( 35 weeks vs 8
weeks with $R_0=3$, $\mu=0.33$).  Moreover, the initial reproduction number of
the mitigation ( defined as the reproduction number when the
mitigation has just started)
is smaller  for
$i_{trig}=5\%$ ( 1.08 vs 1.32 ). This means  that harsher and longer
constraints are necessary for a small $i_{trig}$ value.

As $i_{trig}$ tends to $0$, the constraint duration tends rapidly to
$\infty$. This is 
illustrated in figure  \ref{timeVsIma} where the duration of the
mitigation in weeks is plotted in red, and the initial reproduction number
defined above is plotted in green ( scaled by a factor 100). Both are plotted as
functions of $i_{trig}$. 

\begin{figure}[h]
\centering
\includegraphics[scale=.3]{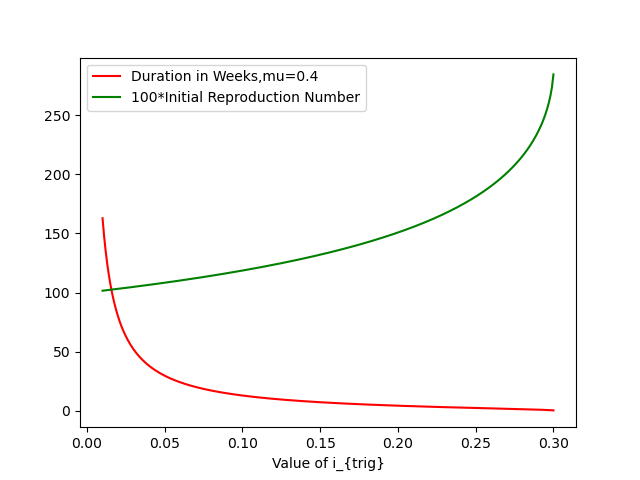}
\caption{Duration and intensity of mitigation as functions of
  $i_{trig}$ when $R_0=3$.}
\label{timeVsIma}
\end{figure}

To minimise this long constraint when $i_{trig}$ is small, we consider a
variant of this scenario. For this variant, when the ratio $i$ of infected
people reaches $i=i_{trig}$, a mitigation
is set up as above to preserve the health system. But the mitigation
is relaxed sooner in comparison to the previous scenario :
as soon as it is possible to stop the mitigation without overwhelming
the health system in the
future, the mitigation is relaxed.  For instance, suppose that the
health system is totally full when $i=i_{hosp}=0.15$. A mitigation is
triggered when $i=i_{trig}=0.05$, and it is maintained for a moment so
that  $i(t)$ stays blocked at the constant value $i=0.05$. When the mitigation is relaxed,
the ratio $i$ of infected people increases again from $i=0.05$, but
it never exceeds $i_{hosp}=0.15$. 
In other words, a 
rebound of the epidemic occurs
but the health system remains not full and viable after the mitigation
is over. This strategy   $S_{i_{trig},i_{hosp}}$ is illustrated in
figure \ref{avecRebond}. It depends on the two constants $i_{trig}$ and
$i_{hosp}$. The constant $i_{hosp}$ depends on the health system of the
country and is not changeable in the short term. In contrast,
different values of $i_{trig}$ are possible and  lead to different
public policies. 

\begin{figure}[h]
\centering
\includegraphics[scale=.3]{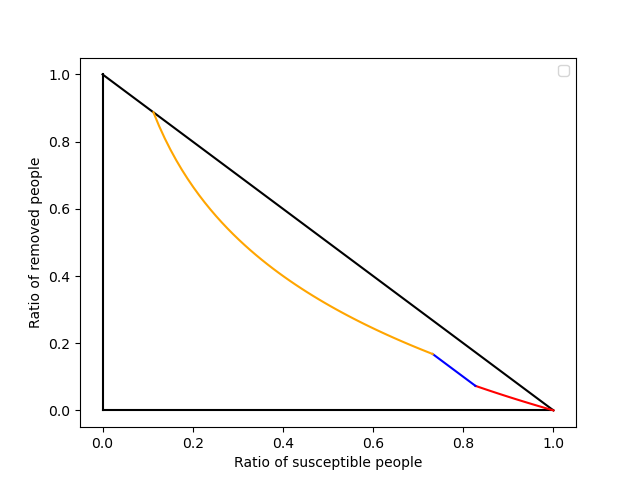}
\caption{A mitigation is settled, then relaxed with a possible rebound.}
\label{avecRebond}
\end{figure}

The duration of the mitigation in the scenario
$S_{i_{trig},i_{hosp}}$ is shown as a
function of  $i_{trig}$ in Figure \ref{saturationDuration}.
\begin{figure}[h]
\centering
\includegraphics[scale=.3]{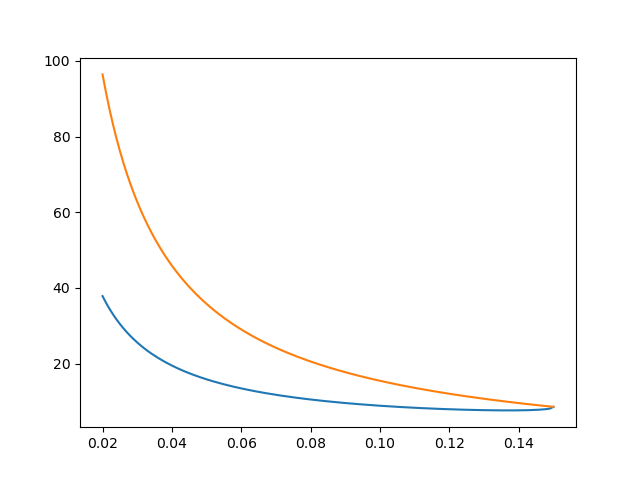}
\caption{Duration of the mitigation in weeks as a function of
  $i_{trig}$ for $i_{hosp}=0.15$, $R_0=3$.}
\label{saturationDuration}
\end{figure}
We see that the duration in weeks (represented by the blue curve) is
shorter when a rebound is
allowed, in comparison to the previous scenario without rebounds (
orange curve). The difference is significant, thus the objective of
lowering the time of mitigation by allowing a rebound is achieved. 

Besides a shorter mitigation time, there are several
differences that make the scenario with rebound quite different
from the scenario without rebound.

\begin{figure}[h]
\centering
\includegraphics[scale=.3]{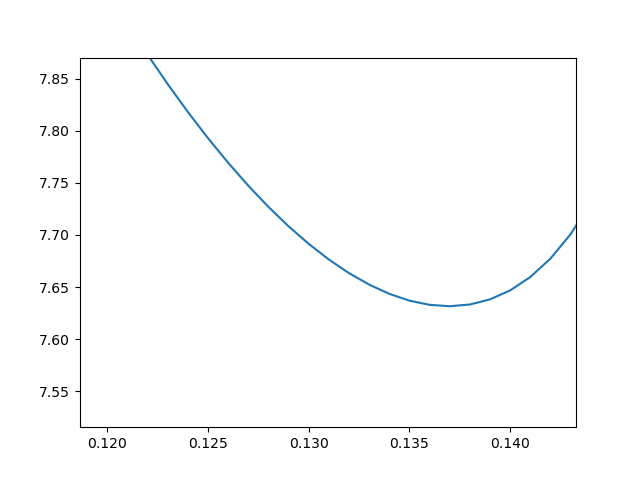}
\caption{Duration of the mitigation in weeks as a function of
  $i_{trig}$ for $i_{hosp}=0.15$, $R_0=3$.}
\label{zoomOnDuration}
\end{figure}
First, when a final rebound is allowed,
the duration is not a decreasing
function of $i_{trig}$ any more, as illustrated by a zoom on the
previous blue curve ( Figure \ref{zoomOnDuration}).
The minimal duration of around $7.6$ weeks for the mitigation
is obtained for $i_{trig}=i_{min}:=0.136$.

Second, the variation of $r_\infty$ is different too. In the scenario
without rebound, an early intervention lowered the ratio
$r_{\infty}$ at the price of a longer and harsher
mitigation. In the scenario with rebound, a harsher
or longer mitigation is not rewarded by a smaller $r_{\infty}$.
All strategies $S_{i_{trig},i_{hosp}}$ have the same
ratio $r_{\infty}$ of finally infected people independently of $i_{trig}$ for a fixed hospital capacity
$i_{hosp}$. In other words, the level $i_{trig}$
that triggers mitigations does not influence how many people will
be finally ill or dead (Theorem \ref{thm:withRebound} in the appendix).
This surprising phenomenon is illustrated in figure \ref{2rebounds} with $i_{hosp}=0.15$ and
two different values of $i_{trig}$. 
\begin{figure}[h]
\centering
\includegraphics[scale=.3]{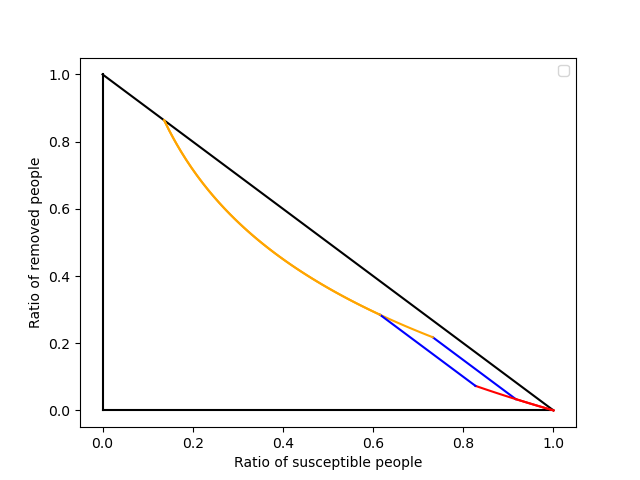}
\caption{Mitigations with   $i_{hosp}=0.15$, $i_{trig}=0.05$ or $0.10$. }
\label{2rebounds}
\end{figure}
The mitigations in blue are triggered when $i_{trig}=0.05$ and $i_{trig}=0.10$ respectively.
They are relaxed as soon as $i_{hosp}$ is never exceeded. The value
of $r_{\infty}$ which is common to  the two mitigations
is the $r$-coordinate of the point at the end of the yellow curve. 

As a consequence, the mitigations with $i_{trig}<i_{min}$
must be rejected. Indeed,  they are longer and harsher
than the mitigation with $i_{trig}=i_{min}$ and the supplementary
constraint is not compensated by an amelioration of $r_{\infty}$. 

We have excluded the cases $i_{trig}\in ]0,i_{min}[$. 
Let us now consider the remaining range $i_{trig}\geq i_{min}$.
In this range, all values of $i_{trig}$ may be considered
and lead to non comparable mitigations. More precisely, 
if mitigations are triggered above the minimal load
$i_{min}$, then
a higher load $i_{trig}$ gives a longer but less intensive mitigation. 
Thus a political trade-off between intensity and duration of the
mitigation is required to make the choice. Some people may prefer
a short intensive mitigation ($i_{trig}$ close to $i_{min}$) while other
people may prefer a long cool mitigation ($i_{trig}$ close to $i_{hosp}$).

The remarks formulated on the example are  illustrations of general
qualitative results proved in Theorem \ref{thm:withRebound}. The theorem  can be summarised
as follows.  In the variant with a rebound allowed,
the choice of the level $i_{trig}$
which triggers the mitigation has no impact on the
share $r_{\infty}$ of finally people infected. The choice of $i_{trig}$ impacts only
the subjective human or economical cost of the mitigation, but the
direct burden of the disease remains unchanged.
There exists a minimal load $i_{min}$  characterised by the following
properties. If $i_{trig}<i_{min}$, the strategy is to be rejected
as the mitigation is unnecessarily long and harsh for the same result. 
All the choices with $i_{trig}\in [i_{min},i_{hosp}]$ are possible and correspond to
different trade-offs between length and intensity : $i_{trig}$ closer
to $i_{min}$ corresponds to a shorter and harsher mitigation. In the example with
$i_{hosp}=0.15$, we have $i_{min}=0.136$. Since $i_{min}$ is close to
$i_{hosp}$, this means that the
mitigation must be triggered when the health system is nearly full.
Other simulations also give $i_{min}$ close to $i_{hosp}$. These simulations  express the
idea that, for the scenario with rebound within a sir-system, it is
a wrong idea to  anticipate much 
and to launch mitigation measures far before the saturation
of the health system. 




\subsection*{Temporary versus definitive mitigations, and temporal shifts versus
  improvements} 

The idea ``the sooner the restrictions, the better'' is often implicit or explicit
in the debate. For instance in  \cite{huffPostTotOuTard}, several
epidemiologists called for early mitigation measures. 
It is not supported by the above simulations and the
scenarios we studied.
As this may be surprising for many readers, we precise
in this section where the misunderstandings come from
and the underlying  phenomenons that explain this apparent paradox.

In this article, we consider \textit{temporary }
mitigation policies : the time of intervention is finite and then
people return to their normal life. The duration of the
intervention may be long, but it is finite. Considering
instead infinite time intervention can alter the assessment 
of a strategy.
For instance,  suppose that 
a starting  epidemic is annihilated with  
a drastic mitigation launched at the very beginning when the first
people are infected ; then the epidemic
never starts again if the mitigations go on forever and if normal
life never returns.
However, for finite time
interventions, the mitigation measures
eventually stop.  When normal life starts again, the situation is similar to the situation before
the mitigation measures, with a na\"\i ve population and no
immunisation. The epidemic 
will rise again  from the few remaining viruses or from the viruses
imported from abroad ( see for instance the simulations
by  Ferguson et al. \cite[fig.3]{ferguson}  where infections
rise after the mitigations are relaxed).
The problem has been postponed, rather than solved, by the finite time
mitigation. 

We consider the mortality in the long run
whereas other papers in the literature consider the
mortality at a precise date \cite{sofonea}. This may lead to
conclusions which are apparently in contradiction, but the 
results of the two approaches turn out to be compatible once the paradox is understood.
Suppose that a first strategy with a rebound of cases
after relaxation is
settled more early than a second more efficient strategy.  The first 
strategy appears to be preferable if the evaluation occurs before the
rebound of cases or 
when the second strategy has not yet been launched. But if the burden
of the epidemic is looked at later, when the
efficient late strategy has produced its effects, then the conclusion
becomes opposite. This phenomenon explains why  Sofonea et al.  \cite{sofonea}
find good estimates for  early mitigations 
whereas our conclusions are inverse for the long term.  
As an illustration, we revisit the example of Figure
\ref{2mitigations}. We  draw the epidemic
with the ratio of removed people $r(t)$ as a function of
$t$. After 26 weeks, the red mitigation seems preferable, but in the
long run, the opposite conclusion holds.    
\begin{figure}[h]
\centering
\includegraphics[scale=.40]{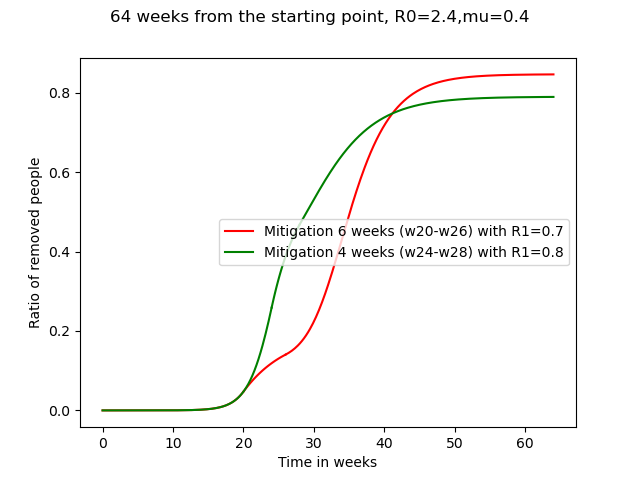}
\caption{Two mitigations with different plannings.}
\label{2mitigations_temporal}
\end{figure}

If only finite time interventions are allowed, and if assessment is
done in the long term, shifting the epidemic with no improvement of the
situation is not neutral, it is a waste of resources.  If some
measures are politically sustainable for 3 months, and if one month
is spent in a inefficient set of measures which postpones the problem, then only two months of mitigation
policies are left to improve the situation. This explains why in
many simulations, early interventions are inefficient in the context of finite
time interventions : They are temporal shifts rather than
improvements, time is wasted.

\subsection*{Energy $h$ of the system} 

To make more rigorous the distinction 
between temporal shift and improvements induced by an intervention, we introduce
the
energy $h$ of the system. A mitigation that lowers $h$  ameliorates the situation
while a mitigation that lets $h$
roughly unchanged acts as a temporal shift . In formula, the energy of a point $(s,r)$ is
\begin{displaymath}
  h(s,r)=R_0-1-\ln(R_0s)-R_0r.
\end{displaymath}
The function $h$ is positive for every possible $(s,r)$ and minimum at point
$P_{herd}=(1/R_0,1-1/R_0)=(s_{herd},r_{herd})$, where its value is $0$. 
We denote by $C_h$ the equienergy curve containing the points
$(s,r)$ with energy level $h$, \textit{i.e.} $h(s,r)=h$.
They are the red curves
on figure \ref{levelLines}. In the absence of mitigation, the energy of the point $M(t)=(s(t),r(t))$ is
unchanged, \textit{i.e.} the world $M(t)$ moves along these red curves.
The herd point
$(s_{herd},r_{herd})$ is drawn in green on the figure. 
\begin{figure}[h]
\centering
\includegraphics[scale=.50]{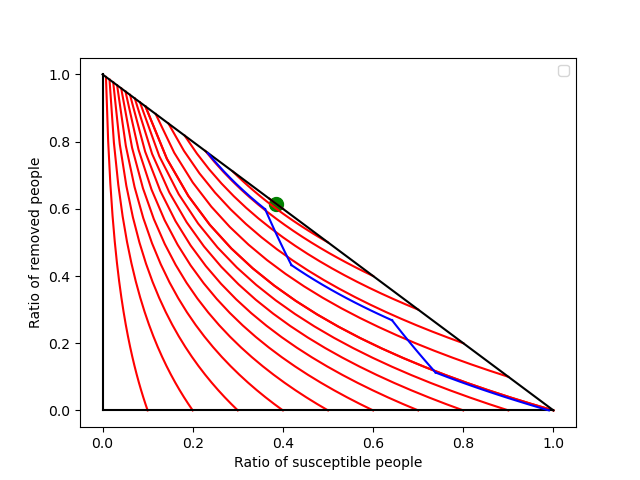}
\caption{Equienergy curves and a pandemic with 2 mitigations}
\label{levelLines}
\end{figure}
The blue curve on the figure is the trajectory of an epidemic 
where two mitigations have been launched.
During the 2 mitigation periods, the epidemic crosses these level
lines, dissipates energy, and the energy $h$ in the final situation
is smaller than the initial energy.   Thus $h$ is analogous to energy in physics.
It is constant when the system evolves freely (no mitigation),
and it diminishes when breaking/mitigation dissipates energy from the system.

The geometry of the equienergy curves show that
$h$ is an indirect measurement of the total burden ( past and
future, \textit{i.e.} including the
mortality to come) in the absence of further intervention. Two points
on the same equienergy curve go to the same point
at infinity if no mitigation measures are set any more :
two situations
 $S_1=S(s_1,i_1,r_1)$ and  $S_2=S(s_2,i_2,r_2)$ with the same value of
 $h$ will give the same number $r_{\infty}$ of finally infected people. 
 Moreover, in the absence of further mitigation,
 the higher the value of $h$, the more people will be  infected :  $r_{\infty}$ is an increasing function of $h$.
\textit{It follows that the goal of the policy maker is to propose measures
that  lower $h$ as much as possible before the mitigations are
definitively relaxed, with the minimal level of constraint on the
population}.

\subsection*{Variations of $h$ during mitigations }

How much $h$ decreases during a mitigation is governed by the
following differential equation.  Suppose
the epidemic is modelled by a sir system with
propagation number $R_0>1$ in the absence of mitigation, and
by a sir system with propagation number $R_1<R_0$
when some restrictions are applied.  During the mitigation,  $h$ is submitted to the differential equation
$$\frac{dh}{dt} =(R_1-R_0)i(t).$$ This equation
is of particular qualitative importance.
Indeed, the goal is to lower $h$ rapidly.
For a short time $dt$, the variation of $h$ is $dh=(R_1-R_0)i(t)dt$.
This means that for a fixed mitigation with number $R_1$ and
a fixed small duration $dt$, the decline of $h$ is more important when
$i(t)$ is large. A short time intervention is more
efficient when it is applied when the number of infected people $i(t)$
is large. This result is consistent and may be an explanation for the numerical observation of
\cite{ferguson} for an other model :''the majority of the effect of such a
[mitigation] strategy can be achieved by targeting interventions [...] around the peak of the
epidemic.''

At the other extreme,  if $i(t)=0$,
$dh=0$.  We recover the qualitative fact that mitigations applied 
when $i(t)$ is very small 
postpone  the problem with no improvement  
since $h$ does not decrease. In particular, an intensive mitigation
at the beginning of the epidemic is inefficient.


\subsection*{On the minimal $r_\infty$}

Since the ratio of finally infected people $r_{\infty}$ is an
increasing function of $h$ after the last mitigation, and since $h$ is minimal at the herd
point, the inequality  $r_{\infty}> r_{herd}$ holds 
whatever the finite time mitigations. In other words, finite time mitigation measures cannot
be used to maintain the epidemic at level zero. In the long term,
the minimal share of people that have been infected  is at least
$r_{herd}$. Graphically, the limit of the red curves in figure
\ref{levelLines}
is always located above the herd point
$P_{herd}$.

This impossibility of a zero case strategy by
mitigations, or more generally of strategies to reach $r_\infty\leq
r_{herd}$, is valid only for the finite time strategies considered in
this paper. Figure \ref{finiteVsInfinite} compares a finite time and an
infinite time strategy. On the left part, an infinite time strategy is
settled and the limit point is  below the herd point,
\textit{i.e.} $r_{\infty}<r_{herd}$. On the middle part of the figure, the
same mitigation is relaxed after 20 weeks. There is a rebound when relaxing
occurs, and $r_{\infty}>r_{herd}$. On the right part, the
mitigation is relaxed after 60 weeks, which makes nearly no difference
with 20 weeks, apart from the delay in the 60 weeks case. The sporadic cases that remain
active yield quite the same rebound of the epidemic in both cases.

\begin{figure}[h]
\centering
\includegraphics[scale=.50]{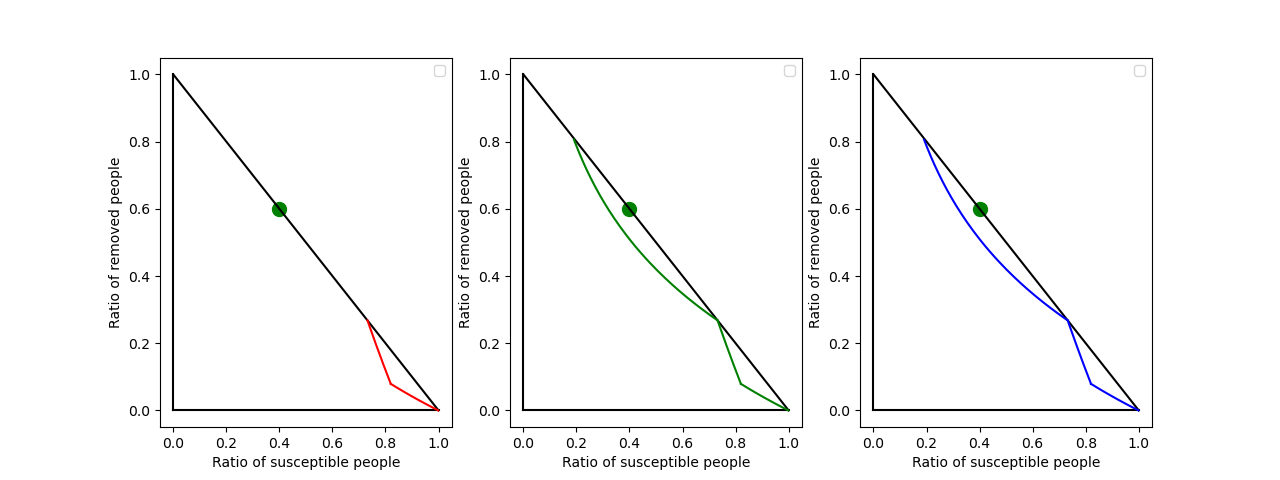}
\caption{Finite and infinite mitigations}
\label{finiteVsInfinite}
\end{figure}

This theoretical result is consistent with on the
ground situations for COVID-19. Several countries first tried to develop a zero
case strategy, and most of them finally desisted from
this strategy 
\cite{newzealand}.
Our analysis suggests a little more : For a given epidemic, it will be difficult if not
impossible to maintain $r<r_{herd}$ in the long run.

The inequality $r_{\infty}>r_{herd}$ for any finite time strategy
is an epidemiological analogue of the mechanical
situation with a bike on a sloppy road.  If an infinite time  breaking is possible,   the
bike can be stopped in the middle of the slope. In contrast, if the breaking time is finite,
the breaks are eventually released,  the bike will move to
the low point of the road and beyond. In this sense, the herd
line $s=s_{herd}$ is the analogue of the bottom of the slope. No
finite time breaking can stop the epidemic before it crosses the
herd line.

\subsection*{Dynamics and Inertia of the system after the herd ratio}

The analogy with the bike makes it easier to understand the
role of the herd immunity threshold, measured equivalently by
one of two herd ratios 
$s_{herd}=\frac{1}{R_0}$ or $r_{herd}=1-\frac{1}{R_0} $. In the vaccine pre-epidemic context, there are no
dynamics.  The herd ratio $r_{herd}$ is the proportion of people
that need to receive a vaccine to  nip in the bud any propagation
of the epidemic, preventing its launching from imported cases or
remanent cases in the population.
In a situation with dynamics, when the  epidemic has already started,
the situation is different. The epidemic will not stop instantly
when the herd ratio is reached. In this dynamical context, $r_{herd}$
and $s_{herd}$ still
make sense, but have a different interpretation : $i(t)$ will decrease
with time if $s(t)<s_{herd}$. In other words, $s_{herd}$ is the 
threshold that guarantees the fall of the number of infected people
with no mitigation measure.
Since  $ i(t)$ is proportional to the derivative $\frac{dr}{dt}$,
$i(t)$ is  a measure of speed
when one tries to minimise the total quantity $r_{\infty}$. 
The decrease of  $i(t)$  without mitigation after the rational
ratio $s_{herd}$ is the epidemiological counterpart to the fact
that a bike that reaches the
bottom of the slope will slow down without breaking. 

How far the epidemic will go beyond the herd line when 
mitigations are released is the analogue of the question of how far goes
a bike after the bottom of the slope. 
It depends on the energy $h$ of the system. If all the mitigations are
released at a point $(s,r)$ with energy $h(s,r)=h_0$, the share
$r=r_{\infty}$ of finally
infected people at infinity satisfies
the equation 
\begin{displaymath}
h(1-r_{\infty},r_{\infty})=h_0.
\end{displaymath}

Some comments in the media suggest that once the
herd ratio $s=s_{serd}$ is reached, the situation is under control and needs no more
supervision. Our models suggest quite the opposite !
Because of the inertia and of the dynamics, it may be necessary  to control and slow down the
epidemic after the
herd ratio is reached. Moreover, in many examples,
$i(t)$ is maximal at the herd ratio $s(t)=s_{herd}$.
This is true for instance when no mitigation is launched
before the herd ratio is reached. In such situations, 
the equation  $\frac{dh}{dt} =(R_1-R_0)i(t)$
tells that  the moment the herd threshold is reached  
co\"\i ncides with maximum effectiveness of the mitigation measures.
Thus, not only are the interventions often still necessary  when the 
herd ratio $s=s_{herd}$ is reached, but also, launching the mitigations measures around
the herd immunity ratio is good practice according to the sir-model.

\subsection*{ Building mitigations with small $r_{\infty}$}

What is the infimum possible $r_{\infty,opt}$ for the number $r_{\infty}$ of finally
infected people and what are the strategies to lead to this infimum ?
We have seen in the previous section that $r_{\infty}> r_{herd}$ for any finite time strategy.
In this section, we explain  that
$r_{\infty,opt}=r_{herd}$, which means that  the difference between $r_{\infty}$ and
$r_{herd}$ may be arbitrarily small for a well-designed strategy. 
We discuss the strategies that lead to
this infimum, \textit{i.e.}  the strategies with $r_{\infty}$ 
close to $r_{herd}$.

The na\"\i ve
and falsely convincing argument that the more restrictive mitigations give
the better results on $r_{\infty}$ is wrong. 
If the mitigation is too intensive, there will be a large rebound
in the epidemic. If the mitigation is too loose, the epidemic is not
stopped enough. An adequate calibration of the mitigation
is thus necessary to get an optimal result, not too intensive to avoid a
large rebound, nor too loose to sufficiently dampen the dynamics. 
This is illustrated in figure \ref{hardSmooth} with 3 mitigations
starting at the same point. The black curve is the epidemic when no
mitigation occurs. The
mitigation in blue is too loose and the trajectory goes beyond the
herd point. The mitigation in green is suitable towards the
herd point.  The mitigation in red is too
strict : a large rebound occurs after the mitigation is stopped. 
\begin{figure}[h]
\centering
\includegraphics[scale=.40]{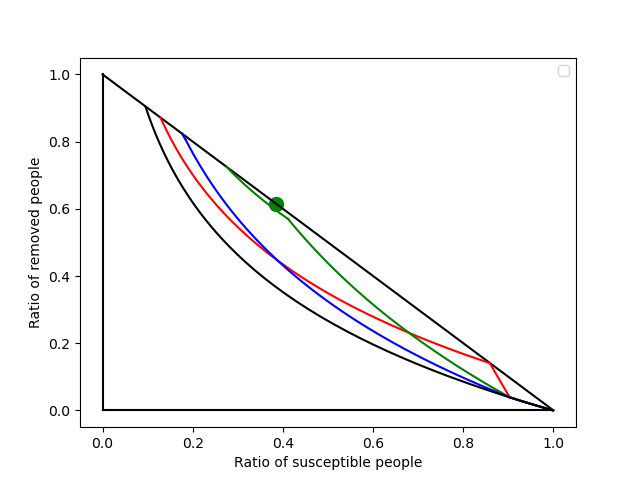}
\caption{three mitigations : too harsh, correct, and too loose}
\label{hardSmooth}
\end{figure}
The optimal constant mitigation is understood : its reproduction
number is 
\begin{align} \label{equationR1opt}
R_1=\ln(sR_0)/(1-\frac{1}{R_0} -r).
\end{align}
where we suppose that the world is in situation
$(s,r)$ with $r<r_{herd}$. 
Such a mitigation applied an infinite time would lead
to the herd point. In practice, the strategy is applied a finite long
enough time,
the herd point is approached, and $r_{\infty}$ is close to
$r_{herd}$. This is illustrated by the green curve in figure
\ref{hardSmooth} which uses the
above formula for $R_1$. 

An other question is the timing.
Is it necessary to launch a strategy long before the
herd immunity threshold 
to settle a strategy with $r_{\infty}\simeq r_{herd}$ ? 
The answer is no from a theoretical point of view. 
As long as
$s(t)>s_{herd}$, it is not too to late to launch a strategy that leads to $r_{\infty}$ close
to $r_{herd}$.  For instance a mitigation strategy with $R_1$ as above works. If $s(t)<s_{herd}$, the
formula implies $R_1<0$ which is an impossible physically.
 In the formula,  $R_1$ tends to zero when
$s$ tends to $s_{herd}$, which is a rephrasing of the high intensity
required for a late mitigation.  This means that on the ground,
there are limitations besides theoretical limitations
since $R_1$ cannot be arbitrarily low.

The two questions, intensity and timing, are correlated. 
There is a large range of possibilities for the start
date of an optimal mitigation.  The formula for $R_1$ shows that
the later the start date, the
more vigorous the intervention must be. But on the other hand,
the later the intervention, the shorter
the duration of the mitigation for the same result. 
We can use this correlation between timing and intensity to revisit
and explain the initial example of figure \ref{2mitigations} or the
red mitigation in figure \ref{hardSmooth}.  
The early
mitigation of figure \ref{2mitigations} was inefficient  because the intensity was not consistent
with the starting point. The mitigation was  
too intensive  for its earliness, leading to poor results. 

To get order of magnitudes, we illustrate in table \ref{r1_time}
possible consistent values for the intensity and
duration of the mitigations.  We fix $R_0=3$ and $\mu= 0.33$.
An epidemic rises, and a mitigation is launched when
$s=s_{start}$ with the above optimal formula for $R_1$. The mitigation
is relaxed so that $r_\infty=l*r_{herd}$, with $l=1.1$ or $l=1.2$.
Thus by construction, all scenarios have a burden dependent on $l$ but not
on the choice of $s_{start}$ for a fixed $l$.
The table reports the values of $R_1$ and the duration
of the mitigation for different values of $s_{start}\in
]s_{herd}=\frac{1}{3},1]$.

  \begin{figure}[h]
    \centering
  \input{tableau4}
  \caption{R1 and duration of the mitigation}
      \label{r1_time}
   \end{figure}

Remark that there is some flexibility in the choice of $R_1$ for a
fixed $r_{\infty}$. When
$R_1$ is chosen small in the set of possible values, there will be a rebound in the epidemic whereas
a large $R_1$ will yield a strategy with no rebound after the
mitigation is relaxed. However, both strategies have the same $r_{\infty}$.
Thus the existence of a rebound is not an indicator of an overshoot
and of a badly calibrated mitigation.

\subsection*{Indeterminacy of $R_0$ and timing implications }
Many computations above rely on the estimate of the reproduction
number $R_0$.
For instance, the computation of $r_{herd}$, the energy $h$, the reproduction
number $R_1$ of the optimal
mitigation depend on $R_0$. However the value of $R_0$ is not
well known.
Determining $R_0$  is an important and difficult
question. See  \cite{r0_used_in_owid}, for the methodology
implemented in \textit{Our world in data}, and
the many references therein. 
In this section, we discuss the implications of this
indeterminacy on the choice of a strategy. In particular, we
exhibit a strategy implementable on the ground without
knowing $R_0$, thus bypassing the difficulty to estimate $R_0$.

There are many reasons that make it difficult to estimate
$R_0=\frac{\beta_0}{\mu}$. For a virus, $\beta_0$, hence $R_0$, depend on the variants that
propagate, they evolve with time. The number $R_0$ to be used in the modelling
can be different from the $R_0$ that has been computed
at the beginning of the epidemic because of the change in the
variants. This problem is amplified by the
interaction between vaccines and variants. 
If a vaccine has efficacy $e$
with $0<e<1$, and the share of vaccinated people is $r$, then
the fraction of the population protected by the vaccine is
$re$. These people are to be placed in the removed compartment
of the sir system. Thus some continuous exchanges
between the $s$-compartment and the $r$-compartment occur along with
the evolution of the variants and their interactions with vaccinated people, making the calibration of
$R_0$ difficult.

An other difficulty is that the sir model is valid only
locally because of heterogeneity. For a large territory,
one can choose a constant $R_0$ suitable to aggregate
the constants of each local area where homogeneity
makes sense. However, the local areas evolve independently
and the choice of $R_0$ may lead
to a coherent modelling only for a short time.
An other heterogeneity was pointed out in \cite{britton}.
The more the individuals have contacts, the more they spread the virus and the
sooner they are infected. Heavy spreaders are in average
removed sooner, and $R_0$ decreases with time. 
Using  the basic reproduction number $R_0$ computed
at the beginning of the epidemic is thus expected to yield
overestimation of $r_\infty$. 

We may formalise these
remarks as follows. There is a propagation number $R_0(t)$ depending
on time, on the (unknown) distribution of
heterogeneity, on the immunity evolving with vaccination.
For a short enough period of time, all parameters can be seen as constants. The
time-varying $R_0(t)$ can thus be  approximated by a constant and the sir
model makes sense.

There exists
an optimal intervention which is as late and as strict as possible.
If  $s=s_{herd}=\frac{1}{R_0} $ in formula
\ref{equationR1opt}, we get $R_1=0$ \textit{independently of the value of $R_0$.}
This corresponds to a harsh mitigation. Moreover, in the absence of
mitigation, $s(t)=s_{herd}$ is 
reached at the moment the number of
infected people starts to decrease. Thus there exists an optimal
mitigation that starts when the ratio $i(t)$  of
people infected naturally decreases. In contrast to the estimation of
$R_0$ which is very difficult, the moment when  $i(t)$ starts to
decrease is easier to monitor. It can be estimated with a detection of the virus in the sewage
system. This has already done in France through the ``R\'eseau
Obepine'' \cite{obepine1}.

To sum up, the start time and the intensity of the mitigation should
be consistent, and this
consistency is difficult to achieve because of the indeterminacy of
$R_0(t)$ which is a constant
only locally in time. This suggests that the
following policy independent of $R_0$ could be considered.
The public authorities organise the
monitoring of the virus in the sewage system. 
When $i(t)$
decreases, it may be interpreted as  ``the herd immunity ratio is
reached''. Then it is a good moment to launch the communication to the
public to slow down the spreading
and the dynamics with a vigorous mitigation. 
We think that the feasibility of this proposal needs to be considered
as it has several advantages. The mitigation time is shorter than for
other intervention times, making the
acceptance of the strategy easier. The public confidence in the 
political decisions is preserved because there are fewer risks of
contradictory decisions induced by bad estimates of $R_0$.

\section{Supplementary}
\label{sec:supplementary2}

\subsection{Description of the model}

In this section, we describe the sir-controlled model that we use
throughout the article, which involves some non continuity considerations
that need to be clearly formulated. 

We consider an epidemic starting  at time $t=0$. Three functions
$s,i,r$ of the time $t$ are considered : $s(t)$ is the share of people  not infected up to $t$, 
$i(t)$ is the share of people infected at $t$ and $r(t)$ is the share
of people who were infected before $t$, but are cured at $t$.
By construction, for every
non negative $t$,  $s(t)+i(t)+r(t)=1$. We
do not consider births in the model.  
The classical sir equations are: 
\begin{displaymath}
\begin{cases}
  s'&=-\beta_0 s i\\
  i'&=-\mu i +\beta_0 s i\\
  r'&=\mu i 
\end{cases}
\end{displaymath}
where
\begin{itemize}
\item 
$\mu$ is a strictly positive constant and  depends on the
epidemiological context. It governs the speed at which infected
people are removed. 
\item $\beta_0$ is a  strictly positive constant and
  $\frac{\beta_0}{\mu}$ is the initial propagation number, \textit{i.e.} the average
  number of infections originated from the first infected persons. We use the
  classical notation $R_0=\frac{\beta_0}{\mu} $.
\end{itemize}

\begin{rem}
  The classical SIR-system is often presented using absolute numbers,
  whereas the above version considers ratios rather than sizes of
  populations. For
  instance, in \cite{martcheva}, the Kermack-McKendrick SIR epidemic
  model is presented with the number $I$ of infected people, and $N$
  the size of population, whereas we use the share $i=\frac{I}{N} $
  instead.  We use the lowercase notation {\bf sir-system}
  rather than the uppercase notation  $SIR$-system as a reminder of
  this choice. 
\end{rem}

When mitigation policies apply, $\beta_0$ is not a constant any more.
We consider a model where we replace the constant $\beta_0$ with
a function $\beta=\beta(t)$ depending on the time $t$. When no mitigation
strategy is set up, $\beta(t)=\beta_0$. When mitigation strategies
occur, $0\leq \beta(t)<\beta_0$. The condition
$\beta(t)>\beta_0$ is mathematically possible, but corresponds to
people gathering and 
transmitting the virus more than expected, thus is hardly realistic.  

On the other hand, $\mu$ is constant as before and is independent of
the mitigation strategy. 

Summing up, our model to include mitigation strategies is a derivation
of the classical sir model where $\beta_0$ is replaced by
a non negative function $\beta=\beta(t)$. In mathematical terms :
\begin{displaymath}
(*)\begin{cases}
  s'&=-\beta s i\\
  i'&=-\mu i +\beta s i\\
  r'&=\mu i 
\end{cases}
\end{displaymath}
In the following, we will use the expression "sir-controlled model"
for this model, or sometimes only "sir-model" for simplicity, assuming it is
implicit and clear that $\beta$ is not a constant. 

When a mitigation is launched at time $t$, a discontinuity occurs for
the function $\beta$. Thus we need to consider non
continuous functions $\beta(t)$ and we suppose only that $\beta(t)$ is
piecewise continuous on $[0,+\infty[$. In this non continuous context,
we define a solution of a controlled sir-system as follows. 
\begin{defi}
 We suppose that there exists a subdivision
$a_0=0<a_1< \dots <a_{k-1}<a_k=+\infty$ such that $\beta$ is continuous on
each $]a_j,a_{j+1}[$. Moreover, we suppose that for $j\in \{ 0,\dots,k-1\}$, the right limit
$\beta(a_j^+):=\lim_{t \to a_j,t>a_j}\beta(t)$ and for $j\in \{
0,\dots,k-2\}$, the left limit
$\beta(a_{j+1}^-):=\lim_{t \to a_{j+1},t<a_{j+1}}\beta(t)$ exist. 
In general,  $\beta(a_{j+1}^-)\neq \beta(a_{j+1}^+)$. \\
Let $\tilde \beta_{j,j+1}:[a_j,a_{j+1}]\to \RR^+$ be the continuous function that extends
$\beta:]a_j,a_{j+1}[\to \RR^+$. A solution of the sir
controlled system is a triple of functions  $(s,i,r)$ defined on
$[0,+\infty[$ satisfying
\begin{displaymath}
(**)\begin{cases}
  s'&=-\tilde \beta_{j,j+1} s i\\
  i'&=-\mu i +\tilde \beta_{j,j+1} s i\\
  r'&=\mu i 
\end{cases}
\end{displaymath}
on each bounded interval $[a_j,a_{j+1}]$ for $j\leq k-2$ and on
the last unbounded interval  $[a_{k-1},+\infty[$.
In particular, $s$
has a left and right derivative at each point $a_i$, $i\in
\{1,\dots,k-1\}$ and a derivative at $a_0$. 
\\
We define $R(t)=\frac{\beta(t)}{\mu}$. If $\beta(t)=\beta_0$ for all
$t$, then $R(t)=R_0$ is the classical propagation  number.
\end{defi}

\subsection{Existence of solutions}
\label{sec:general-non-sense}

In this section, we prove that under our hypothesis for $\beta$, the sir
controlled systems have a unique solution,  with regularity
and monotony properties. These are standard facts when $\beta$ is a constant( see for instance
\cite{martcheva})  and the proof extends easily when
$\beta$ is a piecewise continuous function. 

\begin{thm} \label{thm:descriptionOfTheSolutions}
  Let $(s_0,i_0,r_0)$ with $s_0>0,i_0>0,r_0\geq 0$ and
  $s_0+i_0+r_0=1$. Then there exists a unique
  solution $(s,i,r)$ of the sir-controlled system $(**)$ defined on
  $[0,+\infty[$ satisfying $(s(0),i(0),r(0))=(s_0,i_0,r_0)$.
  Moreover, the solution satisfies the following properties:
  \begin{itemize}
  \item $s,i,r$ are continuous and their restrictions on the intervals
    $[a_j,a_{j+1}]$, $j\in \{0,\dots,k-2\}$ and $[a_{k-1},+\infty[$ are
    $C^1$, 
  \item $\forall t\geq 0$, $s(t)>0$ and $i(t)> 0$, 
  \item $\forall t>0$, $r(t)>0$,
  \item $r$ is strictly increasing and $s$ is decreasing,
  \item $\forall t \geq 0$, $s(t)+i(t)+r(t)=1$, 
  \item The limits $s_\infty:=\lim_{t\to \infty} s(t)$,
    $i_{\infty}:=\lim_{t \to \infty}i(t)$ and 
    $r_{\infty}=\lim_{t\to \infty}r(t)$ exist. 
  \item $i_{\infty}=0$, $s_{\infty}+r_{\infty}=1$.
  \end{itemize}
\end{thm}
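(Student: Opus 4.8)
The plan is to reduce the global problem to a finite succession of classical ODE problems on the intervals where the field is continuous, and then to glue the pieces. On a single interval $[a_j,a_{j+1}]$ (or the final $[a_{k-1},+\infty[$) the right-hand side of $(**)$ is, after $\beta$ is replaced by its continuous extension $\tilde\beta_{j,j+1}$, continuous in $t$ and polynomial — hence locally Lipschitz — in $(s,i,r)$. The Cauchy--Lipschitz (Picard--Lindel\"of) theorem therefore produces a unique maximal solution with prescribed value at $a_j$. First I would record this local existence and uniqueness; the remaining work is to show the maximal solution fills the whole interval and to read off the listed sign and monotonicity properties.

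The engine for everything is the conserved quantity. Summing the three equations of $(**)$ gives $(s+i+r)'=0$, so $s+i+r\equiv s_0+i_0+r_0=1$; this is the stated identity and, with positivity, confines the trajectory to the simplex, in particular $s,i,r\in[0,1]$. Positivity itself comes from integrating the first two equations in multiplicative form: on each interval $s(t)=s(a_j)\exp\!\big(-\int_{a_j}^{t}\tilde\beta\, i\,d\tau\big)$ and $i(t)=i(a_j)\exp\!\big(\int_{a_j}^{t}(-\mu+\tilde\beta\, s)\,d\tau\big)$, so starting from $s_0>0,\ i_0>0$ the functions $s$ and $i$ stay strictly positive for every finite $t$. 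The confinement $s,i,r\in[0,1]$ then forbids blow-up, so each maximal solution extends to the entire interval, including the unbounded one: on any $[a_{k-1},T]$ the continuous $\tilde\beta$ is bounded, the field is bounded on the simplex, and the a priori bound rules out escape in finite time. From these formulas the monotonicity assertions drop out: $r'=\mu i>0$ gives $r$ strictly increasing and $r(t)>r_0\ge0$ for $t>0$, while $s'=-\beta s i\le0$ (as $\beta\ge0$) gives $s$ decreasing.

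Next I would glue. Taking the endpoint value on $[a_{j-1},a_j]$ as the initial value on $[a_j,a_{j+1}]$ produces a function continuous across each node $a_j$, and uniqueness on each piece yields uniqueness of the assembled solution. Since $\tilde\beta_{j,j+1}$ is continuous up to the endpoints, the right-hand side of $(**)$ is continuous on the closed interval, so $s,i,r$ are $C^1$ there; at a node the one-sided derivatives exist but generally differ, because $\beta(a_j^-)\ne\beta(a_j^+)$.

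Finally, the asymptotics, which I expect to be the only non-formal step. As $s$ is decreasing and bounded below by $0$ it has a limit $s_\infty$, and as $r$ is increasing and bounded above by $1$ it has a limit $r_\infty$; hence $i_\infty=1-s_\infty-r_\infty$ exists as well. The crux is $i_\infty=0$. For this I would use that $r$ converges, so $\int_0^{\infty}\mu\, i\,d\tau=r_\infty-r_0<\infty$; thus $i$ is integrable on $[0,+\infty[$. A nonnegative function that is integrable and has a limit must have limit $0$, whence $i_\infty=0$ and therefore $s_\infty+r_\infty=1$.
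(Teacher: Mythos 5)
Your proposal is correct and follows essentially the same route as the paper: interval-by-interval existence and uniqueness, positivity of $s$ and $i$ from the multiplicative (linear) form of their equations, conservation of $s+i+r$, extension past each node by boundedness, and the asymptotics from monotone convergence of $s$ and $r$ together with $\int_0^\infty \mu\, i\,d\tau = r_\infty - r_0 <\infty$ forcing $i_\infty=0$ (the paper phrases this last step contrapositively: $i_\infty>0$ would make $r$ unbounded). Your minor variants — explicit exponential formulas instead of uniqueness for the linear equation, and a priori confinement to the simplex instead of one-sided limits at the endpoint to rule out finite-time breakdown — are equivalent in substance.
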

\begin{proof}
  We first consider the equations on the interval $[a_0=0,a_1]$. Let
  $I\subset [a_0,a_1]$ be the maximal interval where the solution with
  initial condition   $(s(0),i(0),r(0))=(s_0,i_0,r_0)$ is defined. \\
  If $s(t)=0$ for some $t\in I$, then $s(t)=0$ for all $t \in I$ since
  $s$ satisfies a first order linear equation  $s'=(-\beta
  i)s$, contradicting the value of $s(0)$. Similarly, $i(t)$ cannot
  vanish. Thus $i(t)>0$ and $s(t)>0$. It follows by derivation
  that $r$ is strictly
  increasing and $r(t)>0$ for $t>0$, and that $s$ is decreasing. By the sir-system, $s+i+r$ is a
  constant function since its derivative is zero, and its initial
  value is 1.
  \\
  Let $b=sup(I)$, \textit{i.e.} $I=[a_0,b[$ or $I=[a_0,b]$ with $b \leq
  a_1$. The limits $s(b^-),i(b^-),r(b^-)$ at $b$ exist since $s$ is
  positive decreasing, $r$ is increasing bounded by $1$, and
  $s+i+r=1$.
  \\
  If $b=+\infty$, it remains to prove that $i_{\infty}=0$. If
  $i_{\infty}>0$, the equation $r'=\mu i$ shows that
  $r_{\infty}=+\infty$, contradiction. Thus $i_{\infty}=0$ and
  $r_{\infty}+s_{\infty}=1$ follows. 
  \\
  If $b<+\infty$,  the limits of the derivatives $s'(b^-),i'(b^-),r'(b^-)$
  exist by the sir-system and the limits of $s,i,r$.  Therefore, 
  the solution can be defined at $b$. By maximality of $I$, it
  follows that $b=a_1$. We conclude by induction, replacing $t=a_0$ by
  $t=a_1$, with fewer intervals for the definition of $\beta$. 
\end{proof}

\begin{prop}\label{prop:i_increasing_then_decreasing}
  Let a sir-system with $\beta(t)=\beta_0$ for $t\geq t_0$. If $s(t_0)<\frac{1}{R_0}
  $, then $i(t)$ is strictly  decreasing for $t\geq t_0$. If
  $s(t_0)>\frac{1}{R_0}$, then there exists a $t_1>t_0$ such
  that $i$ is strictly increasing on $[t_0,t_1]$ and strictly decreasing on
  $[t_1,+\infty[$. Moreover $t_1$ is characterised by
  $s(t_1)=\frac{1}{R_0} $. 
\end{prop}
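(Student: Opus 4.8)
The plan is to reduce everything to the sign of a single factor in the equation for $i'$. Dividing the second equation of the system by $i$, which is legitimate since $i(t)>0$ for all $t$ by Theorem \ref{thm:descriptionOfTheSolutions}, and using $R_0=\frac{\beta_0}{\mu}$, I would rewrite the infected equation in the regime $\beta\equiv\beta_0$ (for $t\geq t_0$) as
$$i'(t)=\mu\, i(t)\,(R_0\, s(t)-1).$$
Since $\mu>0$ and $i(t)>0$, the sign of $i'(t)$ is exactly the sign of $R_0 s(t)-1$: thus $i$ increases when $s(t)>\frac{1}{R_0}$, decreases when $s(t)<\frac{1}{R_0}$, and is stationary precisely when $s(t)=\frac{1}{R_0}$. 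The second ingredient I would record is that $s$ is \emph{strictly} decreasing on $[t_0,\infty)$, since $s'=-\beta_0 s i<0$ because $s,i>0$. These two facts make the behaviour of $i$ a direct consequence of how $s$ crosses the threshold $\frac{1}{R_0}$.

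For the first assertion ($s(t_0)<\frac{1}{R_0}$), strict monotonicity of $s$ gives $s(t)\leq s(t_0)<\frac{1}{R_0}$ for every $t\geq t_0$, hence $R_0 s(t)-1<0$ and $i'(t)<0$ throughout, so $i$ is strictly decreasing on $[t_0,\infty)$.

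For the second assertion ($s(t_0)>\frac{1}{R_0}$), the only substantive point is that $s$ must eventually drop below the threshold; this is where I expect the main (small) obstacle, and it is settled using the previous theorem rather than any new estimate. Suppose for contradiction that $s(t)\geq\frac{1}{R_0}$ for all $t\geq t_0$. Then $i'(t)\geq 0$ on $[t_0,\infty)$, so $i(t)\geq i(t_0)>0$ for all such $t$; feeding this into $r'=\mu i$ gives $r(t)\geq r(t_0)+\mu\, i(t_0)(t-t_0)\to\infty$, contradicting $r\leq 1$ (equivalently contradicting $i_\infty=0$ from Theorem \ref{thm:descriptionOfTheSolutions}). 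Hence $s$ is below $\frac{1}{R_0}$ at some time. As $s$ is continuous, strictly decreasing, and passes from $s(t_0)>\frac{1}{R_0}$ to a value $<\frac{1}{R_0}$, the intermediate value theorem together with strict monotonicity yields a \emph{unique} $t_1>t_0$ with $s(t_1)=\frac{1}{R_0}$.

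It then remains only to read off the monotonicity of $i$ from the sign computation. For $t\in[t_0,t_1)$, strict decrease of $s$ gives $s(t)>s(t_1)=\frac{1}{R_0}$, so $i'(t)>0$ and $i$ is strictly increasing there; for $t\in(t_1,\infty)$ we have $s(t)<\frac{1}{R_0}$, so $i'(t)<0$ and $i$ is strictly decreasing. This yields the stated profile and the characterisation of $t_1$ by $s(t_1)=\frac{1}{R_0}$, completing the argument.
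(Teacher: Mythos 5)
Your proof is correct and follows essentially the same route as the paper's: both rest on the sign identity $i'=\mu i\,(R_0 s-1)$ together with the strict decrease of $s$. The only cosmetic difference is how the threshold crossing is obtained --- the paper cites $i_\infty=0$ from Theorem \ref{thm:descriptionOfTheSolutions} to force a maximum of $i$ at some $t_1$ with $s(t_1)=\frac{1}{R_0}$, whereas you re-derive that same fact via the linear growth of $r$ contradicting $r\leq 1$ (and your use of the strict monotonicity of $s$ makes the strictness of the increase/decrease of $i$ immediate, where the paper needs a short separate argument).
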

    \begin{proof}

    If $s(t_0)<\frac{1}{R_0}$, then  the function $\beta_0 s -\mu$ is negative at $t_0$. As $s$ is decreasing, the same holds for $t\geq t_0$. By the sir-system, $i'=  
  ( \beta_0 s -\mu)i$ and then $i'(t)<0$ for $t\geq t_0$.  
  
  If $s(t_0)>\frac{1}{R_0}$, then by the same equation of the
  sir-system, $i$ is at first increasing. Since $i_{\infty}=0$, the
  function $i$ must have a maximum at some
  $t_1>t_0$.  At this point  $i'(t_1)=\beta_0 s(t_1) -\mu=0$,
  \textit{i.e.} $s(t_1)= \frac{1}{R_0}$. Since $s$ is decreasing,
$i$ is at first increasing, reaches its maximum when
$s(t_1)=\frac{1}{R_0}$ and then is  decreasing.
The increase and the decrease of  $i$ are strict, otherwise
$i$, $s$, and $r=1-i-s$ are constant on some interval,
whereas $r$ is  strictly increasing.
\end{proof}

\begin{coro}
  If a mitigation is stopped at time $t_1$ with $s(t_1)>s_{herd}$,
  then $i$ will be increasing for $t>t_1$ during a certain time. In
  other words, a new wave is rising. 
\end{coro}

\subsection{Foliation of the triangle and solutions of the sir-system}
\label{sec:geometry-triangle}

Our approach to study the solutions of the sir-system is to work in
the plane $\RR^2$ (rather than $\RR^3$)
with coordinates $(s,r)$,  forgetting the quantity $i=1-s-r$.
When $\beta(t)$ is a constant function, the trajectories of the solutions
in $\RR^2$ are included in a set with equation $H(s,r)=c$ for some
function $H$ and some constant $c$. The coordinates $(s,r)$ lies in a
triangle $T$ and 
the loci $H(s,r)=c$ when $c$ varies draw a foliation on $T$. In this
section, we introduce the foliation and we study its geometry (
Theorem \ref{thm:foliation}).  We then derive the
consequences of this geometrical study 
on the solutions of the sir-system associated to $\beta$
(Proposition
\ref{prop:trajectoriesInFoliations}). 

\begin{defi}
  Let $T\subset \RR^2$ be the set of points $(s,r)$ with $s>0,r\geq
  0$ and $s+r\leq 1$. Let $R\geq 0$ and $H:]0,+\infty[\times \RR  \fd \RR$ with
  $H(s,r)=\ln(s)+Rr$. If $R\geq 1$, we let $s_{herd}=\frac{1}{R} $,
  $r_{herd}=1-\frac{1}{R} $, $i_{herd}=0$, $M_{herd}
  =(s_{herd},r_{herd})$.  A $R$-leaf $C_{c}$ is a set in
  $T$ defined by $C_c:=\{M \in T, s.t. \ H(M)=c\}$ for some constant
  $c$. We let $i$ be the function on $T$ defined by $i(M)=1-s(M)-r(M)$
  where $s,r$ are the coordinate functions. 
\end{defi}

\begin{figure}[h]
\centering
\includegraphics[scale=.40]{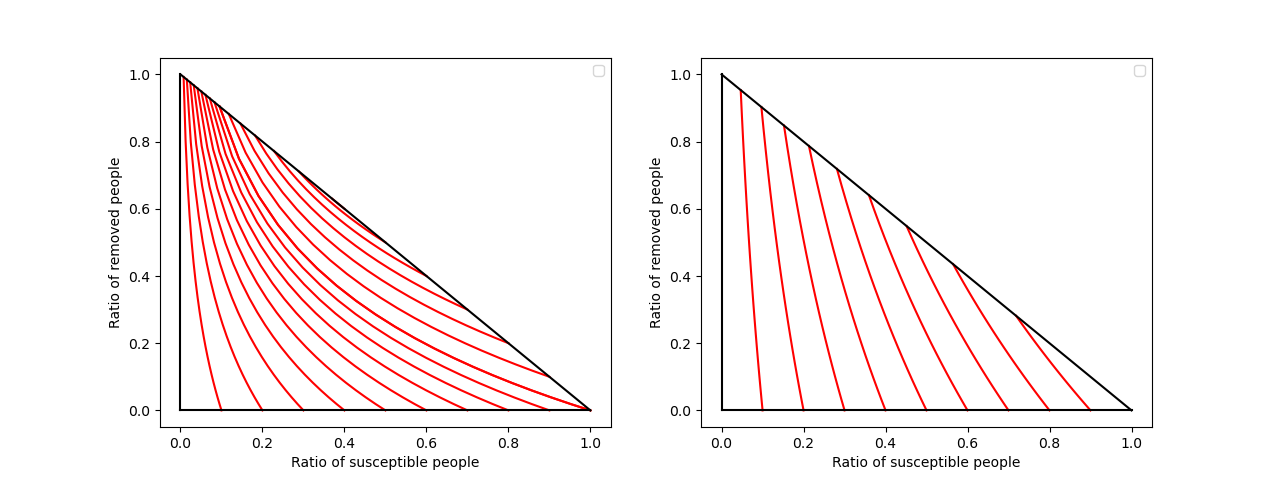}
  \caption{Leaves for $R=2.6$ and $R=0.8$}
      \label{leaves}
\end{figure}

The leaves for $R=2.6$ and $R=0.8$ are shown in figure \ref{leaves}. Later on, we
will use this foliation for $R=R_0$ the basic reproduction number
but also for $R=R_1$, the reproduction number induced by some
mitigation.

\begin{thm}\label{thm:foliation}
  \begin{enumerate}
  \item The $R$-leaves $C_c$ form a foliation of $T$, which means for
    us that  the
    leaves $C_c$ are smooth and connected. Moreover, the leaves are
    compact. 
  \item If $R\geq 1$, the point $M_{herd}$ is a leaf reduced to a
    point. The other leaves are curves.
  \item If $R\leq 1$, $M=(1,0)$ is a leaf reduced to a
    point. The other leaves are curves.
  \item For any leaf $C_c$, there exists a unique point $M_\infty$ on $C_c$
    such that $r(M_\infty)=\max\{r(M), \ M\in C_c\}$. Moreover,
    $s(M_\infty)=\min \{s(M), \ M\in C_c\}$ and $i(M_\infty)=0$
    ( When precision is required, we denote
    $M_{\infty}=M_{\infty}(C_c)$).
\item For any leaf $C_c$ with $c\geq 0$,there exists a unique point
  $M_{init}$  on  $C_c$ ( sometimes denoted $M_{init}(C_c)$ )
    such that $r(M_{init})=\min\{r(M), \ M\in C_c\}$. Moreover,
    $s(M_{init})=\max \{s(M), \ M\in C_c\}$ and $i(M_{init})=0$.
  \item If a leaf $C_c$ is non empty for $c> 0$, then $R> 1$. 
  \item A point $M\in T$ is a point $M_\infty$ of some leaf $C_c$ if
    and only if $i(M)=0$ and $s(M)\leq min(1,\frac{1}{R}) $. 
  \item Let $C_c$ be a leaf. If $C_c$ is reduced to a point, then
    $T\setminus C_c$
    is connected , otherwise
    $T\setminus C_c$ has 2  components.  
  \end{enumerate}
\end{thm}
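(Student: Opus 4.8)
The plan is to realize every leaf as a graph over the $r$-axis and reduce all eight assertions to one-variable convexity and monotonicity. For a fixed level $c$ the equation $H(s,r)=c$ solves uniquely as $s=e^{c-Rr}$, so $C_c$ is the image of the smooth map $r\mapsto (e^{c-Rr},r)$ restricted to the admissible values of $r$; since this parametrization has nowhere-vanishing velocity, smoothness in part 1 is immediate. The admissible $r$ are those with $r\ge 0$ (the constraint $s>0$ being automatic) and $s+r\le 1$, i.e. $\Psi(r)\le 1$ with $\Psi(r):=e^{c-Rr}+r$. First I would record that $\Psi''(r)=R^2e^{c-Rr}>0$, so $\Psi$ is strictly convex with a unique minimizer; hence $\{\Psi\le 1\}$ is an interval, its intersection with $[0,\infty)$ is an interval, and $C_c$ is connected. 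Compactness follows because $\Psi(r)\to\infty$ as $r\to\infty$ forces the $r$-interval to be bounded, so $s=e^{c-Rr}$ stays in a compact subinterval of $(0,\infty)$; thus $C_c$ is closed, bounded, and bounded away from the open edge $s=0$.

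Next I would exploit that $r\mapsto e^{c-Rr}$ is strictly decreasing (for $R>0$; the case $R=0$ gives a trivial vertical leaf). Along $C_c$ the coordinate $r$ is therefore maximal exactly where $s$ is minimal and minimal exactly where $s$ is maximal, which pins the two distinguished points to the two endpoints of the $r$-interval and yields their uniqueness. The upper endpoint is always the larger root of $\Psi(r)=1$, because $r\ge 0$ imposes no upper bound while $\Psi\to\infty$; hence $i(M_\infty)=0$, proving part 4. For the lower endpoint I would compare the roots of $\Psi=1$ with the constraint $r\ge 0$: the smaller root $r_-$ satisfies $r_-\ge 0$ precisely when $\Psi(0)=e^{c}\ge 1$, that is when $c\ge 0$. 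So for $c\ge 0$ the lower endpoint also lies on the hypotenuse $s+r=1$ and has $i=0$, proving part 5, whereas for $c<0$ it sits on the edge $r=0$ with $i=1-e^{c}>0$, which is exactly why the hypothesis $c\ge 0$ cannot be dropped.

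To handle parts 2, 3, 6 and 7 I would locate the maximum of $H$ on $T$. Using $r\le 1-s$ gives $H(s,r)\le \ln s+R(1-s)=:g(s)$, and $g$ is maximized at $s=1/R$ when $R\ge 1$ and at $s=1$ when $R\le 1$; this yields $\max_T H=R-1-\ln R$ attained only at $M_{herd}$ (part 2) and $\max_T H=0$ attained only at $(1,0)$ (part 3). A leaf collapses to a point exactly at this top level, where the admissible $r$-interval degenerates, accounting for the single-point leaves. Part 6 is then immediate, since $R\le 1$ forces $\max_T H=0$ and hence $C_c=\emptyset$ for $c>0$. For part 7 I would note that the points $M_\infty$ are precisely the upper endpoints produced above; as $c$ decreases from its maximal value their $s$-coordinate sweeps out $(0,\min(1,1/R)]$, and conversely any hypotenuse point with $s\le\min(1,1/R)$ is the upper endpoint of its own leaf (a short comparison of $1-s$ with the minimizer of $\Psi$ shows it is the larger, not the smaller, root of $\Psi=1$) and so is some $M_\infty$.

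Finally, for the separation statement (part 8) I would avoid a direct Jordan-curve argument and instead write $T\setminus C_c=(\{H>c\}\cap T)\sqcup(\{H<c\}\cap T)$, a disjoint union of two relatively open sets. Since $\ln s$ is concave, $H$ is concave, so the superlevel set $\{H>c\}\cap T$ is convex and hence connected. The sublevel set $\{H<c\}\cap T$ deformation retracts onto an interval of the bottom edge via the homotopy $(s,r)\mapsto(s,(1-t)r)$: sliding $r$ down keeps the point in $T$ and can only decrease $H=\ln s+Rr$, so the homotopy stays inside $\{H<c\}\cap T$, which is therefore connected as well. When $C_c$ is a single point it is the global maximizer, $\{H>c\}\cap T$ is empty, and the complement equals the connected set $\{H<c\}\cap T$; when $C_c$ is a proper arc both pieces are nonempty (the superlevel set contains the maximizer, the sublevel set reaches toward $s=0$ where $H\to-\infty$), giving exactly two components. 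I expect the genuinely delicate part of the whole proof to be not this topological step but the bookkeeping in parts 4, 5 and 7 deciding which edge of $T$ each endpoint lands on; once the convexity of $\Psi$ and the concavity of $H$ are in place, every assertion reduces to the monotonicity of a single real function.
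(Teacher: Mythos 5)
Your proposal is correct, and for items 1--7 it travels essentially the same road as the paper: both arguments rest on writing a leaf as the graph of $r \mapsto (e^{c-Rr},r)$ and reducing everything to one-variable calculus. The packaging differs, though. Where the paper proves connectedness by showing the graph between two leaf points lies inside a subtriangle of $T$ (convexity of the exponential plus convexity of $T$), you encode the constraint $s+r\le 1$ as $\Psi(r)=e^{c-Rr}+r\le 1$ and use strict convexity of $\Psi$ to get that the admissible $r$-set is an interval; this is a genuine economy, since it makes parts 1, 4 and 5 corollaries of one root dichotomy for $\Psi=1$, and your observation that the smaller root is $\ge 0$ exactly when $e^c\ge 1$ explains the hypothesis $c\ge 0$ in part 5 more transparently than the paper's maximality argument. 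Similarly, your larger-root criterion $\Psi'(1-s)=1-Rs\ge 0$ for part 7 replaces the paper's monotonicity analysis of $s\mapsto \ln s + R(1-s)$: same calculus, cleaner bookkeeping. The real divergence is part 8. The paper argues that, leaves being connected, the connectivity of $T\setminus C_c$ reduces to the connectivity of the transversal of endpoints $M_\infty$, an interval in the hypotenuse from which one point is removed; you instead split $T\setminus C_c$ into the relatively open pieces $\{H>c\}\cap T$ and $\{H<c\}\cap T$, obtain connectedness of the first from concavity of $H$ (convex superlevel set) and of the second from the explicit retraction $(s,r)\mapsto (s,(1-t)r)$ onto an interval of the bottom edge, which stays in the sublevel set because $R\ge 0$. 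Your version is more self-contained and arguably tighter than the paper's reduction, which leaves the continuity of the leaf-to-endpoint identification implicit; what the paper's transversal viewpoint buys instead is infrastructure that is reused later, notably in the stability argument of Proposition \ref{prop:trajectoriesInFoliations}. Two small points to spell out in a final write-up: the classification of point-leaves in parts 2--3 needs the short case analysis showing the admissible interval $[\max(0,r_-),r_+]$ collapses only at the top level (either the minimum of $\Psi$ equals $1$, forcing $R\ge 1$ and $c=R-1-\ln R$, or $r_+=0$, forcing $c=0$ and $R\le 1$), and the vertical leaves of your set-aside case $R=0$ should be checked against items 4 and 7, both of which they satisfy trivially.
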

\begin{proof}
  The function $H(s,r)$ increases  strictly when $s$ increases. Hence
  in order to study the maximum  of $H$, we can restrain it to the diagonal 
 $\{(s,r) , s> 0, r\geq 0, s+r=1\}$, or equivalently we  study the function $H(s,1-s)$ on the interval $]0,1]$. The derivative $H(s,1-s)'=\frac{1}{s}-R$. 
 shows that the maximum of $H$ is obtained for $s=\frac{1}{R}$ if
 $R\geq 1$ and for $s=1$ is $R<1$. It follows that the maximum of $H$ is
 realised on a unique point of $T$ which is a leaf reduced to a
 point $M_{max}$. If $R\geq 1$, $M_{max}=M_{herd}$, otherwise,
 $M_{max}=(1,0)$. 

Let $M_1=(s_1,r_1)$ and $M_2=(s_2,r_2)$ be two points on the same leaf $C_c$ 
with $r_2<r_1$. The graph $\Gamma$ of the
function$[r_2,r_1]\to \RR$, $r \to s=e^{c-Rr}$ is  included in the
locus $H=c$ and contains $M_1$ and $M_2$. To prove the connectedness of $C_c$, it remains to prove
that $\Gamma\subset T$.  By convexity of the exponential function, $\Gamma$ lies below the segment
 $[M_1,M_2]$ and by monotony, $\Gamma$ is included in the rectangle
 $[s_1,s_2]\times [r_2,r_1]$. The part of the rectangle below
 $[M_1,M_2]$ is the triangle $Conv(M_1,M_2,M_3)$ with
 $M_3=(s_1,r_2)$. By convexity of $T$, it follows that $\Gamma \subset
 Conv(M_1,M_2,M_3)\subset T$.

 A point $(s,r) \in C_c$ satisfies $r\leq 1$ hence $s\geq
 \epsilon:=e^{c-R}$. Thus $C_c$ is compact as the intersection
 of the closed set $\ln(s)+Rr=c$ with the compact $T\cap \{s\geq
 \epsilon\}$. It follows by compacity and connectedness that
 $r(C_c)=[r_1,r_2]$ for some $r_1,r_2\in [0,1]$ and that $C_c$ is the
 graph $\Gamma$ of $r\in [r_1,r_2] \to s=e^{c-rR}$. The smoothness of
 $C_c$ follows.

 The existence and uniqueness of $M_{\infty}$ and $M_{init}$ in items $4$ and $5$
 follow from the description of $C_c$ as a graph of a monotonous
 function of $r$ in the previous paragraph. The condition $i(M_\infty)=0$
 holds : otherwise, $M_{\infty}$ is in the interior of the triangle
 $T$ and there would exist $r_3>r_2=r(M_\infty)$ with
 $(e^{c-Rr_3},r_3)$ in $T$, contradicting the
 maximality of $r(M_\infty)$ on the leaf. 
 Similarly, $M_{init}$ exists by compacity and
 satisfies the maximality conditions by monotony.
 If $c=0$, then obviously $M_{init}=(1,0)$ and point 5) is true.
 If $c>0$, 
 we have
 $r(M_{init})>0$. If $i(M_{init})\neq 0$, then
 $M_{init}$ is in the interior of the triangle $T$ and we contradict
 the minimality of $r$ as above. 

If $R\geq 1$, $\ln(s)+Rr\leq \ln(s)+R(1-s)\leq 0$ for $s\leq 1$. This
proves the sixth item. 
 
 By construction, a point $M_{\infty}$ is a point on a $R$-leaf with
 $r(M)$ maximum and we know that $i(M_{\infty})=0$.  Thus
 a point $M\in T$ is a point $M_\infty(C_c)$ for some $c$ if and only if $i(M)=0$
 and $r(M)=\max \{r(M')\}$ where $M'$ runs through the points in $T$
 satisfying $i(M')=0$ and $H(M')=c$. Rephrasing informally, the points
 $M_\infty$ are the leftmost points of $T$ on the segment $i=0$ among those
 $M'$ which have the same $H$-value.

 For item 7), if $R<1$, we need to show that any
 point $M$ with $i(M)=0$ is equal to $M_\infty(C_{c})$ for some
 $c$. We let $c=H(M)$.   The function $s \to
 H(s,1-s)=\ln(s)+R(1-s)$ is strictly increasing, thus $M$ is the 
 unique point $M$ with $i(M)=0$ and  $H(M)=c$. Therefore
 $M=M_{\infty}(C_c)$.

 If $R\geq 1$,  the function $s \to
 H(s,1-s)=\ln(s)+R(1-s)$ is strictly increasing from $s=0$ to
 $s=\frac{1}{R} $, then strictly decreasing from $s=\frac{1}{R} $ to
 $s=1$. If $c:=H(M)<H(1,0)=0$ or if $c=H(\frac{1}{R},1-\frac{1}{R} )=:H_{max}$, then  as above $M$ is the 
 unique point with $i(M)=0$ and  $H(M)=c$. Therefore
 $M=M_{\infty}(C_c)$. If $c\geq 0$ and $c<H_{max}$, then there are two
 values $s_0,s_1$ solutions of $H(s,1-s)=c$. The smallest value $s_0$
 satisfies $s_0<\frac{1}{R} $ while $s_1>\frac{1}{R} $. Thus
 $M_{\infty}(C_{c})=(s_0,1-s_0)$ and $M_{init}(C_c)=(s_1,1-s_1)$. It follows that the points
 $M_{\infty}$ are the points $(s,1-\frac{1}{s} )$ with $s\leq
 \frac{1}{R} $.

We have exhibited some leaves that  are reduced to points. We show now
that the other leaves are curves. We know that a leaf $C_c$ is a graph of a function
$r\in [r_1,r_2]\to e^{c-Rr}=s$ parameterised by the coordinate $r$. In particular, $C_c$ is
reduced to a point if $r_1=r_2$ and $C_c$ is a curve otherwise. In particular,
if $C_c$ contains at least two different points, it is a curve.
\begin{itemize}
\item If $R<1$, then $c\leq 0$, and $C_c$ contains the points
  $M_\infty(C_c)$ and $(e^c,0)$. If $c<0$, the 2 points are different
  and $C_c$ is a curve. If $c=0$, we have already seen that $C_0$ is a
  point.  
\item If $R\geq 1$ and $c<0$, we proceed as above. If $R \geq 1$,
  and $c\in [0,H_{max}[$, then $C_c$ contains the two distinct points
  $M_{\infty}(C_c)$ and $M_{init}(C_c)$ and it is a curve. 
\item In the last
  case $R\geq 1$ and $c=H_{max}$, we have already proved that
  $C_c=\{M_{herd}\}$ is a point.
\end{itemize}

For the last item, each leaf $C_c$ is connected, thus the connectedness problem reduces
to connect the various leaves $C_c$, or equivalently the various
points $M_{\infty}(C_c)$. The set of points $M_{\infty}(C_c)$ is the
set  $\{(s,1-s),\ with\ s\in I:=]0,\min(1,\frac{1}{R} )]\}$. Thus it
is connected since $I$ is connected. 

When
a curve $C_c$ is removed from $T$,  $I$ is replaced with $J=I\setminus
\{s(M_\infty(C_c))\}$. If $C_c$ is a point, then by the above,
$J=]0,\min(1,\frac{1}{R})[$ is connected.
Otherwise, $J=]0, s(M_\infty(C_c)[\cup
  ]s(M_\infty(C_c),\min(1,\frac{1}{R})[$, and there are two components
  $H>c$ and $H<c$ . 
\end{proof}

\begin{rem}
  The standard definition of a foliation is slightly different from the
  one used in the theorem. 
\end{rem}

\begin{defi}
  Consider a controlled sir system defined for $t$ in an interval $
  I$.  Let $(s,i,r):I\to \RR^3$ be a solution.
Let   $M(t)=(s(t),r(t))$ . 
  The set
  $\tau \subset \RR^2$  defined by 
  $\tau:=\{M(t),\ t\in I\}$ is called a trajectory.
  \\
  If   $I=[t_0,t_1]$,
  $t_1>t_0$,  or
  $I=[t_0,+\infty[$,  $M_0\in \RR^2$  and
  $(s(t_0),i(t_0),r(t_0))=(s(M_0),i(M_0),r(M_0))$, 
 then $\tau$ is called   
 a trajectory with initial condition $M_0$.
 \\
  The
  trajectory is constant if $\tau$ is reduced to a point $M_{\tau}$.
  A constant trajectory is stable ( or equivalently, the point $M_\tau
  $ is stable) if for every $\epsilon>0$, there
  exists $\delta$ such that for every $M_0$ with $||M_0-M_\tau||<\delta$,
  the solution $M(t)$  with initial condition $M_0$  at
  $t=t_0$ satisfies $||M(t)-M_\tau||<\epsilon$ for all $t\geq t_0$. 
\end{defi}

The following proposition characterises the trajectories in terms of
the leaves of the foliation. Essentially, the leaves are unions of
trajectories.  
\begin{prop}
\label{prop:trajectoriesInFoliations}
  Suppose that the function $\beta$ in the sir-controlled
  system is constant. Let $R=\frac{\beta}{\mu} $ and consider the
  foliation associated to $R$. \\
  \begin{enumerate}
  \item Any trajectory is included in a $R$-leaf $C_c$. 
      \item 
    A subset $\tau\subset \RR^2$ is a non constant trajectory if and
    only if the following conditions are satisfied:
    \begin{itemize}
    \item There exists a $R$-leaf $C_c$ with $\tau \subset C_c$,
    \item $0 \notin i(\tau)$,
    \item The set $r(\tau)$ is an interval $I_r$ non reduced to a point. 
    \end{itemize}
  \item 
    A trajectory with initial condition $M_0$ is constant if and only
    if $i(M_0)=0$. A constant trajectory is stable if and only if
    $R<1$ or
    ($R\geq 1$ and $r(M_0)\geq 1-\frac{1}{R} $). 
  \item Let $\tau=\{M(t)\}$ be a non constant trajectory.
Recall that  $(s_\infty,r_{\infty})=\lim_{t
        \to \infty}M(t)$ is well defined. 
      Let   $c=\ln(s_{\infty})+Rr_\infty$ and let $C_c$ be the associated leaf. 
     Then $M_{\infty}(C_c)=(s_\infty,r_{\infty})$.  In other words,
     the two notions of point at infinity  ( in the foliation and in
     the differential equation) co\"\i ncide. 
  \end{enumerate}
\end{prop}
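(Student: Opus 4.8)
The plan is to extract everything from the single conserved quantity $H$ together with the monotonicity already proved in Theorem~\ref{thm:descriptionOfTheSolutions}. For item~1, I would differentiate $H$ along a solution: with $H(s,r)=\ln s + Rr$ and the system $(**)$ with $\beta$ constant, $\frac{d}{dt}H(s(t),r(t)) = \frac{s'}{s} + R\,r' = -\beta i + R\mu\, i = 0$, the last equality because $R\mu = \beta$. Hence $H$ is constant along the trajectory, which therefore lies in the leaf $C_c$, $c=H(M_0)$. For item~2, the forward implication is then immediate: the trajectory sits in a leaf, Theorem~\ref{thm:descriptionOfTheSolutions} gives $i(t)>0$ for all finite $t$ so $0\notin i(\tau)$, and $r$ is continuous and strictly increasing, so $r(\tau)$ is an interval, non-degenerate since constant $r$ would force $i=r'/\mu=0$. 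For the converse, the key observation is that on the part of a leaf where $i>0$ one may use $r$ as a time parameter: there $\dot r = \mu i>0$, so $t\mapsto r(t)$ along the flow is a strictly increasing homeomorphism onto its image, while the leaf is itself the graph $s=e^{c-Rr}$ parameterized by $r$ (Theorem~\ref{thm:foliation}). A set $\tau\subset C_c$ with $r(\tau)=I_r$ is thus exactly the sub-arc $\{(e^{c-Rr},r):r\in I_r\}$, and when $0\notin i(\tau)$ it is the image of a solution restricted to the time interval $I$ with $r(I)=I_r$; so $\tau$ is a non-constant trajectory. The only fussy point is the bookkeeping of which $t$-interval realizes a given $I_r$, together with the remark that $i'=(\beta s-\mu)i$ forces $i$ to stay positive once positive.

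For item~3, a trajectory is constant iff $r'\equiv 0$, i.e. iff $i\equiv 0$; and $i(M_0)=0$ forces $i\equiv 0$ by uniqueness for the linear equation $i'=(\beta s-\mu)i$, which gives the first claim. For the stability statement I would argue geometrically: a perturbation of an equilibrium $M_*=(s_*,1-s_*)$ into the interior lands on a nearby leaf $C_c$ and then flows, with $r$ increasing, toward $M_\infty(C_c)$. Thus stability reduces to whether $M_\infty(C_c)$ stays near $M_*$ as $c\to H(M_*)$, and, since the whole excursion runs monotonically in $r$ from the perturbed point up to $M_\infty(C_c)$, whether the relevant arc of the leaf stays small.

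The crux is the case analysis, anchored on item~7 of Theorem~\ref{thm:foliation}, which says the points $M_\infty$ are exactly those with $i=0$ and $s\le\min(1,1/R)$. If $R<1$, every diagonal point is such an $M_\infty$ and $c\mapsto M_\infty(C_c)$ is continuous, so the excursion stays close and $M_*$ is stable. If $R\ge 1$ and $s_*\le 1/R$ (i.e. $r_*\ge r_{herd}$), the same continuity holds on the left branch $s\le 1/R$, and the nearby leaves are small arcs near $M_*$, so $M_*$ is stable. If $R\ge 1$ and $s_*>1/R$ (i.e. $r_*<r_{herd}$), then $M_*$ is $M_{init}$ rather than $M_\infty$ of its leaf; an interior perturbation $(s_*,1-s_*-\eta)$ with $\eta$ arbitrarily small flows to $M_\infty(C_c)$, whose $s$-coordinate is $\le 1/R$ and hence stays bounded away from $s_*>1/R$. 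By continuity of the trajectory this forces an excursion of size at least roughly $s_*-1/R$, independent of $\eta$, so $M_*$ is unstable. I expect this distinction to be the main obstacle of the whole proposition, as it is where the geometry of Theorem~\ref{thm:foliation} genuinely enters.

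Finally, for item~4, item~1 gives $H\equiv c$ along $\tau$ with $c=\ln s(t)+R\,r(t)$; letting $t\to\infty$ and using Theorem~\ref{thm:descriptionOfTheSolutions} (existence of the limits and $i_\infty=0$) yields $c=\ln s_\infty+R\,r_\infty$, so $(s_\infty,r_\infty)\in C_c$ with $i=0$. Since $r$ increases strictly along $\tau$ up to $r_\infty$, the limit point has maximal $r$ among the $i=0$ points of the leaf; by the characterization of $M_\infty$ in item~4 of Theorem~\ref{thm:foliation} this identifies it as $M_\infty(C_c)$, giving $(s_\infty,r_\infty)=M_\infty(C_c)$.
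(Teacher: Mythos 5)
Your proposal is correct and follows essentially the same route as the paper: conservation of $H$ along solutions for item 1, reparametrization by $r$ of the leaf-graph $s=e^{c-Rr}$ plus an intermediate-value/gluing argument for item 2, elimination of $M_{init}$ via the strict monotonicity of $r$ for item 4, and for item 3 the same stability mechanism the paper uses --- continuity of $c\mapsto M_{\infty}(C_c)$ obtained from the homeomorphism given by $H$ restricted to the diagonal branch $r\geq 1-\frac{1}{R}$ (the paper's $H_{D^+}^{-1}$), confinement of the excursion by monotonicity of $s$ and $r$, and the identical instability argument for points with $r(M_0)<1-\frac{1}{R}$. The ``fussy bookkeeping'' you defer in item 2 is settled in the paper by the one-line observation that $\sup r(\tau)<r_{\infty}$ (otherwise $\tau$ would contain $M_{\infty}(C_c)$, where $i=0$), so the intermediate value theorem supplies the endpoint time and non-closed intervals $I_r$ are handled by gluing solutions over closed subintervals.
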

\begin{proof}
It follows from the sir-system that the equation $\frac{s'}{s}
=\frac{-\beta}{\mu} r'$. By integration, it follows that
$H(s,r)=\ln(s)+Rr$ is constant on a trajectory. This proves the first
item.

For item 4), we have seen that $s_{\infty}+r_{\infty}=1$ and that $C_c$
is closed. It follows that $(s_\infty,r_\infty) \in (C_c \cap
(i=0))$. Now,  $C_c$ has one ore two points $M$ with $i(M)=0$, namely
  $M_\infty(C_c)$ and maybe $M_{init}(C_c)$. However, $r(t)$ is
  strictly increasing, thus $M_{init}$ 
  defined by the minimality of $r$ cannot be equal to
  $(s_\infty,r_\infty)$. By elimination,  $(s_{\infty},r_{\infty})=(M_\infty(C_c))$. 

For the point $2)$, it is easy to show that a trajectory $\tau$
satisfies these three conditions. Conversely, suppose that $\tau$
satisfies the three conditions. We consider first the case with 
$I_{r}=[a,b]$ is a closed interval. Remark that any point $M$ of $\tau$
is determined by the value $r(M)$, namely $M=(e^{c-Rr(M)},r(M))$. 
Let $M(t)$ be the solution defined
on $I=[0,+\infty[$ and with initial condition
$M(0)=(e^{c-Ra},a)$. Then $b\leq r(M_\infty(C_c))=r_\infty$ by
construction of  $M_\infty(C_c)$ which has the maximum possible $r$ on a
$R$-leaf. Moreover, $b<
r_{\infty}$, otherwise $\tau$ contains $M_\infty$ which is a point
with $i=0$. Thus, by intermediate value theorem, there exists $t_0\in [0,+\infty[$ with
$r(M(t_0))=b$. The set $\{M(t),\ t\in [0,t_0]\}$ is $\tau$. Thus we
have proved item $2)$ when $I_r=[a,b]$. If $I_r=]a,b[$
is open, or semi-open, we glue the solutions defined on $[a_n,b_n]$
with $a<a_n<b_n<b$.

For item 3), it is obvious that the trajectory is constant iff
$i(M_0)=0$. We prove the stability statements in the case $R> 1$
( the case $R\leq 1$ is  easier). When $R>1$, 
there are two type of points in $T$ satisfying $i=0$,
namely the points $M$ of the form $M_\infty$ which are characterised
by $r(M)\geq 1-\frac{1}{R} $, and the points $M$ of the form $M_{init}$
which are characterised  by  $r(M)\leq 1-\frac{1}{R} $.
Thus, to prove
item 3) when $R>1$, we need to show that the points $M_{\infty}$ are stable and
that the remaining points are unstable. 
( If $R\leq 1$, all points with $i=0$ are points of the form
$M_{\infty}$ ).

Let $M_0$ be any point in $T$, $M_0(t)$ the solution of the sir-system
with initial condition $M_0$, and $M_{0}(\infty)=(s_\infty,r_\infty)$ be the limit
at infinity of  $M_0(t)$.

Let $D \subset T$ be the locus defined by $i=0$, let $D^+\subset D$
be the set of points $(s,r)$, with $r\geq 1-\frac{1}{R} $, $D^-\subset D$
defined by $r\leq 1-\frac{1}{R} $, $T'\subset T$ defined by $i\neq 0$.
Thus $T=T'\cup D^+ \cup D^-$ and $D^+\cap D^-$ is the point
$N=(\frac{1}{R} , 1-\frac{1}{R} )$.  In
particular, a function $L:T\to T$ is continuous on $T'\cup D^+$ if
its restrictions $L_1:T'\cup D^+\to T$ and  $L_2:D^-\to T$ are both
continuous (Reason: $L$  is continuous on the interior of $T'\cup D^+$
and on $N$ which is in both  sets of the covering $T=(T'\cup D^+)\cup
D^-$ ). We want to apply this remark 
when $L:T\to T$, $M_0\mapsto M_0(\infty)$ is the limit function.
The restriction of $L$ to $D^-$ is identity. Thus we need only 
prove that the restriction to $T'\cup D^+$ is continuous. 
The function
$M_0\mapsto H(M_0(\infty))$ is continuous because
$H(M_0(\infty))=H(M_0)$. Now the restriction $H_{D^+} $ of $H$ to
the interval $D^+$ is
continuous and injective, thus an homeomorphism on its image, with
inverse $H_{D^+}^{-1}$. By composition, 
the limit function  $L:T'\cup D^+\to D^+$, $M_0\mapsto H_{D^+}^{-1}(H_{D^+}(M_0(\infty)))=M_{0}(\infty)$ is
continuous on $T'\cup D^+$.

By continuity of
$L$ on $T'\cup D^+$, if $M_0\in T$ is close to  a fixed $M_\infty\in D^+$,
$L(M_0)=M_{0}(\infty)$  is
close to $L(M_\infty)=M_\infty$.  Since for every $t$,  $M_0(t)$ is included
in the rectangle with diagonal $[M_0,M_0(\infty)]$, it follows
that $M_0(t)$ is close to $M_\infty$ too. This proves the stability of
$M_\infty$. 

A point $M_{init}$ in $D\setminus D^+$ satisfies
$r(M_{init})<1-\frac{1}{R} $. It is unstable since for any choice of the initial
condition $M_0$ close to $M_{init}$ with
$i(M_0)>0$, the limit
$M_0(\infty)$ satisfies $r(M_0(\infty)) \geq 1- 1/R$. Thus for $t$ large,
$M_0(t)$ is not close to $M_{init}$. 
\end{proof}

\subsection{Optimal mitigations}
\label{sec:optiimal-mitigations}

The space of all possible mitigations is infinite dimensional. Among
all possible mitigations, what is the optimal value $s_{opt}$ maximising
  the number $s_{\infty}$ of never infected  people ?  The goal of this section is to
  compute this optimal value
  ( Theorem  \ref{thm:optimum_absolu_pour_une_strategie}). This result
  holds for finite time controls, \textit{i.e.} life
  goes back to normal after some time and mitigations may be arbitrarily
  long but don't last forever . The proof is a direct consequence of
  our study of the stability of fixed points.

\begin{defi} \label{def:sopt}
  A controlled sir system has a finite time control $\beta$ if  
  $\beta(t)=\beta_0$ for $t$ large enough.
  \\
  We fix an initial point $M_0=(s_0,r_0)$ with $i(M_0)\neq 0$, and we denote by 
  $s_{\infty}(\beta)$ the limit $s_{\infty}$ of the sir system. 
   We define
  $s_{opt}=\sup_{\beta}s_\infty(\beta)$, where $\beta$ runs through
  all finite time controls.  
\end{defi}

\begin{thm}\label{thm:optimum_absolu_pour_une_strategie}
  With the notations of definition \ref{def:sopt}, 
  \begin{itemize}
  \item If $s_0\geq s_{herd}$, then $s_{opt}=s_{herd}$. 
  \item If $s_0\leq s_{herd}$, then $s_{opt}=s_0$. 
  \end{itemize}
\end{thm}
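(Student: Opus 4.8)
The plan is to combine two easy upper bounds with an approximation (``freeze, then release'') argument for the lower bound, the latter resting entirely on the continuity of the free--evolution limit map established in the proof of Proposition \ref{prop:trajectoriesInFoliations}. Write $L\colon T\fd T$ for the map sending $M_0$ to the limit $M_0(\infty)$ of the \emph{free} trajectory ($\beta\equiv\beta_0$), and recall the decomposition $T=T'\cup D^+\cup D^-$ used there, where $T'=\{i\neq 0\}$, $D^+=\{i=0,\ r\geq r_{herd}\}$ and $D^-=\{i=0,\ r\leq r_{herd}\}$. From that proof, $L$ restricts to the identity on $D^+\cup D^-$, it is continuous on $T'\cup D^+$, and by Theorem \ref{thm:foliation}(7) together with Proposition \ref{prop:trajectoriesInFoliations}(4) one has $L(M)\in D^+$ for every $M\in T'$; in particular $s(L(M))\leq s_{herd}$.

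For the upper bound, fix any finite time control. Since it is a finite time control, $\beta\equiv\beta_0$ after some instant $t^\ast$, and $i(t^\ast)>0$ by Theorem \ref{thm:descriptionOfTheSolutions}, so $M(t^\ast)\in T'$ and $s_\infty=s\big(L(M(t^\ast))\big)\leq s_{herd}$. As $s$ is moreover decreasing we also have $s_\infty\leq s_0$. Hence $s_\infty\leq\min(s_0,s_{herd})$ for every control, which already gives $s_{opt}\leq s_{herd}$ when $s_0\geq s_{herd}$ and $s_{opt}\leq s_0$ when $s_0\leq s_{herd}$.

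For the matching lower bounds I would use the total lockdown $\beta=0$, which is admissible ($0\leq 0<\beta_0$) and freezes $s$ while forcing $i$ to decay: on any interval where $\beta=0$ one has $s'=0$ and $i'=-\mu i$, so $s$ stays constant and $i$ decays like $e^{-\mu t}\to 0$. In the case $s_0\leq s_{herd}$, apply $\beta=0$ on $[0,T]$ and then release; the release point $M(T)=(s_0,\,1-s_0-i_0e^{-\mu T})\in T'$ converges to $(s_0,1-s_0)\in D^+$ as $T\to\infty$, and continuity of $L$ on $T'\cup D^+$ gives $s_\infty=s(L(M(T)))\to s(L(s_0,1-s_0))=s_0$, whence $s_{opt}\geq s_0$. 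In the case $s_0\geq s_{herd}$, first let the epidemic run freely until $s=s_{herd}+\eps$ (possible by the intermediate value theorem for $0<\eps\leq s_0-s_{herd}$, since free evolution decreases $s$ continuously from $s_0$ to a limit $\leq s_{herd}$; when $s_0=s_{herd}$ take $\eps=0$ and freeze from the start), then freeze with $\beta=0$ for a long time and release. The release point converges to $(s_{herd}+\eps,\,1-s_{herd}-\eps)$ as the freezing time grows, hence can be made to lie in $T'$ within any prescribed distance of $M_{herd}=(s_{herd},r_{herd})$; continuity of $L$ at $M_{herd}$ together with $L(M_{herd})=M_{herd}$ then yields $s_\infty$ arbitrarily close to $s_{herd}$, so $s_{opt}\geq s_{herd}$.

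The delicate point, and the one I would write out most carefully, is this last approximation when $s_0\geq s_{herd}$: the target $M_{herd}$ is exactly the boundary point $D^+\cap D^-$ separating the stable fixed points from the unstable ones, and $L$ is \emph{not} continuous across $D^-$. The argument is legitimate only because the frozen release points approach $M_{herd}$ from inside $T'$ (they all have $i>0$), which is precisely where continuity of $L$ was proved; one must therefore order the two limits correctly --- fix $\eps$ small enough that $(s_{herd}+\eps,\,1-s_{herd}-\eps)$ lies in the ball on which continuity forces the oscillation of $L$ below a prescribed $\delta$, and only then send the freezing time to infinity so that $i$ at release drops below the corresponding radius. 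Everything else (the freezing identities, monotonicity of $s$, and the inclusion $L(T')\subset D^+$) is routine given the results already established.
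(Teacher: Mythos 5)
Your proof is correct, and while the skeleton matches the paper's (both upper bounds are obtained exactly as in the paper, via monotonicity of $s$ and the characterisation of limit points in Theorem \ref{thm:foliation}(7) and Proposition \ref{prop:trajectoriesInFoliations}(4), and the case $s_0\leq s_{herd}$ is the paper's freeze-then-release argument in only slightly different clothing), your construction for the lower bound in the main case $s_0\geq s_{herd}$ is genuinely different. The paper calibrates a \emph{constant} mitigation $R_1=\frac{\ln s_0-\ln s_{herd}}{r_{herd}-r_0}$ whose $R_1$-leaf joins $M_0$ to $M_{herd}$, runs it for a long time $t_r$, relaxes, and invokes the stability of $M_{herd}$ (Proposition \ref{prop:trajectoriesInFoliations}, item 3) as a black box; this forces it to verify admissibility, $0\leq R_1\leq R_0$, via the inequalities $r_{herd}-r_0>s_0-s_{herd}$ and $\ln x<x-1$, and has the side benefit of exhibiting the optimal constant mitigation that reappears in the main text (equation \ref{equationR1opt}). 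You instead use a bang-bang control taking only the values $\beta_0$ and $0$: free evolution down to $s=s_{herd}+\varepsilon$ (your IVT step is legitimate since $s_\infty<s_{herd}$ strictly, by $r_\infty>r_{herd}$ in Theorem \ref{thm:damageIncreasesWithEnergy}), then total lockdown, then release. This avoids any computation or admissibility check for $R_1$, but the price is exactly the delicate point you identify: your release points converge, for fixed $\varepsilon$, to the point $(s_{herd}+\varepsilon,1-s_{herd}-\varepsilon)\in D^-$, where the limit map $L$ is discontinuous (it is the identity on $D^-$ but sends nearby points of $T'$ to $D^+$), so one cannot pass to the $T\to\infty$ limit first; your ordering --- fix $\delta$, get the continuity radius $\rho$ of $L$ at $M_{herd}$ on $T'\cup D^+$, choose $\varepsilon$ accordingly, then send $T\to\infty$ --- is precisely what makes the argument sound, and it is the same mechanism the paper's stability proof packages (both routes ultimately rest on the continuity of $L=H_{D^+}^{-1}\circ H_0$ established in the proof of Proposition \ref{prop:trajectoriesInFoliations}). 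Note also that the paper's leaf-following trajectory approaches $M_{herd}$ from within $T'$ and so sidesteps the two-parameter limit entirely, which is why its write-up is shorter; your version, on the other hand, connects naturally to the paper's later discussion of the late harsh mitigation with $R_1=0$ launched at $s=s_{herd}$.
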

\begin{proof}
  First, consider the case $s_{0}\leq s_{herd}$. Since $s(t)$ is decreasing and $s(0)=s_0$, we have for any $\beta$,
  $s_{\infty}(\beta)\leq s_0$ and then $s_{opt}\leq s_0$. If we take
  $\beta=0$, then $\lim_{t\to \infty}M(t)=(s_0,1-s_0)$,  which is
  a stable limit by proposition \ref{prop:trajectoriesInFoliations},
  item 3. Let
  now $\beta'$ defined from $\beta$ by relaxation after some time
  $t_r$ to get a finite time strategy:  $\beta'(t)=0$ for  $t\in [0, t_r]$ and
  $\beta'(t)=\beta_0$ for $t>t_r$. If
  $t_r$ is large,
  then $M(t_r)$ is as close as we want to $(s_0,1-s_0)$. By stability,
  it follows $M_{\infty,\beta'}$ is as close as we want to $(s_0,1-s_0)$. This shows that
  $s_{opt}\geq s_0$. 

  Suppose now that $s_0\geq s_{herd}$. For any finite time strategy
  $\beta$, there exists a time $t_0$ such that
  $\beta(t)=\beta_0$ for $t\geq t_0$. For $t\geq t_0$, we can apply
  proposition \ref{prop:trajectoriesInFoliations}, item 4 and Theorem
  \ref{thm:foliation}, item 7, to conclude that $s_{\infty}(\beta)\leq
  s_{herd}$. Thus, $s_{opt}\leq s_{herd}$.
  \\
Take $R_1$ in order to have the relation: 
$\ln s_0 +R_1r_0= \ln s_{herd}+R_1r_{herd}$. Explicitly: 
$R_1= \frac{\ln s_0 -\ln s_{herd}}{r_{herd}-r_0}$. Observe that
 we are in a realistic situation :
$R_1\geq 0$ and  $R_1\leq R_0$. This  
 follows from the hypothesis $R_0s_0\geq 1$ and from the inequalities
 $r_{herd}-r_0>s_0-s_{herd}$,  $\ln x < x-1$ for $x>1$.
  Take the solution $(s(t),i(t),r(t))$ of the sir system with
  $\beta(t)$ a constant function equal to 
  $\beta_1= \mu R_1$ and with initial condition $(s_0, 1-s_0-r_0,
  r_0)$. The choice of $R_1$ implies that $s_{\infty}=s_{herd}$ and
  $r_{\infty}=r_{herd}$. Now, we relax at a time $t_r$, \textit{i.e.} we consider
  $\beta'(t)=\beta(t)$ for $t\leq t_r$ and $\beta'(t)=\beta_0$ for
  $t< t_r$.  Then $M(t_r)$ is arbitrarily close to $M_{herd}$ if $t_r$ is large
  enough, and since $M_{herd}$ is a stable
  point, $M_{\infty}(\beta')$ remains as close as we want to $M_{herd}$.  This shows
  that $s_{opt}\leq s_{herd}$.
\end{proof}  

\begin{rem}
  The theorem states that there exists a mitigation with $s_{\infty}$
  as close as we want to $s_{opt}$, but there is no finite time strategy with
  $s_{\infty}(\beta)=s_{opt}$. 
\end{rem}

\subsection{Energy of the system}
\label{sec:energy-system}

In this section, we build an analogy with the mechanical systems.
We introduce the energy function $h_0$ which is a renormalisation of $H$
when $R=R_0$. We show that the number of finally infected people $r_{\infty}$
is an increasing function of  $h_0$ (Theorem \ref{thm:damageIncreasesWithEnergy}).
This is analogous
to the fact that a collision between highly energetic objects implies
serious damage.
The breaking mechanical power  ( in the usual physical understanding ) of the mitigation
is then $\frac{dh_0}{dt} $. 
The formula of Theorem \ref{thm:breakingPower} shows that this power is negative
for a mitigation.  Thus a mitigation can be seen as a breaking which lowers the
energy and the final damage.  When $i=0$,
the power is zero. Thus breaking when $i$ is small is not efficient. 

\begin{defi}
  Recall the function $H(s,r)=\ln(s)+Rr$.
  When the constant $R$ is equal to the basic propagation number
  $R_0$ of an epidemic, we use the specific notation $H_0(s,r)=\ln(s)+R_0r$ for the
  function $H$. 
  We let $h_0(s,r)=-H_0(s,r)-\ln(R_0)+R_0-1$. The quantity $h_0(s,r)$ is called the energy
  at point $(s,r)$. 
\end{defi}

\begin{prop}If $R_0\geq 1$, 
  the energy $h_0$ is non negative  on the triangle $T$ and
  $\min_{(r,s)\in T}h_0(r,s)=0$. Moreover $M_{herd}$ is the
  unique point with $h_0(M_{herd})=0$. If $R_0<1$, then $(1,0)$
  is the unique point of $T$ where the energy is minimal.   
\end{prop}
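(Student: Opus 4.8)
The plan is to reduce the whole statement to the extremal analysis already performed inside the proof of Theorem~\ref{thm:foliation}, applied with $R=R_0$. Observe first that
$h_0(s,r)=-H_0(s,r)+(R_0-1-\ln R_0)$ differs from $-H_0$ only by an additive constant. Hence minimising $h_0$ on the triangle $T$ is literally the same problem as maximising $H=H_0$ on $T$. No new optimisation argument is therefore needed: I only have to locate the maximiser of $H_0$, substitute it into $h_0$, and read off the sign.

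First I would import the maximum computation from Theorem~\ref{thm:foliation}. Since $H_0$ is strictly increasing in $s$ for fixed $r$, the energy $h_0$ is strictly decreasing in $s$; thus for fixed $r$ the minimum of $h_0$ over the admissible $s$ is attained at the largest value $s=1-r$, i.e. any minimiser must lie on the hypotenuse $s+r=1$ (equivalently on $i=0$). This reduces the problem to one variable: the function $s\mapsto \ln s + R_0(1-s)$ has derivative $\tfrac{1}{s}-R_0$ and therefore a unique strict maximum at $s=\tfrac{1}{R_0}$ when $R_0\geq 1$ and at $s=1$ when $R_0<1$. Translating back, $h_0$ attains its minimum on $T$ at a unique point, namely $M_{herd}=(\tfrac{1}{R_0},1-\tfrac{1}{R_0})$ when $R_0\geq 1$ and $(1,0)$ when $R_0<1$. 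The uniqueness of the maximiser of $H_0$ gives at once the uniqueness of the minimiser of $h_0$, which already settles the last sentence of the statement (the case $R_0<1$).

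It then remains only to evaluate the minimal value in the case $R_0\geq 1$, and this is the single genuine computation: substituting $M_{herd}$ gives $H_0(M_{herd})=\ln(\tfrac{1}{R_0})+R_0(1-\tfrac{1}{R_0})=R_0-1-\ln R_0$, whence $h_0(M_{herd})=-(R_0-1-\ln R_0)+(R_0-1-\ln R_0)=0$. Combining this with the fact that $M_{herd}$ is the unique minimiser yields $h_0\geq 0$ on $T$, with equality exactly at $M_{herd}$.

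I do not expect any hard step here. The normalising constant $R_0-1-\ln R_0$ in the definition of $h_0$ was engineered precisely so that the minimum value is $0$, and all the substantive content --- existence and uniqueness of the extremiser --- is already available from Theorem~\ref{thm:foliation}. The only point deserving a line of care is the reduction to the hypotenuse, i.e. the remark that $h_0$ strictly decreases in $s$ so that every minimiser sits on $s+r=1$; once this is granted the problem is honestly one-dimensional and the rest is the elementary calculus of $s\mapsto \ln s + R_0(1-s)$ together with the one substitution above.
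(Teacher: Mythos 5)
Your proof is correct and follows essentially the same route as the paper: the paper's proof simply cites the first paragraph of Theorem~\ref{thm:foliation}, which contains exactly the reduction to the diagonal $s+r=1$ and the one-variable analysis of $s\mapsto \ln s+R_0(1-s)$ that you re-derive, after which the result for $h_0$ follows since $h_0=-H_0+(R_0-1-\ln R_0)$ is $-H_0$ up to an additive constant. Your only addition is making the substitution $h_0(M_{herd})=0$ explicit, which the paper leaves implicit.
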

\begin{proof}
  We have proved in Theorem \ref{thm:foliation}, first paragraph, that
   $H_0$ has a unique maximum on $T$ located at $M_{herd}$ or $(1,0)$
   depending on the value of $R_0$.  The result
  for $h_0$ follows. 
\end{proof}

\begin{thm}\label{thm:damageIncreasesWithEnergy}
  Let $M_0=(s_0,r_0) \in T$ with $i(M_0)\neq 0$. Let $M(t)$ be the
  solution of the non controlled sir-system with $\beta(t)=\beta_0$ and initial
  condition $M_0$. Let $M_{\infty}=(s_\infty,r_\infty)$ be the limit
  at infinity. Then
  \begin{itemize}
  \item $M_{\infty}$ depends only on the energy $h_0(M_0)$, \textit{i.e.} there
    exists a
    function $g:[0,+\infty[\to T$ with $M_\infty=g(h_0(M_0))$. 
  \item $r_{\infty}$ is an increasing function of the energy
    $h_0(M_0)$. The relations $r_{\infty}>r_{herd}$ and
    $\ln(1-r_\infty)+R_0r_\infty=\ln(s(M_0))+R_0r(M_0)$ characterise
    $r_\infty$. 
  \end{itemize}
\end{thm}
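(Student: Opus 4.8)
The plan is to reduce everything to the geometry of the $R_0$-foliation already established, since the energy $h_0$ is merely an affine, orientation-reversing reparametrization of $H_0$: from $h_0 = -H_0 - \ln R_0 + R_0 - 1$ we recover $c := H_0(M_0) = -h_0(M_0) - \ln R_0 + R_0 - 1$, so knowing $h_0(M_0)$ is the same as knowing the leaf $C_c$ through $M_0$ (we work in the relevant regime $R_0 > 1$). First I would invoke Proposition \ref{prop:trajectoriesInFoliations}: item 1 tells us the whole trajectory lies in $C_c$ and that $H_0$ is constant along it, while item 4 identifies the limit as $M_\infty = M_\infty(C_c)$, the point of $C_c$ with maximal $r$ rather than $M_{init}$, precisely because $r$ is strictly increasing. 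Since $C_c$ depends only on $c$, hence only on $h_0(M_0)$, setting $g(h_0(M_0)) := M_\infty(C_c)$ proves the first bullet; concretely $g$ can be defined on all of $[0,+\infty[$ once the monotonicity below is in hand.

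For the characterizing equation I would use that $M_\infty$ satisfies $i = 0$, so $s_\infty = 1 - r_\infty$, and that it lies on $C_c$, so $H_0(M_\infty) = c = H_0(M_0)$; substituting $s_\infty = 1-r_\infty$ yields $\ln(1 - r_\infty) + R_0 r_\infty = \ln(s(M_0)) + R_0 r(M_0)$, exactly as claimed. The inequality $r_\infty > r_{herd}$ follows because, by Theorem \ref{thm:foliation} item 7, any point of the form $M_\infty$ has $s \le s_{herd}$, hence $r_\infty \ge r_{herd}$; equality would force $M_\infty = M_{herd}$, whence $c = H_{max}$ and $C_c = \{M_{herd}\}$ by Theorem \ref{thm:foliation} item 2, contradicting $i(M_0) \neq 0$. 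Since $i(M_0)\ne 0$ also forces $M_0 \neq M_{herd}$ and hence $h_0(M_0)>0$, the strictness is genuine.

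The monotonicity is then a one-variable computation, which I regard as the crux. I would introduce $\psi(r) := h_0(1-r, r) = -\ln(1-r) - R_0 r - \ln R_0 + R_0 - 1$ on $[r_{herd}, 1)$, the restriction of $h_0$ to the segment $i = 0$ parametrized by $r$; its derivative $\psi'(r) = \frac{1}{1-r} - R_0$ is strictly positive for $r > r_{herd} = 1 - \frac{1}{R_0}$. Hence $\psi$ is a strictly increasing bijection from $[r_{herd}, 1)$ onto $[0, +\infty[$, with $\psi(r_{herd}) = 0$ and $\psi(r) \to +\infty$ as $r \to 1^-$, and since $\psi(r_\infty) = h_0(M_\infty) = h_0(M_0)$ with $r_\infty$ in this range, we obtain $r_\infty = \psi^{-1}(h_0(M_0))$, an increasing function of the energy. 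Finally, the two stated relations pin down $r_\infty$ uniquely: the map $\phi(r) := \ln(1-r) + R_0 r$ increases on $[0, r_{herd}]$ and decreases on $[r_{herd}, 1)$, so the characterizing equation has at most two roots and the side condition $r_\infty > r_{herd}$ selects the one on the decreasing branch. The only point demanding care is ensuring the limit is the upper-branch point $M_\infty$ rather than $M_{init}$, so that the restriction to $r \ge r_{herd}$ is legitimate; both are guaranteed by the strict monotonicity of $r(t)$, and the remainder is routine calculus.
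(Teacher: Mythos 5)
Your proposal is correct and follows essentially the same route as the paper's own proof: both identify $M_\infty$ as the point $M_\infty(C_c)$ of the $R_0$-leaf through $M_0$ via Theorem \ref{thm:foliation} and Proposition \ref{prop:trajectoriesInFoliations}, and both obtain the monotonicity by restricting the energy to the segment $i=0$, where $-\ln(1-r_\infty)-R_0r_\infty$ is strictly increasing for $r_\infty\in[r_{herd},1[$. Your write-up simply makes explicit some details the paper leaves implicit, such as the strictness of $r_\infty>r_{herd}$ and the fact that $\psi$ is a bijection from $[r_{herd},1[$ onto $[0,+\infty[$.
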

\begin{proof}
  We have shown in Theorem \ref{thm:foliation} and Proposition
  \ref{prop:trajectoriesInFoliations} that in the case $i>0$,  the
  trajectory is non constant, and $M_\infty$ is characterised
  by $M_\infty\in C_{h_0(M_0)}$ and $r(M_\infty)\geq r_{herd}$. This
  proves that $M_\infty$ depends only on $h_0(M_0)$. The energy of
  the point $M_\infty=(s_\infty,r_\infty)$ is up to a constant
  $-\ln(1-r_\infty)-R_0r_\infty$.  This energy is a strictly increasing
  function of $r_\infty$ on the domain $r_\infty\in [r _{herd},1[$. 
\end{proof}

\begin{thm}
  \label{thm:breakingPower}
  Let $\beta(t)$ be a control and $M(t)$ a solution of the associated
  sir-system. Let $R(t):=\frac{\beta(t)}{\mu}$ be the propagation
  number induced by the mitigation and let
  $h_0(t)$ be the energy at time $t$.
  Then $\frac{dh_0}{dt} = \mu(R(t)-R_0)i(t)$. For a
  mitigation at time $t$, the instantaneous power $\frac{dh_0}{dt}$ 
  is negative.
\end{thm}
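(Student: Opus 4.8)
The plan is to treat $h_0$ as a fixed function of the two spatial variables $(s,r)$ with no explicit time dependence, and to differentiate the composition $t\mapsto h_0(s(t),r(t))$ along a solution by the chain rule. First I would unwind the definition: since $H_0(s,r)=\ln(s)+R_0r$, we have $h_0(s,r)=-\ln(s)-R_0r-\ln(R_0)+R_0-1$, so that $\fpartial{h_0}{s}=-\frac{1}{s}$ and $\fpartial{h_0}{r}=-R_0$. Both partials are continuous wherever $s>0$, and Theorem \ref{thm:descriptionOfTheSolutions} guarantees $s(t)>0$ for all $t\geq 0$, so this is legitimate along the whole trajectory.

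Next I would substitute the equations of the sir-controlled system, namely $s'=-\beta s i$ and $r'=\mu i$, together with the relation $\beta=\mu R(t)$, into the chain rule. This gives
\begin{align*}
\frac{dh_0}{dt} &= \fpartial{h_0}{s}\, s' + \fpartial{h_0}{r}\, r' \\
&= \left(-\frac{1}{s}\right)(-\beta s i) + (-R_0)(\mu i) \\
&= \beta\, i - \mu R_0\, i = \mu\bigl(R(t)-R_0\bigr)\, i(t),
\end{align*}
which is exactly the claimed identity. The cancellation of $s$ in the first term is what makes the formula clean, and it is the whole content of the computation.

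For the sign assertion I would invoke that a mitigation means $\beta(t)<\beta_0$, equivalently $R(t)<R_0$, while Theorem \ref{thm:descriptionOfTheSolutions} supplies $i(t)>0$ for all $t$; since $\mu>0$, the product $\mu(R(t)-R_0)i(t)$ is strictly negative, so the instantaneous power $\frac{dh_0}{dt}$ is negative during a mitigation.

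There is no real obstacle here beyond one piece of bookkeeping: $\beta$ is only piecewise continuous, so the chain-rule computation is valid on each open interval $]a_j,a_{j+1}[$ where $s,i,r$ are $C^1$ (again by Theorem \ref{thm:descriptionOfTheSolutions}), and at a breakpoint $a_j$ the identity should be read with the one-sided derivative of $h_0$ and the corresponding one-sided value $R(a_j^{\pm})$. I would make this remark explicit so that the formula $\frac{dh_0}{dt}=\mu(R(t)-R_0)i(t)$ is understood to hold at every $t$ where $R(t)$ is defined, with the convention on one-sided limits at the finitely many jump points. Everything else is a direct differentiation.
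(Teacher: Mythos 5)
Your proof is correct and follows essentially the same route as the paper's: the paper likewise differentiates $h_0$ along the trajectory, writing $h_0'(t)=-\frac{s'(t)}{s(t)}-R_0r'(t)$, and substitutes the sir relations $\frac{s'(t)}{s(t)}+R(t)r'(t)=0$ and $r'(t)=\mu i(t)$ to obtain $\frac{dh_0}{dt}=\mu(R(t)-R_0)i(t)$, with negativity from $R(t)<R_0$. Your two added remarks --- invoking $i(t)>0$ from Theorem \ref{thm:descriptionOfTheSolutions} for strict negativity, and reading the identity with one-sided derivatives at the finitely many jump points of $\beta$ --- are small refinements of points the paper leaves implicit.
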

\begin{proof}
  By definition of $h_0(t)$, we have $h_0'(t)=-\frac{s'(t)}{s(t)}
  -R_0r'(t)$. By the sir system, $\frac{s'(t)}{s(t)} +R(t)r'(t)=0$
  and $r'(t)=\mu i(t)$. The formula for $h_0'(t)$ follows. 
A mitigation at time $t$ satisfies ( by definition ) 
  $R(t)<R_0$ thus the power is negative. 
\end{proof}

\subsection{Trajectories and solutions}
\label{sec:traj-solut}
Starting with a parameterised solution $t \mapsto (s(t),i(t),r(t))$,
$\RR \to \RR^3$ of a controlled
sir-system, we can forget the time $t$ keeping only the non
parameterised curve $C\subset \RR^3$ which is the image of the
solution. The time parametrisation
is apparently lost by this reduction, but this is not true. 
We can recover the time $t$ ( up to translation) and even the control
$\beta(t)$ that induces the trajectory $C$
( Theorem \ref{thm:computationOfImplicitBeta}). This fact will be used in the next
sections to compare how coercive  are various strategies :  We will recover $\beta(t)$
from the trajectories associated to the strategies.

We work as before in the $(s,r)$-plane with $C\subset \RR^2$ instead of $C\subset
\RR^3$ since $i=1-s-r$. At the level of the differential equation, we
are interested in the system $(**)$ formulated using only the
variables $r$ and $s$ :
\begin{displaymath}
  (**)\begin{cases}
    s'&=-\beta(t) s (1-r-s)\\
    r'&=\mu (1-r-s)
  \end{cases}
\end{displaymath}

\begin{defi}
  An infinite trajectory of the system $(**)$ is a set  $C\subset \RR^2$ image
  of $t \mapsto (s(t),r(t))$, where $(s,r)$ is a solution of $(**)$
  defined for $t\in [0,+\infty[$ and satisfying $(s(0),r(0))\in T$.  \\
\end{defi}


\begin{rem}\label{rem:otherModels}
  The equations in $(**)$ are the equations we use throughout this
  article. As long as a model carries explicitly or implicitly these
  equations, the results of the present paper apply.
  This is the case for the sird model with deaths which boils down to a simple
  sir model if the deaths and the recovered are gathered in a unique
  compartment.  We also expect that the same approach and methods would
  lead to similar results as long as
  we can eliminate time $dt$ from the equations to get an equation
  with separate variables $\frac{ds}{s} =-R_0dr$ leading to the same associated foliation.  This the case for
  instance for the seir model without vital dynamics.
\end{rem}

\begin{thm}
  \label{thm:trajectoriesForBetaNonConstant}
  \label{thm:computationOfImplicitBeta}
  Let $C\subset T$ be a subset.  
  The following conditions are equivalent.
  \begin{enumerate}
  \item $C$ is a non constant infinite trajectory of a sir-controlled system
    for some control function $\beta(t)$,
    \item
      \begin{itemize}
      \item $r(C)$ is a semi-closed interval $[r_0,r_{\infty}[$ 
      \item There exists a function $\tilde s:[r_0,r_\infty[\to \RR$
        continuous, piecewise $C^1$, decreasing, satisfying $\tilde
        s(r)+r<1$,  whose graph $\Gamma_{\tilde s}$
        is  the set $C$,
      \item  The limit $s_\infty:=\lim_{r \to r_\infty}\tilde
        s(r)$ satisfies $s_\infty+r_\infty=1$.   
      \end{itemize}
    \end{enumerate}
    Moreover,
    we have the formulas $t=\int_{r_0}^r \frac{dr}{\mu(1-r-\tilde
      s(r))}$ and   $\beta(t)=\frac{-\mu s'(t)}{s r'(t)} $.
\end{thm}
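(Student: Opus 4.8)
The plan is to prove the two implications of the equivalence separately, and then read off the two displayed formulas directly from the construction in each direction.

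\medskip

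\noindent I would begin with $(1)\Rightarrow(2)$, which is essentially a repackaging of Theorem \ref{thm:descriptionOfTheSolutions}. Assume $C$ is the trajectory of a solution $(s,i,r)$ of $(**)$ defined on $[0,+\infty[$ with $(s(0),r(0))\in T$, and set $r_0:=r(0)$. By Theorem \ref{thm:descriptionOfTheSolutions}, $r$ is strictly increasing, $s$ is decreasing, $i(t)>0$ for all $t$, the limits $s_\infty,r_\infty$ exist, and $i_\infty=0$ with $s_\infty+r_\infty=1$. Hence $r:[0,+\infty[\to[r_0,r_\infty[$ is a continuous strictly increasing bijection, so it admits a continuous inverse and I define $\tilde s:=s\circ r^{-1}$. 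Then $\tilde s$ is continuous, decreasing (since $s$ decreases while $r$ increases), its graph is exactly $C$, and $\tilde s(r)+r=s+r=1-i<1$; the limit relation $s_\infty+r_\infty=1$ is precisely the stated boundary condition. For the piecewise $C^1$ regularity, on each interval of the subdivision where $\beta$ is continuous the solution is $C^1$, and dividing the two lines of $(**)$ gives $\frac{d\tilde s}{dr}=\frac{s'}{r'}=-\frac{\beta s}{\mu}$, a continuous function of $r$ on that piece; the finitely many break times are carried by $r(\cdot)$ to finitely many break values of $r$. Differentiating $r(t)$ via $r'=\mu i=\mu(1-r-\tilde s(r))$ and solving for $\beta$ via $s'/r'=-\beta s/\mu$ yields the two displayed formulas.

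\medskip

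\noindent For $(2)\Rightarrow(1)$ I would reconstruct the solution and the control from $\tilde s$. Writing $i(r):=1-r-\tilde s(r)$, which is $>0$ by hypothesis, I set
\[
t(r)=\int_{r_0}^{r}\frac{dr'}{\mu\, i(r')}.
\]
On every compact $[r_0,r]$ with $r<r_\infty$ the integrand is continuous and positive, so $t(\cdot)$ is a well-defined, strictly increasing continuous function with $t(r_0)=0$. Once I know $t(r)\to+\infty$ as $r\to r_\infty$ (the point addressed below), $t:[r_0,r_\infty[\to[0,+\infty[$ is an increasing homeomorphism; its inverse $r(t)$ together with $s(t):=\tilde s(r(t))$ and $i(t):=1-r(t)-s(t)$ defines functions on all of $[0,+\infty[$. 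I then set $\beta:=-\mu\,\tilde s'/\tilde s$, which is non-negative since $\tilde s$ is decreasing, and piecewise continuous since $\tilde s$ is piecewise $C^1$ and $\tilde s>0$ throughout (because $C\subset T$ forces $s>0$); the finitely many break points in $r$ map to a finite subdivision in $t$, the last piece being unbounded, so this $\beta$ fits the definition of a control. A direct substitution then checks that $(s,i,r)$ solves $(**)$: by construction $r'=\mu i$, and $s'=\tilde s'(r)\,r'=\mu\,\tilde s'(r)\,i=-\beta s i$, from which the $i'$ equation follows by $i=1-s-r$.

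\medskip

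\noindent The step I expect to be the crux is showing $t(r)\to+\infty$ as $r\to r_\infty$, i.e.\ that the reconstructed solution is a genuine \emph{infinite} trajectory rather than one reaching $r_\infty$ in finite time. This is exactly where the monotonicity of $\tilde s$ and the boundary condition $s_\infty+r_\infty=1$ combine: since $\tilde s$ is decreasing, $\tilde s(r)\ge s_\infty=1-r_\infty$, whence
\[
i(r)=1-r-\tilde s(r)\le 1-r-(1-r_\infty)=r_\infty-r .
\]
Therefore
\[
t(r)\ge\frac{1}{\mu}\int_{r_0}^{r}\frac{dr'}{r_\infty-r'}=\frac{1}{\mu}\ln\frac{r_\infty-r_0}{r_\infty-r}\xrightarrow[r\to r_\infty]{}+\infty,
\]
which gives the required divergence and hence that $r$ ranges over all of $[r_0,r_\infty[$ as $t$ ranges over $[0,+\infty[$. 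The first displayed formula is then the definition of $t$, and the second follows from $s'/r'=-\beta s/\mu$, valid because $r'=\mu i>0$ and $s>0$. The same two identities, used in the first direction, produce the formulas by differentiation, completing the equivalence.
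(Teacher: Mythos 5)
Your proposal is correct and follows essentially the same route as the paper's proof: the same reparametrisation $\tilde s=s\circ r^{-1}$ for $(1)\Rightarrow(2)$, the same integral $t(r)=\int_{r_0}^{r}\frac{dr'}{\mu(1-r'-\tilde s(r'))}$ for the converse, the identical divergence bound $1-r-\tilde s(r)\leq r_\infty-r$ obtained from the monotonicity of $\tilde s$ and $s_\infty+r_\infty=1$, and the same direct verification of the sir equations yielding the two formulas. Your extra remarks on the nonnegativity and piecewise continuity of the reconstructed $\beta$ make explicit details the paper leaves implicit, but they do not change the argument.
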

\begin{proof}
  We prove first that $1\Rightarrow 2$. By Theorem
  \ref{thm:descriptionOfTheSolutions}, a solution $s(t)$ is decreasing
  and piecewise $C^1$ (with respect to t).    The sir-equation  $r'=\mu i$ implies that $r=r(t)$ is  an increasing ${\emph C}^1$ diffeomorphism 
 from $t\in [0,\infty[$ onto $r\in [r_0, r_{\infty}[$.  By inversion we obtain
 a function $t=t(r)$ which is $C^1$ and strictly increasing.  The function $s$ of
 $t$ can then be expressed with the parameter $r$  letting  $\tilde
 s(r)=s\circ t(r)$. The graph and the limits stay unchanged by this
 reparametrisation with $r$ instead of $t$ and the equalities
$s_\infty+r_\infty=1$ and $\tilde s(r)+r<1$  follow from the
 corresponding results for $s$  in Theorem
 \ref{thm:descriptionOfTheSolutions}.\\
 Conversely, we let $t=\int_{r_0}^r \frac{dr}{\mu(1-r-\tilde s(r))}
 $. Since $\tilde s$ is decreasing, it follows $1-r-\tilde s(r) \leq r_{\infty}-r$ and then 
 $t_\infty:=\int_{r_0}^{r_\infty} \frac{dr}{\mu(1-r-\tilde s(r))}
 \geq \int
 _{r_0}^{r_{\infty}}\frac{dr}{(r_{\infty}-r)\mu}=+\infty$.
  This makes $t:[r_0,r_{\infty}[\to [0,\infty[$ a strictly increasing
  $C^1$ function of $r$. We
 denote by $r=r(t):[0,\infty[\to [r_0,r_\infty[$ the inverse function
 which is $C^1$ too. We let $s(t)=\tilde s(r(t))$, $\tilde \beta(r)=-\mu
  \frac{\tilde s'(r)}{\tilde s(r)}$, and $\beta(t)=\beta(r(t))= -\mu
  \frac{\tilde s'(r(t))}{s(t)}$. We claim that $s(t),r(t)$ is the
  solution of the sir-system with control $\beta(t)$ with initial
  condition $s(0)=\tilde s(r_0)$ and $r(0)=r_0$, hence gives a
  parametrisation of $C$ as a trajectory. This is a direct
  computation: The formula for $t$ shows that $r'(t)=\frac{dr}{dt} =
  \mu (1-r(t)-s(t))$ and $s'(t)=\frac{-s(t)\beta(t)}{\mu} r'(t)=-\beta(t) s(t) (1-r(t)-s(t))$. 
 \end{proof}

\subsection{Cost functions }
\label{sec:strategies in general }

Some people prefer long loose interventions, while others prefer short
harsh interventions.
Mathematically, we encode the preferences with cost
functions. The cost increases when mitigations 
become more intensive or when the duration  increases.
Each person carries its own cost function.

\begin{defi} Recall that a sir model comes with a constant $\beta_0$ depending
  on biological conditions. 
  Let $\beta$ be a control with values in
  $[0,\beta_0]$. The number $\beta(t)$ is called the constraint at time $t$.
  There is no constraint 
  if $\beta(t)=\beta_0$ and there is a constraint  if
  $\beta(t)<\beta_0$.  The
  maximal constraint for the control $\beta$ is the infimum $infess
  (\beta(t))$, where $infess$ denotes the essential infimum.   
  \\
  A cost function is a decreasing function $c:[0, \beta_0] \fd \RR$  with
  $c(\beta_0)=0$.
  \\
  The duration $d(\beta)$ of a control is the Lebesgue
  measure of the locus $\beta^{-1}([0,\beta_0[)$.
  \\
  The  cost of a control $\beta$ is $c(\beta)=\int_{\RR^+}c(\beta(t)
  )dt$. A time interval $I$ such that there is no constraint outside
  $I$ is called a constraint interval. The cost function can
be computed after restriction to a constraint interval : For any
constraint interval $I$, 
\textit{i.e.} $c(\beta)=\int_{I}c(\beta(t))$.
  \\
  A control $\beta_1$ is preferable to a control $\beta_2$ (In notation :
  $\beta_1\geq \beta_2$)  if
  $r_\infty(\beta_1)\leq r_\infty(\beta_2)$ and for every 
  cost function $c$, $c(\beta_1)\leq c(\beta_2)$. The relation $\leq$  is a partial 
  preorder. A control $\beta_1\in \fxc $ is optimal  in the set of controls
  $\fxc$ if for every  $\beta_2\in \fxc$,   we have $\beta_1\geq \beta_2$. 
  A control  $\beta_1\in \fxc$ is maximal  in 
  $\fxc$ if there is no better control in $\fxc$ : for all $\beta_2\in \fxc$
 we have $\beta_1\geq \beta_2$ or $\beta_1$ and $\beta_2$ incomparable. 
  \end{defi}

\begin{rem}
  With informal words, an optimal control is a control
  preferred by
  everyone regardless of the subjective personal cost function,
  with $r_{\infty}$ minimal.
  A maximal control is a control $\beta$ which cannot be improved:
  replacing $\beta$ by $\beta'$ increases $r_{\infty}$ or makes at least one person
  dissatisfied by a higher cost. 

  If there is an optimal control in a set $\fxc$ of
  considered choices, the public policy
  is obviously chosen : an optimal control is a universal
  favourite choice for all people. If no optimal control exists, then
  there is no universal choice and political arbitrations are required. However, there are still 
  choices that can be improved universally :
  If everyone prefers $\beta_2$ than $\beta_1$, the public policy
  should reject $\beta_1$, in favour of $\beta_2$. At the end, the public
  choice should be ideally between controls that can't be universally ameliorated
  any more, \textit{i.e.}   an option is selected between the maximal controls.
\end{rem}

\begin{rem}
  We have used the essential infimum rather than the infimum in the
  definition of the maximal constraint because of the points $a_i$
  where $\beta$ may be discontinuous. 
\\
  We underline that the order for controls is different from the
  order for functions. The relation $\beta_1\leq \beta_2$ as controls 
  defined above does not imply that $\beta_1\leq \beta_2$ as usual
  functions.
\end{rem}

The following theorem gives necessary conditions to compare two
controls. If $\beta_2$ is universally preferred to $\beta_1$, then the
mitigation is shorter-lived and the maximal constraint is smaller. 
\begin{thm}\label{thm:criterion_for_maximal_control}
  Let $\fxc$ be a set of controls with the same $r_{\infty}$ : $\forall
  \beta_1,\beta_2\in \fxc$, $r_\infty(\beta_1)=r_\infty(\beta_2)$.
 Let $\beta_1,\beta_2\in \fxc$. If
  $\beta_1\leq \beta_2$, then  the duration of constraint  satisfy  $d( \beta_1)\geq d(\beta_2)$, and
  the maximum of  constraints satisfy $infess(\beta_2(t)) )\geq infess(\beta_1(t))$.  

%
\end{thm}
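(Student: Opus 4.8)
The plan is to extract the two quantities $d(\beta)$ and $infess(\beta(t))$ by testing the hypothesis $\beta_1\leq\beta_2$ against two carefully chosen cost functions. Since all controls in $\fxc$ share the same value of $r_\infty$, the relation $\beta_1\leq\beta_2$ (that is, $\beta_2$ preferable to $\beta_1$) reduces by definition to the single requirement that $c(\beta_2)\leq c(\beta_1)$ for every cost function $c$. The crucial observation I would exploit is that a cost function is only required to be decreasing with $c(\beta_0)=0$, so discontinuous step functions are admissible; these behave as indicator functions that read off exactly the data appearing in the conclusion. Note also that every cost function is nonnegative (since $c(\beta)\geq c(\beta_0)=0$ for $\beta\leq\beta_0$), so all the integrals below are well defined in $[0,+\infty]$.

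For the duration I would use the step cost function $c_0$ defined by $c_0(\beta)=1$ for $\beta\in[0,\beta_0[$ and $c_0(\beta_0)=0$. This is non-increasing and vanishes at $\beta_0$, hence a legitimate cost function. By the very definition of the duration as the Lebesgue measure of $\beta^{-1}([0,\beta_0[)$, its associated cost is
\[
  c_0(\beta)=\int_{\RR^+}c_0(\beta(t))\,dt=\text{meas}\{t:\beta(t)<\beta_0\}=d(\beta).
\]
Applying the hypothesis to $c_0$ then gives $d(\beta_2)=c_0(\beta_2)\leq c_0(\beta_1)=d(\beta_1)$, which is the first claim (and this works in the extended reals, so no finiteness assumption on the controls is needed).

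For the maximal constraint I would argue by contradiction. Suppose $infess(\beta_2(t))<infess(\beta_1(t))$ and pick a value $v$ with $infess(\beta_2(t))<v<infess(\beta_1(t))$. Consider the step cost function $c_v$ defined by $c_v(\beta)=1$ for $\beta\in[0,v[$ and $c_v(\beta)=0$ for $\beta\in[v,\beta_0]$; since $v<\beta_0$ this is again a valid cost function. Because $\beta_1(t)\geq infess(\beta_1(t))>v$ almost everywhere, the set $\{t:\beta_1(t)<v\}$ is null, so $c_v(\beta_1)=0$. On the other hand $v>infess(\beta_2(t))$ forces $\{t:\beta_2(t)<v\}$ to have strictly positive measure, whence $c_v(\beta_2)>0=c_v(\beta_1)$, contradicting $c_v(\beta_2)\leq c_v(\beta_1)$. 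Therefore $infess(\beta_2(t))\geq infess(\beta_1(t))$.

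The argument is almost entirely formal once the right test functions are selected; the only point requiring care—and thus the main obstacle—is the handling of the essential infimum. One must use that $v>infess(\beta_2)$ guarantees a \emph{positive-measure} sublevel set $\{\beta_2<v\}$, rather than merely a nonempty one. This is precisely what distinguishes the essential infimum from the pointwise infimum, and it is the reason the definition of the maximal constraint was phrased with $infess$ in the first place: a pointwise infimum could be dragged down by the behaviour of $\beta$ on a null set (for instance at the finitely many discontinuity points $a_i$), and the corresponding step cost would then fail to detect it.
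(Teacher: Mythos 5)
Your proof is correct and follows essentially the same route as the paper: the same indicator-type cost function $c_0$ on $[0,\beta_0[$ to identify the cost with the duration, and the same contradiction argument with a step cost supported on an initial interval below $\mathrm{infess}(\beta_1)$ (the paper uses the closed interval $[0,\beta_3]$ where you use $[0,v[$, an immaterial difference). Your added remark on why $v>\mathrm{infess}(\beta_2)$ yields a sublevel set of \emph{positive measure} is a useful clarification of a point the paper leaves implicit, but it does not change the argument.
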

\begin{proof}
 Consider   the cost
function $c(x)$  which is equal to $1$ if $x \in [0,\beta_0[$ and $0$
otherwise; then $c(\beta)=d(\beta)$. If $\beta_2\geq \beta_1$,  $d( \beta_1)=c(\beta_1)\geq c(\beta_2)=d(\beta_2)$.
\\
For the second inequality, we suppose ad absurdum that $infess \beta_1 >infess
\beta_2$ and we take $\beta_3$ with $infess \beta_1>\beta_3>infess \beta_2$. Then we choose the cost function 
which is $1$ in $[0, \beta_3]$ and $0$  otherwise. Then
$c(\beta_1)=0<c(\beta_2)$. 
\end{proof} 

Intuitively, if  we have two controls $\beta_1$ and $\beta_2$,
and if the constraints in $\beta_2$ are harsher and longer than
those for $\beta_1$, then the cost functions should satisfy
$c(\beta_2)\geq c(\beta_1)$. This idea is formalised 
with a suitable diffeomorphism in the next theorem. 

\begin{thm}\label{thm:dominatedImpliesInferior}
 Let $\beta_1$ and $\beta_2$ be two controls and $I_1,I_2$ two constraint
 intervals for $\beta_1$ and $\beta_2$. Suppose that
  there exists a diffeomorphism
  $\phi: I_1 \rightarrow I_2$ with $\phi'(t)\geq 1$ and
  $\beta_2(\phi(t))\leq \beta_1(t)$. Then for any cost function $c$,
  $c(\beta_2)\geq c(\beta_1)$. In particular, if
  $r_{\infty}(\beta_2)\leq r_{\infty}(\beta_1)$, then $\beta_2\leq \beta_1$.
\end{thm}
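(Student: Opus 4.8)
The substantive content is the inequality $c(\beta_2)\ge c(\beta_1)$ for an arbitrary cost function $c$; the final assertion then follows by unwinding the definition of the preorder. The plan is to write both costs as integrals over the given constraint intervals and to compare them through the change of variables induced by $\phi$. First I would record two elementary facts. Since $c$ is decreasing with $c(\beta_0)=0$ and every value $\beta_j(t)$ lies in $[0,\beta_0]$, the integrand $c(\beta_j(t))$ is nonnegative; and outside a constraint interval the control equals $\beta_0$, so $c(\beta_j(t))=0$ there. Hence, recalling that the cost may be evaluated on any single constraint interval, $c(\beta_1)=\int_{I_1}c(\beta_1(t))\,dt$ and $c(\beta_2)=\int_{I_2}c(\beta_2(s))\,ds$.

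Next I would substitute $s=\phi(t)$ in the integral defining $c(\beta_2)$. Because $\phi'\ge 1>0$, the map $\phi$ is a strictly increasing $C^1$ diffeomorphism of $I_1$ onto $I_2$, so the change of variables is legitimate and yields
\begin{align*}
c(\beta_2)=\int_{I_2}c(\beta_2(s))\,ds=\int_{I_1}c(\beta_2(\phi(t)))\,\phi'(t)\,dt\ge\int_{I_1}c(\beta_2(\phi(t)))\,dt\ge\int_{I_1}c(\beta_1(t))\,dt=c(\beta_1).
\end{align*}
The first inequality uses $\phi'(t)\ge 1$ together with the nonnegativity of $c(\beta_2(\phi(t)))$ noted above; the second uses the hypothesis $\beta_2(\phi(t))\le\beta_1(t)$ and the fact that $c$ is decreasing. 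As $c$ was arbitrary, this establishes $c(\beta_2)\ge c(\beta_1)$ for every cost function.

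Finally, for the ``in particular'' clause I would combine this with the comparison of the limits at infinity: the inequality $c(\beta_1)\le c(\beta_2)$ for all $c$ is exactly the cost condition in the definition of $\beta_1\ge\beta_2$, and the hypothesized comparison of $r_{\infty}(\beta_1)$ and $r_{\infty}(\beta_2)$ supplies the remaining condition, so $\beta_2\le\beta_1$ by definition. I do not expect a serious obstacle here; the only points requiring care are the nonnegativity of $c$ (without which multiplying by $\phi'\ge 1$ could \emph{decrease} the integrand, breaking the first inequality) and the legitimacy of computing each cost on its constraint interval, both of which are immediate from the definitions.
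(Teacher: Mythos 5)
Your proposal is correct and matches the paper's proof essentially line for line: both compute each cost on its constraint interval, substitute $u=\phi(t)$ in $c(\beta_2)$, and then use $\phi'(t)\geq 1$ together with the nonnegativity of $c$ and the monotonicity hypothesis $\beta_2(\phi(t))\leq\beta_1(t)$ to obtain $c(\beta_2)\geq c(\beta_1)$. The only difference is that you make explicit the two bookkeeping points (nonnegativity of the integrand and the reduction to a constraint interval) that the paper leaves implicit, which is fine.
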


\begin{proof}
 We have $c(\beta_2)=\int_{I_2}c (\beta_2(u)) du= \int_{I_1} c(
 \beta_2(\phi(t))\phi'(t)dt\geq  \int_{I_1} c(
 \beta_1(t))dt$. 
\end{proof} 


\subsection{Strategies with constant mitigation}
\label{sec:strat-with-const}

In this section, we consider constant mitigations, that is mitigations with a fixed level of
intervention and we investigate  the problem of their optimality : which
ones are optimal, which ones are not ?  

A first approach with constant mitigations is to minimise the burden
for a fixed duration. For instance, we may consider two
strategies for promoting remote work : The first strategy promotes
remote work every Monday for five weeks.
The second strategy promotes remote work from Monday to Friday after one month. Both
strategies have the same type of restriction, and the same duration (
5 days ). Which strategy
yields to the smallest ratio $r_\infty$ of infected people ? More generally, keeping the same
interventions and the same duration, and changing the scheduling, is there a mitigation strategy
minimising the burden of the epidemic? 

An other approach with constant mitigations could be to fix a maximal
burden $r_{\infty}$ and  a level of mitigation ( in the example above,
remote work ). Then the policy makers try
to minimise the mitigation time required to keep
$r_{\infty}$ under the chosen limit. What are the optimal 
mitigations for this question ? 

We will prove in this section that the two above problems ( minimising the
duration for a fixed burden or minimising the burden for a fixed
duration) are 
equivalent : a strategy based on a constant control $\beta$ is optimal
for one problem if and only if it is optimal for the other problem
( Theorem \ref{thm:optimal_constant_mitigations}). Such an optimal
$\beta$ always exists and it has specific properties. More precisely,
we show that for an optimal strategy,
all the mitigations must be grouped in a unique intervention. In a
sir-controlled model, several
short interventions are less efficient than a long adequately planned
intervention of the same total duration. The adequate planning
boils down to ``as soon as possible'' if the herd ratio has been
passed, and around the zone $s=s_{herd}$ otherwise.

Our argument consists in replacing the infinite dimensional
space in which $\beta$ moves with a compact space.
This reduction relies on the comparison of the time spent on parallel
$R$-leaves of a quadrilateral.  Once the ambient space is compact,  some
continuity argument can be applied. 

Technically, a constant mitigation appears as a control $\beta$
that takes two values $\beta_0$ and $\beta_1$, where $\beta_0$ is the
value of $\beta$ when no mitigation occurs, and $\beta_1<\beta_0$ is the value
of $\beta$ during the mitigation.

\begin{figure}[h]
\centering
\includegraphics[scale=.3]{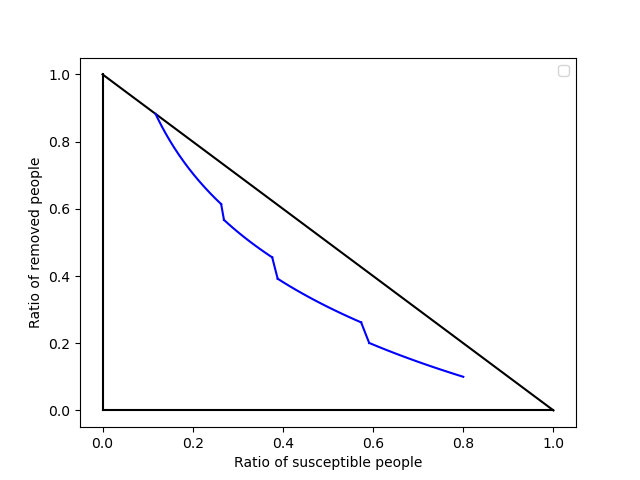}
\caption{Epidemic with a control of order 2 and $M_0=(0.8,0.1)$.}
\end{figure}

\begin{defi}
  Let $\beta_1\in [0,\beta_0[$. A control $\beta$ of order $k\geq -1$ is a
  piecewise constant function $\beta:[0,+\infty[\to \{\beta_0,\beta_1\}$ such
  that there exists elements $(a_i,b_i)_{i\in \{0\dots k\}}$ with
    $0 \leq a_0<b_0<a_1<b_1 < \dots
<a_k<b_k$, $\beta(t)=\beta_1$ on any interval 
$[a_i,b_i]$ and  $\beta(t)=\beta_0$ otherwise. A mitigation with a
control of some order $k$ is called a constant mitigation. 
Its duration  is $d(\beta)=\sum_i( b_i-a_i)$.
By convention, for $k=-1$, we have $\beta(t)=\beta_0$ for every $t$
and $d(\beta)=0$. \\
In this section, we fix once and for all an initial point $M_0\in T$, with $i(M_0)\neq 0$,  and we denote by $r_\infty(\beta)$ the
share of people finally removed for the sir-control system with
initial condition $M_0$ and control $\beta$.
\\
Recall that $H_0(s,r)=\ln(s)+R_0r$ with $R_0=\frac{\beta_0}{\mu} $.  Similarly, we let
$H_1(s,r)=\ln(s)+R_1r$, with   $R_1=\frac{\beta_1}{\mu} $.    
\end{defi}

We will consider  minimisation problems
where an optimal $\beta \in \fxc$ has to be found within various classes $\fxc$.
These classes $\fxc$ are introduced in the
following definition. 

\begin{defi}
  Let $d\geq 0,r_{\infty}\in ]r_{herd},1[$. 
  \begin{itemize}
  \item   $\fxc_{d,k}$ is the set of controls $\beta$ with constant
    mitigation of order $k$ and
    duration $d(\beta)=d$, 
  \item $\fxc_{d}:=\cup_{k\geq -1} \fxc_{d,k}$  ( note that
    $\fxc_0=\fxc_{0,-1}$  and for $d>0$, $\fxc_{d}=\cup_{k\geq 0}
    \fxc_{d,k}$ since $\fxc_{d,-1}$ is empty )
  \item $\fxc:=\cup_{d\geq 0} \fxc_d$ is the set of finite time constant controls,
  \item     $\fxc^{r_\infty}$ is the set of  $\beta\in \fxc$ with
    $r_{\infty}(\beta)=r_{\infty}$,
  \item   $\fxc^{r_\infty,k}$ is the set of  $\beta\in \fxc$
    of order $k$ with
    $r_{\infty}(\beta)=r_{\infty}$
  \item  $\fxc^{cross}$ is the set of  $\beta\in
    \fxc$ such that there exists $i$, there exists $t\in [a_i,b_i]$
    with $s(t)=s_{herd}$. In non technical terms, there is an
    alternation of mitigated and non mitigated periods, and  the herd ratio
    $s_{herd}$ is crossed during a mitigated period $[a_i,b_i]$
    rather than during a non mitigated period $]b_i,a_{i+1}[$. 
  \item  $\fxc^{imm}$ is the set of controls $\beta\in
    \fxc$ such that $a_0=0$. The exponent ``imm'' stands for
    immediate, to denote the controls that start the mitigation
    at $t=0$. 
  \item   Several exponents correspond to intersections of the classes
    above. For instance,
    $\fxc_{d}^{cross}=\fxc_{d}\cap \fxc^{cross}$, 
    $\fxc_{d,k}^{cross}=\fxc_{d,k}\cap \fxc^{cross}$, 
    $\fxc^{r_\infty,cross}=\fxc^{r_\infty}\cap\fxc^{cross}$,
    $\fxc^{r_\infty}_d=\fxc^{r_\infty}\cap\fxc_d$,
    $\fxc^{r_\infty,k,cross}=\fxc^{r_\infty,k}\cap \fxc^{cross}$. 
  \end{itemize}
\end{defi}

\begin{thm}\ \\ \label{thm:optimal_constant_mitigations}
  \begin{enumerate}
\item Let $\beta\in \fxc$, $d=d(\beta)$, and
  $r_\infty=r_\infty(\beta)$. Then $\beta $ is optimal in $\fxc_d$ if
  and only if $\beta$ is optimal in $\fxc^{r_\infty}$. 
  \item 
  For every $ d\geq 0$, there exists $\beta \in \fxc_d$ which is
  optimal in $\fxc_d$.
\item For every $r_\infty$ such that $\fxc^{r_\infty}$ is non empty,
  there  exists  $\beta \in \fxc^{r_\infty}$ which is optimal in $\fxc^{r_\infty}$.
\item  $\fxc^{r_\infty}$ is non empty if and only if $r_\infty \in
  ]r_\infty(\beta_1),r_\infty(\beta_0)]$ where
  \begin{itemize}
  \item $r_{\infty}(\beta_0)$ is the limit of $r(t)$ when
    $\beta(t)=\beta_0$ for all $t$ ( no mitigation). 
  \item $r_{\infty}(\beta_1)=\lim_{N \to \infty} r_\infty(\beta_{1,N})$
    where
    $\beta_{1,N}(t)=\beta_1$ for all $t\leq N$, and $\beta_{1,N}
    (t)=\beta_0$ for $t> N$. In intuitive terms,  $r_{\infty}(\beta_1)$ is the ratio of infected
    people after an arbitrarily long but still finite mitigation whose
    intensity is governed by $\beta_1$.
  \end{itemize}
\item We denote by $r_\infty(d)$ the quantity $r_\infty(\beta)$ when
  $\beta$ is optimal in $\fxc_d$ and by $d(r_\infty)$ the
  quantity $d(\beta')$ where  $\beta' $ is optimal in $\fxc^{r_\infty}$. 
Then the functions $d\to r_\infty(d)$ and $r_\infty \to d(r_\infty)$
are mutually inverse for $d\in [0,+\infty[$ and $r_\infty \in     ]r_\infty(\beta_1),r_\infty(\beta_0)]$.
\item If $\beta\in \fxc_d$ is optimal, then $\beta \in
  \fxc_{d,0}$. In other words, an optimal
  mitigation does not split the mitigation in parts.
\item  If $\beta$ is optimal and $s(M_0)\leq s_{herd}$, then $\beta
  \in \fxc^{imm}$.  In
  other words, the optimal mitigation starts as soon as possible if
  $s(M_0)\leq s_{herd}$.
\item   If $\beta$ is optimal and $s(M_0)\geq  s_{herd}$,  then 
  $\beta \in  \fxc^{cross}$.  In other words, the adequate
  scheduling of the mitigation is such that the vertical line
  $s=s_{herd}$ is crossed during the mitigation : the mitigation occurs
  around the zone where $s=s_{herd}$. 
\end{enumerate}

\end{thm}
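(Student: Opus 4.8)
The plan is to collapse this eight-part statement onto a single one-dimensional ``energy budget'' problem and then run a rearrangement argument in the $(r,\ln s)$ plane. Since $r$ is strictly increasing (Theorem \ref{thm:descriptionOfTheSolutions}), I would reparametrise every trajectory by $r$, so that a constant control becomes its \emph{cumulative mitigation function} $m(r)=|E\cap[r_0,r]|$, where $E\subset[r_0,r_\infty[$ is the set of $r$-values at which one mitigates and $m'\in\{0,1\}$. Along an $R_1$-leaf one has $\frac{ds}{s}=-R_1\,dr$ (Theorem \ref{thm:foliation}, Proposition \ref{prop:trajectoriesInFoliations}), so combining this with Theorem \ref{thm:breakingPower} gives $dh_0=(R_1-R_0)\,dr$ during mitigation; hence the total dissipated energy equals $(R_0-R_1)|E|$, \emph{independently of the schedule}. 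By Theorem \ref{thm:damageIncreasesWithEnergy}, $r_\infty$ is a continuous strictly increasing function of the final energy $h_0(M_\infty)=h_0(M_0)-(R_0-R_1)|E|$. Therefore fixing $r_\infty$ is equivalent to fixing $|E|$, and minimising $r_\infty$ is equivalent to maximising $|E|$. Moreover for a constant control every cost function yields $c(\beta)=c(\beta_1)\,d(\beta)$, so within a fixed class ``optimal'' means \emph{simultaneously} minimal $r_\infty$ and minimal duration $d=\int_E \frac{dr}{\mu\, i(r)}$.

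The heart of the argument is an exchange lemma. In coordinates $(r,\ln s)$, free arcs are straight segments of slope $-R_0$ and mitigation arcs straight segments of slope $-R_1$ (because $\frac{d\ln s}{dr}=-R(r)$). Swapping the order of an adjacent free move and a mitigation move of the same $r$-length traces the two remaining sides of a \emph{parallelogram} with the same endpoints; consequently the dissipation and the entire downstream trajectory are unchanged, and only the mitigation \emph{time} differs. Infinitesimally, relocating a mitigation bit from $r_1$ to $r_2$ changes $d$ to first order by $\frac{1}{\mu}\bigl(\frac{1}{i(r_2)}-\frac{1}{i(r_1)}\bigr)$, the induced change of $i$ elsewhere being second order precisely because $|E|$, the dissipation, and hence $r_\infty$ stay fixed. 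Since under the free dynamics $\frac{di}{dr}=R_0 s-1$, which is positive for $s>s_{herd}$ and negative for $s<s_{herd}$, the infected fraction $i$ rises up to the herd line and falls after it: it is \emph{cheaper to mitigate later while $s>s_{herd}$ and earlier once $s<s_{herd}$}.

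At an optimum no such first-order relocation can lower $d$, which forces $E$ to be a single interval pressed against $s=s_{herd}$ from both sides, i.e. one uninterrupted mitigation straddling the herd line; this is parts $6$ and $8$. If $s(M_0)\le s_{herd}$ the whole trajectory lies in $\{s<s_{herd}\}$, mitigation is always advanced, and it is pushed to $t=0$, giving part $7$. For existence (parts $2$ and $3$) I would use compactness: the functions $m$ range over the compact set of nondecreasing $1$-Lipschitz functions, and both $r_\infty$ and $d$ are continuous along minimising sequences; the only danger, the singularity of $1/i$ as $r\to r_\infty$, is harmless because efficient controls keep $E$ away from that region, and an extreme-point (bang-bang) argument recovers a genuine order-$k$ control.

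Finally, setting $\rho(d):=\min_{\fxc_d} r_\infty$, the budget reformulation shows $\rho$ is continuous and \emph{strictly} decreasing (spending an extra sliver of time near the peak strictly enlarges $|E|$ and strictly lowers $r_\infty$, as long as $i>0$ there, i.e. as long as $r_\infty$ exceeds the floor). Strict monotonicity gives at once the equivalence of the two minimisation problems (part $1$) and the fact that $d\mapsto r_\infty(d)$ and $r_\infty\mapsto d(r_\infty)$ are mutually inverse (part $5$); applying the intermediate value theorem to $\rho$ on $[0,\infty[$, with endpoints $\rho(0)=r_\infty(\beta_0)$ and $\lim_{d\to\infty}\rho(d)=r_\infty(\beta_1)$, yields the range in part $4$. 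I expect the \textbf{main obstacle} to be the coupling: $i(r)$ depends on the full schedule through $m$, so $d$ is not a plain rearrangement integral. The resolution is exactly the observation that dissipation, and hence $r_\infty$, depends only on $|E|$; this decouples the first-order variation and makes the parallelogram exchange exact on endpoints. The secondary nuisance is the non-compact parametrisation by switching times, which is bypassed by encoding controls through $m$ and by the $1/i$ control described above.
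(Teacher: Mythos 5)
Your core argument is, in substance, the paper's own proof rewritten in $(r,\ln s)$ coordinates: your ``parallelogram'' is exactly the paper's $R_0$--$R_1$ quadrilateral (Lemma \ref{lm:orderInTheQuadrilatere}, Lemma \ref{lm:completionOfATriangle}), your time comparison of the two mitigation sides via $1/i$ at matched $H_0$-values is precisely Lemma \ref{lemmaTimeOnQuadri}, and your sign analysis $di/dr=R_0s-1$ is Proposition \ref{prop:i_increasing_then_decreasing}; iterating the swap to force a single interval straddling $s=s_{herd}$ (or pushed to $t=0$ when $s(M_0)\leq s_{herd}$) is Propositions \ref{reduction_to_cross} and \ref{reduction_to_imm}, and your strict monotonicity of $\rho(d)$ plus the intermediate value theorem reproduces the paper's treatment of items 1, 4, 5 in Proposition \ref{pre-thm}. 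One genuinely nice repackaging on your side: the identity $dh_0=(R_1-R_0)\,dr$ during mitigation, hence dissipation $=(R_0-R_1)|E|$ with $|E|$ the $r$-measure of the mitigated set, makes the schedule-independence of $r_\infty$ at fixed $|E|$ explicit, where the paper only tracks leaf membership of endpoints.

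Two steps, however, do not hold as stated. First, in the infinitesimal relocation you claim the induced change of $i$ elsewhere is second order ``because $|E|$ stays fixed''; this is false whenever mitigation mass lies between $r_1$ and $r_2$: moving mass $\epsilon$ shifts $m(r)$, hence $\ln s(r)$ and $i(r)$, by $\Theta(\epsilon)$ on the whole intermediate range, contributing to $d$ at the \emph{same} first order as your main term. This is repairable --- and indeed redundant --- because the exact adjacent swap you also state needs no variational calculus, which is exactly why the paper works with exact quadrilateral exchanges rather than first-order relocations. Second, and more seriously, your existence argument (items 2 and 3) would fail as written: Arzel\`a--Ascoli compactness of the $1$-Lipschitz functions $m$ admits chattering limits with $0<m'<1$, which lie outside $\fxc$ (they correspond to intermediate values of $\beta$), and the duration $d(\beta)$ is \emph{not} continuous along such limits (a duty-cycle-$1/2$ chattering sequence has duration bounded below while the limit control's support doubles its meaning); the ``extreme-point/bang-bang argument'' you invoke to exit this class is precisely the nontrivial content being assumed. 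The paper sidesteps this entirely: it first applies the exchange reduction to land in the one-parameter family of single-interval controls $\beta_u$ with start time $u$ in a compact set $K$, where continuity of $u\mapsto r_\infty(\beta_u)$ yields the minimum (Proposition \ref{pre-thm}, item 2). Your own exchange lemma gives you the same escape route; you should use it in place of the function-space compactness.
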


We start the proof with a lightened form of the theorem,
considering only one step mitigations ( $\beta$ of order $0$) and
$s(M_0)\geq s_{herd}$.
\begin{prop}\label{pre-thm} We suppose $s(M_0)\geq s_{herd}$.  
  \begin{enumerate} 
\item Let $\beta\in \fxc_{d,0} ^{cross}$, and
  $r_\infty=r_\infty(\beta)$. Then $\beta $ is optimal in $\fxc_{d,0}^{cross}$ if
  and only if $\beta$ is optimal in $\fxc^{r_\infty,0,cross}$. 
  \item 
  For every $ d\geq 0$, there exists $\beta
  \in \fxc_{d,0}^{cross}$ which is optimal
  in $ \fxc_{d,0}^{cross}$.
\item For every $r_\infty$ such that $\fxc^{r_\infty,0,cross}$ is non empty,
  there  exists an optimal $\beta \in \fxc^{r_\infty,0,cross}$ which
  is optimal in $\fxc^{r_\infty,0,cross}$.
\item  $\fxc^{r_\infty,0,cross}$ is non empty if and only if $r_\infty \in
  ]r_\infty(\beta_1),r_\infty(\beta_0)]$ where
  \begin{itemize}
  \item $r_{\infty}(\beta_0)$ is the limit of $r(t)$ when
    $\beta(t)=\beta_0$ for all $t$ ( no mitigation). 
  \item  $r_{\infty}(\beta_1)=\lim_{N \to \infty} r_\infty(\beta_{1,N})$
    where
    $\beta_{1,N}(t)=\beta_1$ for all $t\leq N$, and $\beta_{1,N}
    (t)=\beta_0$ for $t> N$. 
  \end{itemize}
\item Denote by $r_\infty(d)$ the quantity $r_\infty(\beta)$ when
  $\beta$ is optimal in $\fxc_{d,0}^{cross}$ and denote by $d(r_\infty)$ the
  quantity $d(\beta')$ where  $\beta' $ is optimal in $\fxc^{r_\infty,0,cross}$. 
Then the functions $d\to r_\infty(d)$ and $r_\infty \to d(r_\infty)$
are mutually inverse for $d\in [0,+\infty[$ and $r_\infty \in
]r_\infty(\beta_1),r_\infty(\beta_0)]$.
\end{enumerate}  
\end{prop}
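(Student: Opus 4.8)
The whole proposition rests on a single reduction, so the plan is to establish that first. For an order-$0$ control with mitigation window $[a_0,b_0]$, write $A=M(a_0)$, $B=M(b_0)$: the trajectory is a free $R_0$-arc $M_0\to A$, a mitigated $R_1$-arc $A\to B$, and a free $R_0$-arc $B\to M_\infty$. By Proposition \ref{prop:trajectoriesInFoliations}, $H_0=\ln s+R_0r$ is conserved on the $R_0$-arcs and $H_1=\ln s+R_1r$ on the $R_1$-arc; eliminating $\ln s$ across the mitigation gives $H_0(B)=H_0(A)+(R_0-R_1)(r_B-r_A)=H_0(M_0)+(R_0-R_1)\,\Delta r$ with $\Delta r:=r_B-r_A$. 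Applying Theorem \ref{thm:damageIncreasesWithEnergy} to the free evolution started at $B$, the outcome $r_\infty$ is a strictly increasing function of $h_0(B)=-H_0(B)+\mathrm{const}$, hence a strictly \emph{decreasing} function of $\Delta r$. Thus minimising $r_\infty$ is exactly maximising the $r$-length $\Delta r$ of the mitigated arc. Moreover $\beta$ takes only the values $\beta_0,\beta_1$ and every cost function vanishes at $\beta_0$, so $c(\beta)=c(\beta_1)\,d(\beta)$ for all cost functions; the preorder $\le$ therefore collapses on constant mitigations to the product order on the pair $(r_\infty(\beta),d(\beta))$, and ``optimal'' means smallest $r_\infty$ and smallest duration simultaneously.

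Next I would set up the compact family and prove existence (item 2). Fix $d>0$. An order-$0$ cross control of duration $d$ is pinned down by its start time $a_0$, since the pre-mitigation arc lies on the fixed $R_0$-leaf through $M_0$; the cross condition forces $s(a_0)\ge s_{herd}\ge s(a_0+d)$, confining $a_0$ to a closed bounded interval whose endpoints correspond to the mitigation meeting the line $s=s_{herd}$ at one or the other extremity. By the well-posedness and regularity in Theorem \ref{thm:descriptionOfTheSolutions} together with standard continuous dependence on the parameter $a_0$, the maps $a_0\mapsto\Delta r$ and $a_0\mapsto r_\infty$ are continuous on this compact interval, so $r_\infty$ attains a minimum $\underline r(d)$, realised by some $\beta\in\fxc_{d,0}^{cross}$. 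Dually, for a target value let $\underline d(r_\infty)$ denote the infimal duration of a cross control with that outcome.

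The technical heart is the monotonicity and boundary behaviour of $\underline r$. For \emph{strict monotonicity}: take an optimiser of duration $d$ and prolong its mitigation to duration $d'>d$; since $i>0$ throughout (Theorem \ref{thm:descriptionOfTheSolutions}), $r$ is strictly increasing, so $\Delta r$ strictly grows, the prolonged control is still cross of duration $d'$, and the first paragraph gives $\underline r(d')\le r_\infty(\text{prolonged})<\underline r(d)$. \emph{Continuity} of $\underline r$ follows from the same continuous dependence together with continuity of the interval endpoints in $d$. For the \emph{limits}: as $d\to0^+$ the mitigated arc shrinks, $\Delta r\to0$, and $\underline r(d)\to r_\infty(\beta_0)$; as $d\to\infty$ the optimal $B$ is driven toward the point at infinity of its $R_1$-leaf and $\underline r(d)\to r_\infty(\beta_1)$. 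This last limit is exactly where the ``comparison of time spent on parallel $R_1$-leaves'' is needed: on two parallel $R_1$-leaves, the one with smaller $s$ has larger $i=1-s-r$ and hence sweeps a given $r$-interval faster, and this estimate controls how the optimal placement moves and pins the limit value. Together, $\underline r:[0,\infty)\to\,]r_\infty(\beta_1),r_\infty(\beta_0)]$ is a continuous strictly decreasing bijection.

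All remaining items then drop out. Item 4 is precisely that this bijection has image $]r_\infty(\beta_1),r_\infty(\beta_0)]$, so $\fxc^{r_\infty,0,cross}\neq\varnothing$ exactly on that range. Item 5 is that $\underline r$ and $\underline d$ are mutually inverse, automatic for a strictly monotone continuous bijection. For item 3, given $r_\infty$ in range put $d=\underline r^{-1}(r_\infty)$; the duration-$d$ optimiser from item 2 has outcome $r_\infty$ and, by strict monotonicity, minimal duration among cross controls with that outcome, hence is optimal in $\fxc^{r_\infty,0,cross}$. Finally item 1: a control of duration $d$ and outcome $r_\infty$ is optimal in $\fxc_{d,0}^{cross}$ iff it attains $\underline r(d)$, and optimal in $\fxc^{r_\infty,0,cross}$ iff it attains $\underline d(r_\infty)$; since $(d,\underline r(d))$ traces the graph of a strictly decreasing function, lying on that graph from the $d$-side is equivalent to lying on it from the $r_\infty$-side, which is the asserted equivalence. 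The hard part throughout will be the $d\to\infty$ endpoint, i.e. controlling the limiting optimal placement through the traversal-time comparison along parallel $R_1$-leaves; the rest is the energy reduction plus compactness.
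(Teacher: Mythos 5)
Your plan follows the paper's own proof in all structural essentials: the collapse of every cost function to $c(\beta)=c(\beta_1)\,d(\beta)$, so that on constant mitigations the preorder reduces to the pair $(r_\infty,d)$; existence (item 2) by parameterizing order-$0$ cross controls by their start time, which the cross condition confines to a compact interval ($u\in K\subset[0,t_{herd}]$ with the closed condition $s_u(u+d)\leq s_{herd}$ in the paper), followed by continuity of $u\mapsto r_\infty(\beta_u)$; strict monotonicity of the value function by prolonging the mitigation; and the mutual-inverse bijection between $d\in[0,+\infty[$ and $r_\infty\in\,]r_\infty(\beta_1),r_\infty(\beta_0)]$ via continuity and the intermediate value theorem. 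Two local differences are worth recording. First, your identity $H_0(B)=H_0(M_0)+(R_0-R_1)\Delta r$, which turns ``minimize $r_\infty$'' into ``maximize the $r$-length of the mitigated arc,'' is only implicit in the paper (it is the integrated form of Theorem \ref{thm:breakingPower} and appears concretely in Example \ref{confinementConstant}); making it explicit is a genuine simplification, since strict monotonicity and the $d\to 0^+$ limit become immediate. Second, for item 1 the paper deforms a given competitor through the family $\beta''_u$ (mitigation on $[a_0,a_0+u]$) and applies the IVT in $u$; since the resulting control may leave $\fxc^{cross}$, the paper must invoke Proposition \ref{reduction_to_cross} (a forward reference resting on the quadrilateral time-comparison Lemma \ref{lemmaTimeOnQuadri}). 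Your route, running the IVT on the value function $\underline{r}$ itself, stays inside the cross class and avoids that dependency, at the price of needing continuity of $d\mapsto\underline{r}(d)$, which you only sketch --- though the paper asserts this same continuity without proof when computing the range in item 5, so you are no less rigorous on that point.

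The one real soft spot is the $d\to\infty$ endpoint, which you candidly leave open. The paper settles it in two lines with the sandwich $\underline{r}(N)\leq r_\infty(\beta_{1,N})$ and $\underline{r}(d)\geq r_\infty(\beta_{1,N})$ for $N$ large, and you would need to supply some version of this; be aware that it is genuinely delicate, and that your instinct that the limit is governed by the $R_1$-leaf of the \emph{optimally placed} mitigation, rather than the leaf through $M_0$, is the right picture. Indeed, for harsh mitigations --- the extreme case $\beta_1=0$ of Example \ref{confinementConstant}, where $s$ is frozen during the mitigation, the cross condition pins the unique admissible start at $s=s_{herd}$, and $\underline{r}(d)\to r_{herd}$ --- the immediate-start quantity $\lim_N r_\infty(\beta_{1,N})$ need not coincide with $\lim_{d\to\infty}\underline{r}(d)$ when $s(M_0)>s_{herd}$. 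Identifying the two, as both you and the paper's sandwich do, implicitly requires that the $R_1$-trajectory issued from $M_0$ itself reach $s=s_{herd}$ in finite time; that case distinction is missing from your sketch (and, for what it is worth, the paper's second sandwich inequality is equally terse on this point), so if you flesh out the argument, this is where the work lies.
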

\begin{proof}
  From the definition of optimality, we have $\beta$
  optimal in $\fxc_{d,0}^{cross}$ iff $r_{\infty}(\beta)=inf_{\beta''
    \in \fxc_{d,0}^{cross}}r_{\infty}(\beta'')$. 
Indeed, all $\beta\in \fxc_d$ have the same cost $c(\beta_1)d$ for
every cost function $c$. Thus $\beta<\beta'$ as controls if and only iff
$r_\infty(\beta)>r_\infty(\beta')$. Similarly, $\beta$
  optimal in  $\fxc^{r_\infty,0,cross}$ iff $d(\beta)=inf_{\beta''
    \in \fxc^{r_\infty,0,cross}}d(\beta'')$.

  Suppose that $\beta$ is not optimal in $\fxc^{r_\infty,0,cross}$, then
  there exists $\beta'$ with $d(\beta')<d(\beta)$ and
  $r_\infty(\beta')=r_\infty$. The control $\beta'$ has value
  $\beta_1$ on an interval $[a_0,b_0]$ with $b_0<a_0+d(\beta)$. Define
  $\beta''$ by $\beta''(t)=\beta_1$ for $t\in [a_0,a_0+d(\beta)]$ and
  $\beta''(t)=\beta_0$ otherwise. Since there is a longer mitigation
  in $\beta''$ than in $\beta'$, it follows that
  $r_{\infty}(\beta'')<r_{\infty}(\beta')=r_\infty$. Thus
  $\beta$ is not optimal in  $\fxc_{d,0}^{cross}$ since $\beta''$ is a
  better choice.

  Suppose conversely that $\beta$ is not optimal in
  $\fxc_{d,0}^{cross}$. Then  there exists $\beta'$ with $r_\infty(\beta')<r_\infty(\beta)$ and
  $d(\beta')=d$.  The control $\beta'$ has value
  $\beta_1$ on an interval $[a_0,a_0+d]$. Define   $\beta''_u$
  depending on $u\in [0,d]$ by $\beta''_u(t)=\beta_1$ for $t\in [a_0,a_0+u]$ and
  $\beta''(t)=\beta_0$ otherwise. For $u=0$, there is no mitigation thus
  $r_{\infty}(\beta''_u)>r_{\infty}(\beta)$. For $u=d$,
  $\beta''_u=\beta'$, thus the
  opposite inequality holds. By the intermediate value theorem, there
  exists $u_0\in ]0,d[$ with
  $r_\infty(\beta''_{u_0})=r_\infty(\beta)$.   If $\beta''_{u_0}\in
  \fxc^{cross}$, then
  $\beta$ is not optimal in  $\fxc^{r_\infty,0,cross}$ since $\beta''_{u_0}$ is a
  better choice. If $\beta''_{u_0}\notin
  \fxc^{cross}$, then we replace $\beta''_{u_0}$ by $\beta'''\in
  \fxc^{cross}$ using proposition \ref{reduction_to_cross}. This concludes the proof of item 1.

  Now we prove item 2. Let $\beta_{no}=\beta_0$ be the constant control
  corresponding to no mitigation and let $t_{herd}$ be such that
  $s(t_{herd})=s_{herd}$ for the sir-system defined by
  $\beta_{no}$. The existence of $t_{herd}$ follows from the
  hypothesis $s(M_0)\geq s_{herd}$ and from the equality
  $s_{\infty}<s_{herd}$ (Theorem
  \ref{thm:damageIncreasesWithEnergy}). 
    For $u\in
 [0,t_{herd}]$, define $\beta_u$ by $\beta_u(t)=\beta_1$ for $t\in
  [u,u+d]$, and $\beta_u(t)=\beta_0$ otherwise.  Let $s_u(t)$ be the function $s$ associated to the
  sir-system with control $\beta_u$. Then $\beta_u\in C^{cross}$ if
  and only if $s_u(u+d)\leq s_{herd}$. By continuity in $u$, this is a
  closed condition on the parameter $u\in [0,t_{herd}]$, thus $u$
  lives in a compact $K$. It follows
  that the elements $\beta \in C^{cross}_{d,0}$ are parameterised by  
 an element  $u\in K$.  The map $K\to \RR$, $u\mapsto
 r_{\infty}(\beta_u)$ is a continuous function on $K$, hence it admits
 a minimum.  This proves item 2. 

 Item 4 is easy. 
  
  For every $d\in D= [0,+\infty[$, the set
  $\fxc_{d,0}^{cross}$ contains an optimal control $\beta$ by item 2.
  Let $B\subset
  [0,1]$ be the set of elements $r_{\infty}$ such that
  $\fxc^{r_\infty,0,cross}$ contains an optimal control. Let $\beta
  \in \fxc$ be a control of order $0$. If $\beta$
  is optimal ( equivalently in  $\fxc^{r_\infty(\beta),0,cross}$ or
  $\fxc_{d(\beta),0}^{cross}$ ), then $d(\beta)\in D$, $r_\infty
  (\beta) \in B$, and
  $d(r_\infty(\beta))=d(\beta)$ and
  $r_{\infty}(d(\beta))=r_\infty(\beta)$.
  This proves that the functions $r_{\infty}$ and $d$ in item 5) are
  mutually inverse one-to-one correspondences between $D=[0,+\infty[$
  and  $B$. Since $r_{\infty}(d)$ is a continuous strictly decreasing
  function of $d$, we have $B=]\lim_{d\to
    \infty}r_\infty(d),r_\infty(0)]$. A mitigation of
  duration $0$ means no mitigation, thus
  $r_\infty(0)=r_\infty(\beta_0)$. The equality $\lim_{d\to
    \infty}r_\infty(d)=r_\infty(\beta_1)$ holds by definition of
  $r_\infty(\beta_1)=\lim_{N\to \infty}
  r_\infty(\beta_{1,N})$ since $r_\infty(N)\leq r_\infty(\beta_{1,N})$
  and $r_\infty(d)\geq r_\infty(\beta_{1,N})$ for $N$ large enough. This concludes the proof of
  item 5 and item 3. 
\end{proof}

To prove Theorem \ref{thm:optimal_constant_mitigations},
the main point now is to show that we can reduce to
the simple version we proved
in Proposition \ref{pre-thm}. That is, we 
need to show  that an optimal 
control $\beta$ corresponds to a one step mitigation
and is in
$\fxc^{cross}$. This will be done in Proposition
\ref{reduction_to_cross}.  
Proposition  \ref{reduction_to_cross} requires some lemmas
relative to $R_0-R_1$ quadrilaterals that we define and study now. 

\begin{defi}
  A $R_0-R_1$ quadrilateral is a set of $4$ points $a,b,c,d\in T$ with
  \begin{itemize}
  \item $a,b$ are on a common $R_0$-leaf $C_{ab}$, 
  \item $c,d$ are on a common $R_0$-leaf $C_{cd}$,
  \item $a,d$ are on a common $R_1$-leaf $C_{ad}$,
  \item $b,c$ are on a common $R_1$-leaf $C_{bc}$,
  \end{itemize}
\end{defi}

The above definition does not fix the order of the 4 points
$a,b,c,d$ of the quadrilateral.
The following lemma indicates how the order can be fixed. 
\begin{lm}\label{lm:orderInTheQuadrilatere}
  In a $R_0-R_1$ quadrilateral, the following conditions are
  equivalent:
  \begin{enumerate}
  \item   $r(a)=max\{r(a),r(b),r(c),r(d)\}$ 
  \item     $r(c)=min\{r(a),r(b),r(c),r(d)\}$. 
  \item $H_1(a)=H_1(d)<H_1(c)=H_1(b)$ and
    $H_0(a)=H_0(b)>H_0(d)=H_0(c)$. 
  \end{enumerate}
\end{lm}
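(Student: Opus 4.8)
The plan is to linearise everything by passing to the four leaf-values. Set $\alpha := H_0(a) = H_0(b)$, $\gamma := H_0(c) = H_0(d)$, $\delta := H_1(a) = H_1(d)$ and $\epsilon := H_1(b) = H_1(c)$; these four equalities are precisely the defining incidences of the $R_0$--$R_1$ quadrilateral, so the genuine content of item (3) is only the pair of strict inequalities $\delta < \epsilon$ and $\alpha > \gamma$. The key identity I would exploit is $H_0(P) - H_1(P) = (R_0 - R_1)\,r(P)$ for every point $P$, where $R_0 - R_1 > 0$ because $R_1 < R_0$. Solving for the $r$-coordinates gives
\[
r(a) = \frac{\alpha-\delta}{R_0-R_1},\quad r(b) = \frac{\alpha-\epsilon}{R_0-R_1},\quad r(c) = \frac{\gamma-\epsilon}{R_0-R_1},\quad r(d) = \frac{\gamma-\delta}{R_0-R_1}.
\]

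Next I would record the difference identities that drop out immediately,
\[
r(a)-r(b) = r(d)-r(c) = \frac{\epsilon-\delta}{R_0-R_1},\qquad r(a)-r(d) = r(b)-r(c) = \frac{\alpha-\gamma}{R_0-R_1},
\]
so that, since $R_0-R_1>0$, each difference carries the sign of $\epsilon-\delta$ (respectively $\alpha-\gamma$). This yields the sign dictionary $r(a)>r(b) \iff \epsilon>\delta \iff r(d)>r(c)$ and $r(a)>r(d) \iff \alpha>\gamma \iff r(b)>r(c)$, together with $r(a)-r(c) = \big((\epsilon-\delta)+(\alpha-\gamma)\big)/(R_0-R_1)$. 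The implications $(3)\Rightarrow(1)$ and $(3)\Rightarrow(2)$ are then read off directly: under $\delta<\epsilon$ and $\alpha>\gamma$ all four displayed differences are positive, hence $r(a)$ strictly exceeds each of $r(b),r(c),r(d)$ and $r(c)$ lies strictly below each of the others.

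For the converses I would use that $a,b,c,d$ are four distinct points. By Theorem \ref{thm:foliation} each leaf is the graph of $r\mapsto e^{c-Rr}$, so on a single leaf a point is determined by its $r$-coordinate; since $a,b$ share the $R_0$-leaf $C_{ab}$ and $a,d$ share the $R_1$-leaf $C_{ad}$, distinctness forces $r(a)\neq r(b)$ and $r(a)\neq r(d)$ (and likewise $r(c)\neq r(b)$, $r(c)\neq r(d)$). Assuming (1), the extremality relations $r(a)\geq r(b)$ and $r(a)\geq r(d)$ are therefore strict, which by the sign dictionary gives $\epsilon>\delta$ and $\alpha>\gamma$, that is (3); assuming (2), the symmetric argument from $r(c)<r(b)$ and $r(c)<r(d)$ yields the same conclusion.

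The main obstacle — and the only point I would take care to state explicitly rather than leave implicit — is this upgrade of the non-strict extremality inequalities in (1) and (2) to the strict inequalities in (3), which is exactly where the distinctness of the four points of the quadrilateral is used; everything else is routine sign bookkeeping once the substitution $r(P)=(H_0(P)-H_1(P))/(R_0-R_1)$ has turned the leaf-incidence data into linear relations among the $r$-coordinates.
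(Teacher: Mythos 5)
Your proof is correct and takes essentially the same approach as the paper's: the paper's one-line argument rests on exactly your identity $(H_0(p)-H_0(q))-(H_1(p)-H_1(q))=(R_0-R_1)(r_p-r_q)$ together with $R_0>R_1$, and declares the equivalences to ``follow easily''. Your only addition is to spell out the upgrade from non-strict to strict inequalities using the distinctness of the four points and the fact (Theorem \ref{thm:foliation}) that each leaf is a graph over the $r$-coordinate, a detail the paper leaves implicit.
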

\begin{proof}
  For $p=(s_p,r_p)$, $q=(s_q,r_q)$, we have
  $(H_0(p)-H_0(q))-(H_1(p)-H_1(q))=(R_0-R_1)(r_p-r_q)$.  Since
  $R_0>R_1$, the
  equivalences $1\Leftrightarrow 3$ and $2 \Leftrightarrow 3$ follow easily. 
\end{proof}

\begin{lm}[Completion of a triangle to a quadrilatere. ]
  \label{lm:completionOfATriangle}
  Let $a,b,c\in T$ with $H_0(a)=H_0(b)>H_0(c)$ and $H_1(b)=H_1(c)>H_1(a)$.
  Then there exists $d\in T$ such that $a,b,c,d$ is a
  $R_0-R_1$ quadrilateral satisfying the ordering of lemma
  \ref{lm:orderInTheQuadrilatere}. 
\end{lm}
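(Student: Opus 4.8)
The plan is to produce $d$ as the intersection of the $R_0$-leaf through $c$ with the $R_1$-leaf through $a$, and then to do the real work, namely to verify that this intersection point actually lies in $T$. Concretely, I look for $d=(s_d,r_d)$ with $\ln s_d+R_0r_d=H_0(c)$ and $\ln s_d+R_1r_d=H_1(a)$. This is a linear system in the unknowns $(\ln s_d,r_d)$ whose matrix $\left(\begin{smallmatrix}1&R_0\\1&R_1\end{smallmatrix}\right)$ is invertible since $R_0\neq R_1$, so there is a unique solution, automatically with $s_d>0$. By construction $H_0(d)=H_0(c)$ and $H_1(d)=H_1(a)$, so together with the hypotheses $H_0(a)=H_0(b)>H_0(c)$ and $H_1(b)=H_1(c)>H_1(a)$ the point $d$ satisfies exactly the third condition of Lemma \ref{lm:orderInTheQuadrilatere}. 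Thus $a,b,c,d$ is a $R_0-R_1$ quadrilateral with the desired ordering as soon as I show $d\in T$.

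To locate $r_d$, I use the identity $H_0(p)-H_1(p)=(R_0-R_1)r_p$ already noted in the proof of Lemma \ref{lm:orderInTheQuadrilatere}. Subtracting the defining relations of $d$ from those of $c$ and of $a$ gives $(R_0-R_1)(r_d-r_c)=H_1(c)-H_1(a)=H_1(b)-H_1(a)>0$ and $(R_0-R_1)(r_a-r_d)=H_0(a)-H_0(c)>0$; since $R_0>R_1$ this yields $r_c<r_d<r_a$, in particular $r_d>r_c\ge 0$ (equivalently $r_d=r_a+r_c-r_b$). It remains to prove $i(d)\ge 0$. Since $d$ lies on the curve $H_0=H_0(c)$ carrying the leaf $C_{H_0(c)}$, and since by Theorem \ref{thm:foliation} this leaf is contained in $T$ and is the graph of $r\mapsto e^{H_0(c)-R_0r}$ over an interval $r\in[r_1,r_2]$ with $r_2=r(M_\infty(C_{H_0(c)}))$, it suffices to check $r_1\le r_d\le r_2$. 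As $c$ itself lies on the leaf, $r_c\in[r_1,r_2]$, so the lower bound is immediate from $r_d>r_c\ge r_1$.

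The main obstacle is the upper bound $r_d\le r_2=r(M_\infty(C_{H_0(c)}))$, which I obtain from the monotone dependence of the point at infinity on the leaf level. Writing $\rho(\gamma):=r(M_\infty(C_\gamma))$ for the $R_0$-leaf of level $\gamma$, Theorem \ref{thm:foliation} gives $\rho(\gamma)\ge r_{herd}$ and characterises $\rho(\gamma)$ as the unique solution in $[r_{herd},1)$ of $\ln(1-r)+R_0r=\gamma$; as the map $r\mapsto\ln(1-r)+R_0r$ is strictly decreasing on $[r_{herd},1)$ (established in Theorem \ref{thm:foliation}), $\rho$ is strictly decreasing for $\gamma<H_{max}:=H_0(M_{herd})$. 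Now the leaf $C_{H_0(a)}$ contains the two distinct points $a\neq b$, hence is not the singleton $\{M_{herd}\}$, so $H_0(a)<H_{max}$ and a fortiori $H_0(c)<H_{max}$. Therefore $H_0(a)>H_0(c)$ forces $\rho(H_0(a))<\rho(H_0(c))$, and since $a\in C_{H_0(a)}\subset T$ gives $r_a\le \rho(H_0(a))$, I get the chain
\[
r_d<r_a\le \rho(H_0(a))<\rho(H_0(c)),
\]
so $r_d<r_2$. This establishes $r_1\le r_d\le r_2$, hence $d\in T$ (indeed $i(d)>0$), and completes the construction of the quadrilateral with the required ordering.
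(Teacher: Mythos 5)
Your proof is correct, but it follows a genuinely different route from the paper's. Both arguments agree that $d$ must be the intersection of the $R_0$-level set of $c$ with the $R_1$-level set of $a$, and that the whole content of the lemma is the membership $d\in T$. The paper gets this membership for free by a soft two-step intermediate value argument: it first slides left from $c$ along the horizontal line $r=r(c)$ to an auxiliary point $e\in T$ with $H_1(e)=H_1(a)$, observes via the identity $H_0-H_1=(R_0-R_1)r$ that $H_0(e)<H_0(c)<H_0(a)$, and then travels along the connected $R_1$-leaf from $e$ to $a$ until $H_0$ takes the value $H_0(c)$; since the path never leaves the leaf, which lies in $T$ by Theorem \ref{thm:foliation}, the point $d$ is in $T$ automatically. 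You instead solve the $2\times 2$ linear system for $(\ln s_d,r_d)$ to define $d$ explicitly in the half-plane $s>0$, and then do the containment check by hand: the same identity gives you $r_c<r_d<r_a$ (and the pleasant extra formula $r_d=r_a+r_c-r_b$), and the delicate upper bound $r_d\leq r(M_\infty(C_{H_0(c)}))$ comes from the strict monotonicity of the map $\gamma\mapsto r(M_\infty(C_\gamma))$, itself extracted from the characterisation of the points at infinity and the strict decrease of $r\mapsto \ln(1-r)+R_0r$ on $[r_{herd},1)$ in Theorem \ref{thm:foliation}; your observation that $a\neq b$ (forced by $H_1(b)>H_1(a)$) rules out the singleton leaf and gives $H_0(c)<H_0(a)<H_{max}$, which is exactly what the monotonicity argument needs. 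What each approach buys: yours is quantitative, producing $d$ in closed form together with the strict inequalities $r_c<r_d<r_a$ and even $i(d)>0$, information the paper's proof does not provide; the price is that your upper-bound step leans on the $R_0\geq 1$ apparatus ($r_{herd}$, $H_{max}$, item 7 of Theorem \ref{thm:foliation}) — harmless here, since the whole section assumes an epidemic with $R_0>1$, but worth noting that the paper's connectedness argument is insensitive to this and works verbatim for any $0\leq R_1<R_0$.
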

\begin{proof}
  By the intermediate value theorem, there exists $e\in T$ with
  $r(e)=r(c),s(e)\leq s(c)$, and $H_1(e)=H_1(a)$. Then $H_0(e)\leq
  H_0(c)<H_0(a)$. The $R_1$-leaf containing $a$ and $e$ is connected
  by Theorem \ref{thm:foliation}. When joining $e$ to $a$ in the $R_1$-leaf, we find by the
  intermediate value theorem a point $d$ with $H_0(d)=H_0(c)$. 
\end{proof}
The previous lemma recovered $d$ from $a,b,c$. 
We have an other completion lemma which has a similar proof, which
recovers $b$ from $a,c,d$ when $s(c)<s_{herd}$. 

\begin{lm}[Completion of a triangle to a quadrilatere. ]
  Let $a,c,d\in T$ with $H_0(a)>H_0(c)=H_0(d)$ and $H_1(c)>H_1(a)=H_1(d)$.
  Suppose moreover that $s(c)<s_{herd}$. Then there exists $d\in T$ such that $a,b,c,d$ is a
  $R_0,R_1$-quadrilateral satisfying the ordering of lemma
  \ref{lm:orderInTheQuadrilatere}. 
\end{lm}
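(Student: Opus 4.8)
The plan is to produce the missing fourth vertex, which I denote $b$, directly as the intersection of the $R_0$-leaf through $a$ with the $R_1$-leaf through $c$, and then to verify that this intersection really lies in $T$; this last verification is exactly where the hypothesis $s(c)<s_{herd}$ enters. Writing $c_0=H_0(a)$ and $c_1=H_1(c)$, the two level conditions $\ln s+R_0r=c_0$ and $\ln s+R_1r=c_1$ have, since $R_0\neq R_1$, a unique common solution $b=(s_b,r_b)$ in $\{s>0\}\times\RR$, namely $r_b=(c_0-c_1)/(R_0-R_1)$ and $s_b=e^{c_0-R_0r_b}$. By construction $H_0(b)=H_0(a)$ and $H_1(b)=H_1(c)$, so once $b\in T$ is established the four points $a,b,c,d$ satisfy all four incidence relations of an $R_0$-$R_1$ quadrilateral (the remaining two, $H_0(c)=H_0(d)$ and $H_1(a)=H_1(d)$, are given).

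First I would locate $b$ relative to $a$ and $c$ using two elementary monotonicity facts along leaves: on an $R_1$-leaf one has $H_0=H_1+(R_0-R_1)r$, so $H_0$ is strictly increasing in $r$, whereas on an $R_0$-leaf $H_1=H_0-(R_0-R_1)r$ is strictly decreasing in $r$. Since $H_0(b)=H_0(a)>H_0(c)$ and $b,c$ lie on a common $R_1$-leaf, the first fact gives $r_b>r_c$, hence $s_b<s_c$; since $H_1(b)=H_1(c)>H_1(a)$ and $a,b$ lie on a common $R_0$-leaf, the second fact gives $r_b<r_a$, hence $s_b>s_a$. In particular $r_b>r_c\geq 0$, and combining with the hypothesis $s(c)<s_{herd}$ I obtain the chain $s_a<s_b<s_c<s_{herd}=1/R_0$.

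The main point is then to check $i(b)\geq 0$, i.e.\ $s_b+r_b\leq 1$, and this is the only genuinely delicate step. I would phrase it through the function $\tilde g(s)=\ln s+R_0(1-s)$, which gives the value of $H_0$ at the diagonal point $(s,1-s)$: for any $(s,r)$ one has $i\geq 0 \iff H_0(s,r)\leq \tilde g(s)$. Applied to $a\in T$ this reads $H_0(a)\leq \tilde g(s_a)$, while applied to $b$ the desired conclusion $b\in T$ is exactly $H_0(a)=H_0(b)\leq \tilde g(s_b)$. Now $\tilde g'(s)=1/s-R_0>0$ for $s<1/R_0$, so $\tilde g$ is strictly increasing on $(0,s_{herd})$; as $s_a<s_b<s_{herd}$ this yields $\tilde g(s_a)<\tilde g(s_b)$, whence $H_0(a)\leq \tilde g(s_a)<\tilde g(s_b)$ and $i(b)>0$. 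Together with $s_b>0$ and $r_b>0$ this gives $b\in T$.

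Finally I would record the ordering: by construction and the hypotheses, $H_1(a)=H_1(d)<H_1(c)=H_1(b)$ and $H_0(a)=H_0(b)>H_0(c)=H_0(d)$, which is precisely condition (3) of Lemma \ref{lm:orderInTheQuadrilatere}, so $a,b,c,d$ inherits the required ordering (with $r(a)$ maximal and $r(c)$ minimal). The argument parallels the previous completion lemma, where the analogous ``stay inside $T$'' issue was resolved by moving leftward (decreasing $s$), away from the diagonal, and invoking the intermediate value theorem along an $R_1$-leaf; here the construction instead moves toward the diagonal, and the extra hypothesis $s(c)<s_{herd}$ is exactly what forces $s_a<s_b<s_{herd}$ and thereby keeps $b$ below the diagonal via the monotonicity of $\tilde g$ on $(0,s_{herd})$.
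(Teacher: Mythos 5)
Your proof is correct, but it takes a genuinely different route from the paper's. The paper disposes of this lemma with a two-line sketch modelled on Lemma \ref{lm:completionOfATriangle}: project $c$ vertically onto the diagonal $i=0$ to build an auxiliary point $e$, then conclude by the intermediate value theorem along the relevant leaves, using the connectedness of leaves established in Theorem \ref{thm:foliation}; this keeps the argument uniform with the first completion lemma and coordinate-free. You instead solve explicitly for the unique intersection $b$ of the two level curves $H_0=H_0(a)$ and $H_1=H_1(c)$ in $\{s>0\}\times\RR$ (linear equations in $(\ln s,r)$, giving $r_b=(H_0(a)-H_1(c))/(R_0-R_1)$), and then reduce the entire content of the lemma to the single membership check $i(b)\geq 0$, settled via the function $\tilde g(s)=\ln s+R_0(1-s)$ and its strict monotonicity on $]0,s_{herd}[$. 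This buys two things the paper's sketch does not provide: a closed formula for $b$, and a transparent identification of exactly where the hypothesis $s(c)<s_{herd}$ enters (it forces $s_a<s_b<s_{herd}$, the range on which $\tilde g$ increases, so $H_0(b)=H_0(a)\leq\tilde g(s_a)<\tilde g(s_b)$). Since a leaf is by definition the full level set $\{M\in T,\ H(M)=c\}$, the four incidence conditions are automatic once $b\in T$, and your verification of condition (3) of Lemma \ref{lm:orderInTheQuadrilatere} is immediate from the construction, so nothing is missing. One cosmetic point: the strict inequality $s_b<s_c$ fails when $R_1=0$, which is allowed in this section (see Example \ref{confinementConstant}); there the $R_1$-level curve is the vertical line $s=e^{H_1(c)}$ and $s_b=s_c$. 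Your argument only ever uses $s_b\leq s_c<s_{herd}$, so replacing the strict inequality by $s_b\leq s_c$ repairs this without further change.
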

\begin{proof}
  Sketch. From $c$, we draw a vertical line towards the line $i=0$ to
  build a point $e$. We then conclude as in the proof of lemma \ref{lm:completionOfATriangle}. 
\end{proof}

The following lemma compares the duration of the mitigation on two opposite
edges of a quadrilateral. 
\begin{lm}\label{lemmaTimeOnQuadri}
Let $a,b,c,d$ be a quadrilateral satisfying the ordering of lemma
  \ref{lm:orderInTheQuadrilatere} and suppose that $s(a)\geq s_{herd}$. 
Consider the oriented curves
\begin{itemize}
\item $C_{ba}$, $C_{cd}$ the $R_0$-curves joining $a$ to $b$ and $c$
  to $d$
\item $C_{da}$,$C_{cb}$ the $R_1$-curves joining $d$ to $a$ and $c$ to
  $b$. 
\end{itemize}
Then the time spent on $C_{cb}$ is strictly longer  than the time spent on
$C_{da}$.
\\
If the hypothesis $s(a)\geq s_{herd}$ is removed and replaced with
$s(c)\leq s_{herd}$, then the opposite conclusion holds : strictly less time is
spent on $C_{cb}$ than on $C_{da}$. 
\end{lm}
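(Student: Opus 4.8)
The plan is to reparametrize both $R_1$-arcs by the common quantity $\xi=H_0$ and reduce the comparison of traversal times to a pointwise comparison of the function $i$ along the $R_0$-leaves. Write $\alpha=H_0(a)=H_0(b)$, $\alpha'=H_0(c)=H_0(d)$ and $\gamma=H_1(a)=H_1(d)$, $\gamma'=H_1(b)=H_1(c)$, so that by Lemma~\ref{lm:orderInTheQuadrilatere} we have $\alpha>\alpha'$ and $\gamma<\gamma'$. Along either $R_1$-arc, Theorem~\ref{thm:breakingPower} applied with $R(t)=R_1$ gives $\frac{dH_0}{dt}=\mu(R_0-R_1)i>0$, so $H_0$ increases with time from the minimal-$r$ endpoint to the maximal-$r$ endpoint; hence both arcs sweep exactly the same range $\xi=H_0\in[\alpha',\alpha]$ (from $c,d$ up to $a,b$). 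On an $R_1$-leaf $H_1$ is constant, so $(R_0-R_1)r=H_0-H_1$ gives $dr=d\xi/(R_0-R_1)$; combined with the time formula $t=\int dr/(\mu i)$ of Theorem~\ref{thm:computationOfImplicitBeta}, the two traversal times become
\[
t_{da}=\int_{\alpha'}^{\alpha}\frac{d\xi}{\mu(R_0-R_1)\,i_{da}(\xi)},\qquad t_{cb}=\int_{\alpha'}^{\alpha}\frac{d\xi}{\mu(R_0-R_1)\,i_{cb}(\xi)},
\]
where $i_{da}(\xi)$, $i_{cb}(\xi)$ are the values of $i$ on the respective arc at the common level $H_0=\xi$. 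Everything then reduces to comparing $i_{da}(\xi)$ with $i_{cb}(\xi)$.

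For fixed $\xi$, the two points sit on the same $R_0$-leaf $\{H_0=\xi\}$ and are distinguished only by their $H_1$-value $\eta$, the $C_{cb}$-point having the larger $\eta=\gamma'>\gamma$. On this leaf $r=(\xi-\eta)/(R_0-R_1)$ and $\ln s=(R_0\eta-R_1\xi)/(R_0-R_1)$, from which a direct computation gives
\[
\left.\frac{\partial i}{\partial \eta}\right|_{\xi}=\frac{1-R_0 s}{R_0-R_1},
\]
whose sign is that of $s_{herd}-s$ since $R_0>R_1$ and $s_{herd}=1/R_0$. Note also $\partial s/\partial\eta=sR_0/(R_0-R_1)>0$, so increasing $\eta$ increases $s$: at every $\xi$ the $C_{cb}$-point has the larger $s$.

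It remains to fix the sign using the hypotheses, which is the only place the two cases diverge. Assume $s(a)\geq s_{herd}$. Since $a$ carries the maximal $r$ on $C_{da}$ it carries the minimal $s$ there, so at every $\xi$ the value $s$ on $C_{da}$ is $\geq s(a)\geq s_{herd}$; as $\eta$ grows from $\gamma$ to $\gamma'$ the value $s$ only increases, so $s\geq s_{herd}$ throughout the segment joining the two points and $\partial_\eta i\leq 0$ there, strictly negative for $\eta>\gamma$. Hence $i_{cb}(\xi)<i_{da}(\xi)$ for every $\xi$, the integrand for $t_{cb}$ dominates, and $t_{cb}>t_{da}$. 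The case $s(c)\leq s_{herd}$ is symmetric: now $c$ carries the minimal $r$ hence the maximal $s$ on $C_{cb}$, so $s\leq s_{herd}$ throughout the same segment, $\partial_\eta i\geq 0$, giving $i_{cb}(\xi)>i_{da}(\xi)$ and $t_{cb}<t_{da}$. Throughout, the monotonicity of $i$ in $\eta$ keeps the intermediate points in the region $i>0$, so the integrals are well defined. I expect the main obstacle to be bookkeeping rather than analysis: recognizing that the two $R_1$-arcs share the same $H_0$-range, so that $H_0$ is a legitimate common parameter, and keeping straight which corner minimizes or maximizes $s$ on each arc so that the single sign computation $\partial_\eta i=(1-R_0 s)/(R_0-R_1)$ settles both cases.
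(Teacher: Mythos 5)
Your proof is correct and takes essentially the same route as the paper's: both use Theorem \ref{thm:breakingPower} to write the time on each $R_1$-arc as $\int dH_0/\bigl((\beta_0-\beta_1)\,i\bigr)$, match the two arcs point-by-point at equal $H_0$-level, and conclude by comparing $i$ at matched points. The only difference is presentational: where the paper cites Proposition \ref{prop:i_increasing_then_decreasing} for the monotonicity of $i$ along the $R_0$-leaf, you re-derive the same fact explicitly via $\partial i/\partial\eta=(1-R_0 s)/(R_0-R_1)$, which also makes the strictness and the two sign cases more transparent.
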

\begin{proof}
By Theorem  \ref{thm:breakingPower}, the time spent on $C_{cb}$ is
$\int_{C_{cb}}dt(M)=\int_{C_{cb}}\frac{dH_0(M)}{(\beta_0-\beta_1)i(M)}$.
Thus, up to a positive multiplicative constant, the time is measured
by integrating the differential form $\frac{dH_0}{i} $. 
Denote by $\phi:C_{cb}\fd C_{da}$ the diffeomorphism which sends the point $M$ of $C_{cb}$ to  the point $M'$  of $C_{da}$ with $H_0(M)= H_0(M')$. By construction, 
$H_0=H_0\circ \phi$. This implies $dH_0= \phi ^{*}(dH_0)$. Also, using
$\phi$ as a change of variable,
the time spent on $C_{da}$ is 
$$ \int_ {C_{da}}\frac{dH_0(M')}{(\beta_0-\beta_1)i(M')} =
\int_{C_{cb}}\frac{dH_0(M)}{(\beta_0-\beta_1)i(\phi (M))}.$$
The result follows since $i(\phi(M))$ is larger than $i(M)$ by
proposition \ref{prop:i_increasing_then_decreasing}.
\end{proof}

\begin{prop}\label{reduction_to_cross}
  Let  $\beta \in \fxc^{r_\infty}\setminus  \fxc^{r_\infty,cross,0}$ and suppose
  that   $s(M_0)\geq s_{herd}$. Then there exists
  $\beta'\in  \fxc^{r_\infty,cross,0}$ with $d(\beta')< d(\beta)$. 
\end{prop}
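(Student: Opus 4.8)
The plan is to exploit the fact that, for a mitigation ending in free evolution, the final ratio is governed entirely by the energy: by Theorem \ref{thm:damageIncreasesWithEnergy} the quantity $r_\infty$ is an increasing function of the value of $h_0$ reached when the mitigation is definitively relaxed, and that value is $h_0(M_0)$ minus the total energy dissipated along the mitigation arcs. Hence \emph{fixing $r_\infty$ is equivalent to fixing the total dissipation} $\sum_i\big(H_0(\text{start}_i)-H_0(\text{end}_i)\big)$ along the $R_1$-arcs. On the other hand, by Theorem \ref{thm:breakingPower} the time spent on a mitigation arc equals $\int \frac{dH_0}{(\beta_0-\beta_1)\,i}$, so for a fixed amount of dissipated energy the duration is smaller where $i$ is larger, i.e. near $s=s_{herd}$. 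This is the mechanism forcing the duration-optimal control to be a single arc that crosses $s=s_{herd}$, and it is what I would turn into a sequence of surgeries that each preserve $r_\infty$ while strictly shortening the mitigation.

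First I would reduce the order. Assume $\beta$ has at least two mitigation arcs, say $P_1\to Q_1$ followed, after a free $R_0$-arc $Q_1\to P_2$, by $P_2\to Q_2$. Applying the completion Lemma \ref{lm:completionOfATriangle} to $P_1,Q_1,P_2$ produces a point $b$ so that these four points form an $R_0$--$R_1$ quadrilateral ordered as in Lemma \ref{lm:orderInTheQuadrilatere}. Replacing the block ``[mitigate $P_1\to Q_1$][free $Q_1\to P_2$]'' by ``[free $P_1\to b$][mitigate $b\to P_2$]'' leaves every $H_0$-level unchanged, hence preserves $r_\infty$, while making $b\to P_2$ contiguous with $P_2\to Q_2$: the two merge into a single arc $b\to Q_2$, lowering the order by one. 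By Lemma \ref{lemmaTimeOnQuadri}, the new (lower) $R_1$-edge of the quadrilateral is traversed in strictly less time than the old one, so the total duration strictly decreases. Iterating collapses $\beta$ to an order-$0$ control with the same $r_\infty$ and strictly smaller duration.

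Once the control is of order $0$, with a single arc $P\to Q$ that is not crossing, the arc lies entirely on one side of $s=s_{herd}$. I would then slide it toward the herd line by the same device: complete a quadrilateral whose opposite $R_1$-edge is a copy of $P\to Q$ placed closer to $s=s_{herd}$ and carrying the same $H_0$-drop, and invoke Lemma \ref{lemmaTimeOnQuadri} again to gain time at each slide. Since $s(M_0)\ge s_{herd}$ while $s_\infty<s_{herd}$ (Theorem \ref{thm:damageIncreasesWithEnergy}), the arc can be slid until it straddles $s=s_{herd}$, yielding the desired $\beta'\in\fxc^{r_\infty,cross,0}$ with $d(\beta')<d(\beta)$.

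The delicate point, and the main obstacle, is the bookkeeping around the herd line. Lemma \ref{lemmaTimeOnQuadri} gives a strict time gain in one direction when the top vertex satisfies $s\ge s_{herd}$ and in the \emph{opposite} direction when the bottom vertex satisfies $s\le s_{herd}$. Consequently each surgery must be oriented correctly: arcs sitting in $s>s_{herd}$ must be pushed later (down toward herd), while arcs in $s<s_{herd}$ must be pushed earlier (up toward herd). One must also check that the completion lemmas genuinely produce the fourth vertex inside $T$ in each configuration, which is exactly why the second completion lemma, valid when $s(c)<s_{herd}$, is needed alongside Lemma \ref{lm:completionOfATriangle}. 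Verifying that the hypothesis $s(a)\ge s_{herd}$ (resp. $s(c)\le s_{herd}$) of the time lemma holds for the relevant vertex at each step, and that the process terminates in a single genuinely crossing arc, is where the real work lies; the energy and duration formulas recorded above guarantee throughout that $r_\infty$ is preserved and the duration strictly drops.
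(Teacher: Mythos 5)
Your toolkit is exactly the paper's---quadrilateral surgeries built from the completion Lemmas \ref{lm:completionOfATriangle} and its variant, the time comparison of Lemma \ref{lemmaTimeOnQuadri}, and the observation that keeping each arc's endpoints on the same pair of $R_0$-leaves preserves the dissipation and hence $r_\infty$---but your sequencing, ``first collapse to order $0$, then slide the single arc to the herd line,'' has a genuine gap: the order reduction deadlocks whenever the mitigation arcs sit on both sides of the line $s=s_{herd}$, i.e.\ when the crossing happens during a \emph{free} period. Concretely, suppose your merges leave you with two arcs, the earlier one ending at a point with $s>s_{herd}$ and the later one starting at a point with $s<s_{herd}$. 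To merge by pushing the earlier arc later, Lemma \ref{lemmaTimeOnQuadri} requires the vertex $a$ of the quadrilateral (here the start of the later arc) to satisfy $s(a)\geq s_{herd}$---false. To merge by pushing the later arc earlier, the symmetric version requires the vertex $c$ (here the end of the earlier arc) to satisfy $s(c)\leq s_{herd}$---also false. Neither orientation yields a time gain, so your induction cannot reach order $0$ in this configuration, and the intermediate state ``order $0$ but not crossing'' on which your second phase relies may be unreachable by merges alone. Your closing caveat correctly flags the orientation bookkeeping as the delicate point, but it does not resolve this obstruction; as stated, the claim ``iterating collapses $\beta$ to an order-$0$ control'' is false.

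The paper escapes this by reversing your two phases. It first performs the slide: the arc adjacent to the crossing is moved along its neighbouring free $R_0$-leaf until its endpoint lands exactly on $s=s_{herd}$ (a legitimate surgery, since the relevant vertex then satisfies $s(a)=s_{herd}$, the boundary case of Lemma \ref{lemmaTimeOnQuadri}, whose conclusion is still a strict time gain), which places the control in $\fxc^{cross}$. With this crossing arc as an anchor, every subsequent merge is automatically well oriented: arcs to its left are absorbed into its start, where the top vertex $N_i$ satisfies $s(N_i)\geq s_{herd}$, and arcs to its right into its end, where the bottom vertex satisfies $s\leq s_{herd}$, so Lemma \ref{lemmaTimeOnQuadri} or its symmetric form applies at each step and the order drops by one per merge. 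To salvage your scheme you would have to interleave: whenever the two-sided deadlock appears, insert a partial slide making one arc touch the herd line before the final merge---at which point you have reconstructed the paper's argument.
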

\begin{proof}
Suppose that $\beta \notin \fxc^{cross}$.   We want to construct
$\beta_1\in \fxc^{cross}$ with the same $r_{\infty}$ and a smaller
duration using lemma \ref{lemmaTimeOnQuadri}. The trajectory
of the solution associated to $\beta$ is characterised by a sequence
of points $M_0=M(0), N_1=M(a_0), M_1=M(b_0),N_2=M(a_1),M_2=M(b_1),
\dots, N_{k+1}=M(a_k),M_{k+1}=M(b_k)$ where:
\begin{itemize}
\item $N_i,M_i$ are joined by a $R_1$-leaf, 
\item $M_i,N_{i+1}$ are joined by a $R_0$-leaf. 
\end{itemize}
Since $\beta\notin \fxc^{cross}$,
$s(N_1)<s_{herd}$ or $s(M_1)>s_{herd}$.
By symmetry, we consider the ``right'' case and we suppose
$s(M_1)>s_{herd}$.  We take $i$ maximum such that 
$s(M_i)>s_{herd}$.   There
exists $t$ with $s(t)=s_{herd}$. We apply lemma
\ref{lm:completionOfATriangle} to
$a=M(t)$, $b=M_i$, $c=N_i$ which gives a point $d$ such that $a$ and
$d$ are on common $R_1$-leaf, and $d$ is on the $R_0$-leaf common to
$c$ and $M_{i-1}$. We consider $\beta_1$ the control associated to the
trajectory with characteristic points $M_0,N_1,M_1,\dots,
N_{i-1},M_{i-1},d,a=M(t),N_{i+1},M_{i+1}...$. By 
lemma \ref{lemmaTimeOnQuadri}, we have $d(\beta_1)<d(\beta)$. 

Replacing $\beta$ by $\beta_1$, we may suppose that $\beta\in
\fxc^{cross}$. If $k=0$, we are done so we suppose $k>0$ and we will construct
$\beta_2\in \fxc^{cross}$ with $r_\infty(\beta_2)=r_\infty(\beta)$, 
$d(\beta_2)<d(\beta)$ and $\beta_2$ has a smaller $k$ than $\beta$.
We take again the initial notations where $\beta$ is characterised by the
points $M_0,N_1,\dots, N_{k+1},M_{k+1}$. 
Since  $\beta \in \fxc^{cross}$, there exists $i\geq 1$ with $s(N_i)\geq
s_{herd}$ and $s(M_i)\leq s_{herd}$. Since $k+1\neq 1$, we have $i\neq 1$ or
$i\neq k+1$. By symmetry, we suppose $i>1$. We apply lemma   \ref{lm:completionOfATriangle} to
$a=N_i$, $b=M_{i-1}$, $c=N_{i-1}$,  which gives a point $d$. The point 
$d$ is on the $R_1$-leaf containing $N_i$,$M_i$, and on the $R_0$-leaf common to
$N_{i-1}$  and $M_{i-2}$.   We consider $\beta_2$ the control associated to the
trajectory with characteristic points $M_0,N_1,M_1,\dots,
N_{i-2},M_{i-2},d,M_i,N_{i+1},M_{i+1}...$. By 
lemma \ref{lemmaTimeOnQuadri}, we have $d(\beta_2)<d(\beta)$. 
\end{proof}

\begin{prop}\label{reduction_to_imm}
  Let  $\beta \in \fxc^{r_\infty}\setminus    \fxc^{r_\infty,imm,0}$ and suppose
  that   $s(M_0)< s_{herd}$. Then there exists 
  $\beta'\in  \fxc^{r_\infty,imm,0}$ with $d(\beta')< d(\beta)$. 
\end{prop}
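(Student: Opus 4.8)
The plan is to follow the proof of Proposition \ref{reduction_to_cross} almost verbatim, with the roles of the two sides of a quadrilateral reversed. The key observation is that the hypothesis $s(M_0)<s_{herd}$, together with the monotonicity of $s$ (Theorem \ref{thm:descriptionOfTheSolutions}), forces $s(t)<s_{herd}$ along the \emph{entire} trajectory. Hence every quadrilateral we assemble from characteristic points has its lowest vertex $c$ satisfying $s(c)<s_{herd}$, so the \emph{second} conclusion of Lemma \ref{lemmaTimeOnQuadri} is the one available: strictly less time is spent on the ``left'' mitigation edge $C_{cb}$ than on the ``right'' edge $C_{da}$. This is exactly what lets us push mitigations earlier while shortening their total length. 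As in Proposition \ref{reduction_to_cross}, I write the trajectory of $\beta$ via characteristic points $M_0=M(0),N_1=M(a_0),M_1=M(b_0),N_2=M(a_1),\dots$, and I will use repeatedly that on a mitigation edge $N_i\to M_i$ the quantity $H_0$ increases while $H_1$ is constant, whereas on a free edge $M_i\to N_{i+1}$ the quantity $H_1$ decreases while $H_0$ is constant.

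\emph{Step 1 (make the control immediate).} Suppose $a_0>0$, so the opening edge $M_0\to N_1$ is free. I apply the completion lemma with hypothesis $s(c)<s_{herd}$ (the second completion lemma) to the triangle $a=M_1$, $c=M_0$, $d=N_1$; the inequalities $H_0(a)>H_0(c)=H_0(d)$ and $H_1(c)>H_1(a)=H_1(d)$ follow from the monotonicity recalled above, and $s(c)=s(M_0)<s_{herd}$ is the standing hypothesis. This yields a vertex $b$ with $H_1(b)=H_1(M_0)$ and $H_0(b)=H_0(M_1)$. I then replace the opening $M_0\to N_1\to M_1$ by $M_0\to b\to N_2$, where $M_0\to b$ is an $R_1$-edge (mitigation starting at $t=0$) and $b\to N_2$ is an $R_0$-edge, legitimate because $b$ lies on the $R_0$-leaf through $M_1$ and $N_2$ with $r(M_0)<r(b)<r(M_1)<r(N_2)$. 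All later characteristic points are left untouched. The new control is immediate, lies in $\fxc^{r_\infty}$ (the two trajectories agree from $N_2$ onward, hence share $r_\infty$), and by the second part of Lemma \ref{lemmaTimeOnQuadri} the new first mitigation ($C_{cb}$, from $M_0$ to $b$) is strictly shorter than the old one ($C_{da}$, from $N_1$ to $M_1$), so the duration strictly drops.

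\emph{Step 2 (lower the order, keeping immediacy).} Now assume $\beta$ is immediate of order $k>0$ and write its first two interventions as $M_0\to M_1\to N_2\to M_2$. I apply the same second completion lemma to $a=M_2$, $c=M_1$, $d=N_2$ (again $H_0(a)>H_0(c)=H_0(d)$, $H_1(c)>H_1(a)=H_1(d)$ by monotonicity, and $s(c)=s(M_1)<s_{herd}$), obtaining a vertex $P$ with $H_1(P)=H_1(M_0)$ and $H_0(P)=H_0(M_2)$. Replacing $M_0\to M_1\to N_2\to M_2$ by the single mitigation $M_0\to P\to M_2$ merges the first two interventions, drops the order by one, keeps the control immediate (still anchored at $M_0$) and preserves $r_\infty$ (the trajectory still passes through $M_2$). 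The ordered quadrilateral here is $a=M_2,\,b=P,\,c=M_1,\,d=N_2$, and the merge trades the old edge $C_{da}$ (from $N_2$ to $M_2$) for the new edge $C_{cb}$ (from $M_1$ to $P$); by the second part of Lemma \ref{lemmaTimeOnQuadri} the latter is strictly shorter, so the duration decreases again. Iterating this merge at most $k$ times produces a control $\beta'\in\fxc^{r_\infty,imm,0}$. Since $\beta\notin\fxc^{r_\infty,imm,0}$, at least one of the two steps is genuinely performed, whence $d(\beta')<d(\beta)$.

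\emph{Where the difficulty lies.} Essentially all the content is packaged in the two completion lemmas and in Lemma \ref{lemmaTimeOnQuadri}; the single point that must be monitored at every stage is that the vertex playing the role of $c$ satisfies $s(c)<s_{herd}$. This is precisely the place where $s(M_0)<s_{herd}$ is indispensable: without it neither the second completion lemma nor the favorable (second) direction of Lemma \ref{lemmaTimeOnQuadri} would be available, and the rearrangement would move mitigations the wrong way. Everything else — checking the $H_0/H_1$ inequalities from the mitigation/free monotonicities and confirming that the reshuffled point sequences are honest trajectories with $r$ strictly increasing — is routine bookkeeping.
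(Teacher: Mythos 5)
Your proof is correct and is precisely the argument the paper intends: its own proof of Proposition \ref{reduction_to_imm} is the single line ``Similar to the proof of Proposition \ref{reduction_to_cross}'', and you have carried out exactly that mirror image, substituting the second completion lemma (hypothesis $s(c)<s_{herd}$) and the second branch of Lemma \ref{lemmaTimeOnQuadri}, with the monotonicity of $s$ guaranteeing $s<s_{herd}$ along the whole trajectory so that the required hypothesis holds at every stage. Your two-step structure (first shift the opening mitigation to $t=0$, then merge later mitigations into the first, each merge strictly shortening the duration) is a faithful and correctly verified elaboration of the paper's intended argument.
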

\begin{proof}
  Similar to the proof of proposition \ref{reduction_to_cross}. 
\end{proof}

We are now ready to prove Theorem
\ref{thm:optimal_constant_mitigations}.
\begin{proof}
  Item 1) is proved as in proposition \ref{pre-thm}.

  Items 6,7,8 are
  direct consequences of proposition \ref{reduction_to_cross}
  and \ref{reduction_to_imm}.

  For item 3, if $s(M_0)\geq s_{herd}$, it follows from
proposition \ref{reduction_to_cross} and proposition \ref{pre-thm}
  that $\inf_{\beta \in
    \fxc^{r_\infty}}d(\beta)=\min _{\beta \in
    \fxc^{r_\infty},0,cross}d(\beta)$.  If $s(M_0)< s_{herd}$,
proposition \ref{reduction_to_imm} shows that $\inf_{\beta \in
    \fxc^{r_\infty}}d(\beta)=\inf _{\beta' \in
    \fxc^{r_\infty},0,imm}d(\beta')$. But there is a unique
  $\beta' \in    \fxc^{{r_\infty},0,imm}$, so the infimum is
a minimum. 

Item 4 is easy.

  Let $D\subset [0,+\infty[$ be the set of elements $d$ such that
  $\fxc_{d}$ contains an optimal control $\beta$. Let $B\subset
  [0,1]$ be the set of elements $r_{\infty}$ such that
  $\fxc^{r_\infty}$ contains an optimal control.  By item 3 and 4,
  we have $B=]r_\infty(\beta_1),r_\infty(\beta_0)]$. 
  Let $\beta
  \in \fxc$. If $\beta$
  is optimal ( equivalently in  $\fxc^{r_\infty(\beta)}$ or
  in $\fxc_{d(\beta)}$ ), then $d(\beta)\in D$, $r_\infty
  (\beta) \in B$, and
  $d(r_\infty(\beta))=d(\beta)$ and
  $r_{\infty}(d(\beta))=r_\infty(\beta)$.
  This proves that the functions $r_{\infty}$ and $d$ in item 5) are
  mutually inverse one-to-one correspondences. If $s(M_0)\geq
  s_{herd}$, then $\beta\in \fxc^{r_{\infty}}$ is optimal implies that
  $\beta \in  \fxc^{r_{\infty},0,cross}$. Thus $d(\beta)$ has
  already been computed in proposition \ref{pre-thm}. In particular,
  we have seen in proposition  \ref{pre-thm} that $d(B)=[0,+\infty[$.
  This proves item 5 and since $D=d(B)=[0,+\infty[$, item 2 is proved
  too in the case $s(M_0)\geq
  s_{herd}$.  If $s(M_0)\leq s_{herd}$, the proof is similar using the analogous of
  proposition \ref{pre-thm}. The analogous statement consists in replacing $\fxc^{cross}$ with
  $\fxc^{imm}$. The proof of the analogous proposition is basically
  the same. The main change is that 
  item 2 is
  trivial when  $s(M_0)\leq s_{herd}$ since $\fxc_{d,0}^{imm}$ is a
  single point. 
\end{proof}

The following example is a confirmation of Theorem
\ref{thm:optimal_constant_mitigations}
in a
simple case where numeric computation is possible. It
considers the case of an absolute
lockdown, \textit{i.e.}. $\beta_1=0$.

\begin{ex}\label{confinementConstant}
  Let $\beta_1=0$. We consider the case $s(M_0)>s_{herd}$ and a one
  step mitigation of a fixed duration $d$. Then the optimum
  control minimising $r_\infty$ is in
  $\fxc^{cross}$. In other words, the value of $s$ is constant with
  $s=s_{herd}$
  during the mitigation. 
\end{ex}
\begin{proof}
When $\beta_1=0$ and $k=0$, the point $M(t)$
representing the epidemic starts at $t=0$ on the $R_0$-leaf
defined by $\ln s +R_0 r=c$ with $c=\ln s(M_0)+R_0r(M_0)$. At some time $t$,  we reach a point
$(s_1,r_1)$ with $\ln s_1+R_0r_1=c$
where the mitigation starts. At the end of the mitigation,
solving directly the sir system, we get
$M(t)=(s_1,r_2)$, with $r_2= r_1+i_1 (1-e^{-\mu d})$ and
$i_1=1-r_1-s_1$. To minimise $r_\infty$, we minimise the energy of $M(t)$, or equivalently
we maximise $H(s_1,r_2)= \ln s_1+R_0 r_2= \ln s_1 + R_0 r_1 + R_0
i_1(1-e^{-\mu d})=c+ R_0
i_1(1-e^{-\mu d})
$. The optimum is thus obtained when $i_1$ is maximal on the $R_0$-curve, thus
$s_1=s_{herd}$ by proposition \ref{prop:i_increasing_then_decreasing}.

\end{proof}

\subsection{Strategies with hospital saturation}

In this section, we consider situations where the health system 
becomes saturated when the epidemic evolves naturally from a
point $M_0$. Some mitigations are
triggered to avoid the saturation that would naturally occur. 
Mathematically, there is a share $i_{hosp}\in [0,1]$  of infected people
corresponding to a completely full but not overloaded health system. Without mitigations, the
maximum $i_{max}$ of $i(t)$ would  satisfy $i_{max}>i_{hosp}$ and the
system would be overloaded.  
To avoid saturation, a mitigation is triggered when some
level $i(t)=i_{trig}$ is reached, with $i_{trig}\leq i_{hosp}$.
The ratios $i_{hosp}$ and $i_{max}$ are given. The ratio $i_{trig}$
is chosen and this section discusses the choice of $i_{trig}$  to get
an efficient strategy. 

In the context of monitoring the charge of the health system,
some people propose to react sooner, while others
propose to wait longer before launching a mitigation. 
The first choice corresponds to
$i_{trig}<<i_{hosp}$ and the second choice to
$i_{trig}=i_{hosp}-\epsilon$
with $\epsilon\geq 0$ a small number.  Is it preferable to react
sooner or to wait ? What is the best value for $i_{trig}$ ? 

We consider two scenarios, the first one without rebound, the
second one allowing a rebound of the number of infected people
when the mitigation stops. In both scenarios, we suppose that
at $t=0$, the triggered
level is not  passed ($i(M_0)\leq i_{trig}$ ), and that after some time, 
the saturation would occur in the absence of  mitigation ($i_{max}>i_{hosp}$). This implies
in particular that $s(M_0)>s_{herd}$ (since $i$ is a decreasing
function of time if $s\leq s_{herd}$) and $i(M_0)>0$. We will assume that all these
assumptions hold in this section.  In summary $0<i(M_0)\leq
i_{trig}\leq i_{hosp}< i_{max}$, and $s(M_0)>s_{herd}$.  We will see
later that we can replace the condition $i_{trig}\in
[i(M_0),i_{hosp}]$ with $i_{trig}\in ]0,i_{hosp}]$ when $H(M_0)\geq
0$ with a suitable change of $M_0$
(Remark \ref{rem:low_itrig_allowed}).

\subsubsection{Scenario without rebound}

\begin{figure}[h]
\centering
\includegraphics[scale=.3]{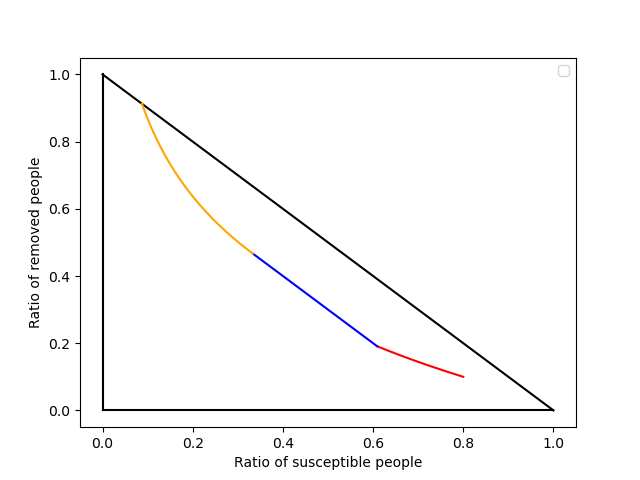}
\caption{Scenario with $M_0=(0.8,0.1)$, $i_{trig}=0.2$.}
\label{sansRebondM0NonTrivial}
\end{figure}

\begin{defi}
  The scenario without rebound starts at $t=0$ at a point $M_0$ with
  $i(M_0)>0$ and $s(M_0)>s_{herd}$. It is defined as follows. 
  \begin{itemize}
  \item The scenario depends on $i_{trig}\in [i(M_0),i_{hosp}]$ and we denote by
    $\beta_{i_{trig}}$ the corresponding control.
  \item The scenario is divided in three stages $t\in [0,t_{trig}]$,
    $t\in [t_{trig},t_{relax}]$, $t\geq t_{relax}$.
  \item For $t\leq t_{trig}$, the disease evolves with no constraint:
    $( \beta_{trig}(t)=\beta_0)$ and $i(t)\leq i_{trig}$.
\item For   $t \in [t_{trig},t_{relax}]$, the control $\beta_{trig}(t)$ (
  hence the mitigation policies)  is calibrated so that $i(t)=i_{trig}$
  is constant  on this period. 
  \item For $t\geq t_{relax}$, $\beta_{trig}(t)=\beta_0$ : all
    constraints are removed.
   \item $t_{trig}$ is defined as the smallest $t$ such that
    $i(t)=i_{trig}$ ( the mitigation starts when the critical level is
    reached). 
  \item $t_{relax}$ is characterised by $s(t_{relax})=s_{herd}$ ( the
    mitigation is removed when $i(t)$ decreases naturally, so that
    $i(t)$ will never exceed the critical level $i_{trig}$). 
  \end{itemize}
\end{defi}

Since $M_0$ can be replaced with an other point $M_0'$ with smaller
$i$ without changing the problem, we can choose any $i_{trig}\in
]0,i_{hosp}]$. Let us formalise this remark. 
\begin{rem} \label{rem:low_itrig_allowed}
  We asked for the condition $i_{trig}\in [ i(M_0),i_{hosp}]$ with
  $i(M_0)>0$. In fact, we can consider $i_{trig}\in ]0,i_{hosp}]$ if
  $H(M_0)\geq 0$ using a suitable change of $M_0$. 
  
  Indeed, if $M'_0$ and $M_0$ are on the same $R_0$-curve with
  $s(M'_0)>s(M_0)$ and $i(M_0)>0$, then the  epidemic starting at
  $M'_0$ goes through $M_0$. It follows from the above description of
  the scenario that the mitigation is the
  same. The only difference between the two situations is that the initial
  part ( before the mitigation occurs, in red on figure \ref{sansRebondM0NonTrivial})
  is longer when the initial point is $M'_0$. In particular,
  when $i_{trig}$ is fixed, the cost and the $r_{\infty}$ of the
  strategy is the same if we replace $M_0$ by $M'_0$. 

  As a consequence the inequality $i_{trig}\geq i(M_0)$ is not
  necessary; the inequality $i_{trig}\geq i(M'_0)$ for some $M'_0$ as
  above is enough. If $H(M_0)\geq 0$, then  $i(M_{init})=0$ in Theorem
  \ref{thm:foliation}, item5. Thus  $i(M'_0)$ is arbitrarily small and
  the required condition on $i_{trig}$ is 
  $i_{trig}>inf(i(M'_0))=0$.

  Since $H(1,0)=0$ and since mitigations increase $H$, the condition
  $H(M_0)\geq 0$ is true if $M_0$ is a situation obtained
  after an epidemic has started and possibly some mitigations occurred. 
\end{rem}

\begin{thm}\label{thm:witoutRebound}
  Scenario without rebound.
  \begin{itemize}
  \item The share  $i(t)$ of infected people is increasing for $t\leq t_{trig}$, constant for
    $t\in [t_{trig},t_{relax}]$, decreasing for $t\geq t_{relax}$. 
  \item $R(t):=\frac{\beta(t)}{\mu}$ has constant value $R(t)=R_0$  for $t\leq t_{trig}$,
    is increasing continuously for
    $t\in [t_{trig},t_{relax}]$ from $\frac{1}{s(t_{trig})} $ to $R_0$ ,
    is constant equals to $R_0$ for $t\geq t_{relax}$. In particular,
    $R(t)$ is continuous except at $t=t_{trig}$.  
  \item $r_{\infty}$ is a strictly increasing function of the
    parameter $i_{trig}$. 
  \item If $i'_{trig}<i_{trig}$, then for every  cost function $c$,
    $c(\beta_{i'_{trig}})>c(\beta_{i_{trig}})$. In other words, a small
      $i_{trig}$ is universally costly.
    \item Suppose that both $i_{trig}$ and $M_0$ are varying. The duration $t_{relax}-t_{i_{trig}}$ of the mitigation tends to $+\infty$ when $i_{trig}
      $ tends to $0$ and $M_0$ stays in a region $s\geq s_{min}$ with
      $s_{min}>s_{herd}$. 
  \end{itemize}
\end{thm}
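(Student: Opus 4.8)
The plan is to establish the dynamical picture first, from which items 1 and 2 fall out, and then to read off items 3--5 from it. On $[0,t_{trig}]$ the control equals $\beta_0$, so the trajectory evolves freely on the $R_0$-leaf through $M_0$. Since $i_{trig}\le i_{hosp}<i_{max}$ and $i_{max}$ is attained exactly at $s=s_{herd}$ (Proposition \ref{prop:i_increasing_then_decreasing}), the level $i_{trig}$ is first reached on the \emph{increasing} branch, giving $s(t_{trig})>s_{herd}$ and hence $s(t)>s_{herd}$ for all $t\le t_{trig}$; by Proposition \ref{prop:i_increasing_then_decreasing}, $i$ is strictly increasing there. On $[t_{trig},t_{relax}]$ the defining requirement $i(t)\equiv i_{trig}$ together with $i'=(\beta s-\mu)i$ and $i_{trig}>0$ forces $\beta(t)s(t)=\mu$, i.e. $R(t)=1/s(t)$; substituting $\beta s=\mu$ into $s'=-\beta s i$ gives $s'=-\mu i_{trig}<0$, so $s$ strictly decreases from $s(t_{trig})$ to $s_{herd}$ and $R=1/s$ increases continuously from $1/s(t_{trig})$ to $1/s_{herd}=R_0$. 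At $t_{trig}$ the left value $R_0$ and right value $1/s(t_{trig})<R_0$ differ (the single discontinuity), while at $t_{relax}$ both one-sided values equal $R_0$; and on $]t_{relax},\infty[$ the control is again $\beta_0$ with $s<s_{herd}$, so $i$ is strictly decreasing. This proves items 1 and 2.

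For item 3 I would use that $i\equiv i_{trig}$ on the mitigation and $s(t_{relax})=s_{herd}$, so the relaxation point is $M(t_{relax})=(s_{herd},\,1-s_{herd}-i_{trig})$. Substituting into $h_0(s,r)=-\ln s-R_0r-\ln R_0+R_0-1$ and using $s_{herd}=1/R_0$, $R_0 s_{herd}=1$, everything collapses to $h_0(M(t_{relax}))=R_0\,i_{trig}$. After $t_{relax}$ the evolution is free, so $r_\infty$ depends only on this energy and is a strictly increasing function of it by Theorem \ref{thm:damageIncreasesWithEnergy} (applicable since $i(M(t_{relax}))=i_{trig}>0$). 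As $R_0 i_{trig}$ is strictly increasing in $i_{trig}$, so is $r_\infty$.

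For item 4, fix $i'_{trig}<i_{trig}$ and set $\beta_2=\beta_{i'_{trig}}$, $\beta_1=\beta_{i_{trig}}$. On each mitigation $\beta=\mu/s$ with $s$ running down to $s_{herd}$, and since a smaller trigger is reached higher on the increasing branch, $s_{trig}(i'_{trig})>s_{trig}(i_{trig})$; thus the constraint interval $I_2$ of $\beta_2$ sweeps the whole $s$-range of $I_1$ \emph{plus} an extra initial range $[s_{trig}(i_{trig}),s_{trig}(i'_{trig})]$. I would define $\phi\colon I_1\to I_2$ by matching $s$-values, $s_2(\phi(t))=s_1(t)$, so that $\beta_2(\phi(t))=\mu/s_1(t)=\beta_1(t)$, and differentiating $s_1=s_2\circ\phi$ with $s_i'=-\mu\,i^{(\cdot)}_{trig}$ yields $\phi'=i_{trig}/i'_{trig}>1$. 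Running the substitution $u=\phi(t)$ as in Theorem \ref{thm:dominatedImpliesInferior} over $\phi(I_1)$ gives $\int_{\phi(I_1)}c(\beta_2)=\frac{i_{trig}}{i'_{trig}}\,c(\beta_1)$, and the leftover interval $I_2\setminus\phi(I_1)$ (where $\beta_2<\beta_0$) contributes a nonnegative amount; hence $c(\beta_2)\ge\frac{i_{trig}}{i'_{trig}}c(\beta_1)\ge c(\beta_1)$ for every cost function, strictly so for every $c$ that does not vanish identically on the attained constraint values (in particular every strictly decreasing one), the strictness coming from the factor $i_{trig}/i'_{trig}>1$ on the overlap, or from the extra interval otherwise.

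For item 5, since $s'=-\mu i_{trig}$ on the mitigation and $s$ runs from $s_{trig}$ to $s_{herd}$, the duration is exactly $(s_{trig}-s_{herd})/(\mu\,i_{trig})$. Letting $i_{trig}\to 0$ (admitting arbitrarily small triggers via Remark \ref{rem:low_itrig_allowed}), the trigger recedes up the increasing branch toward the entry point $M_{init}$ of the leaf, so $s_{trig}\to s(M_{init})\ge s(M_0)\ge s_{min}>s_{herd}$; the numerator stays bounded below by $s_{min}-s_{herd}>0$ while the denominator tends to $0$, forcing the duration to $+\infty$. The hypothesis $s(M_0)\ge s_{min}>s_{herd}$ is precisely what prevents the leaf from degenerating to $M_{herd}$ and the numerator from collapsing. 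I expect the main obstacle to be item 4: the natural $s$-matching map $\phi$ is \emph{not} onto $I_2$, so Theorem \ref{thm:dominatedImpliesInferior} cannot be quoted verbatim but must be applied to a subinterval and supplemented by the extra constraint piece, and one must argue the strictness carefully since weakly decreasing cost functions can make the inequality non-strict.
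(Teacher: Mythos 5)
Your proposal is correct, and for items 1, 2, 3 and 5 it is essentially the paper's own argument in mildly different clothing: you derive $\beta s=\mu$ on the plateau directly from $i'=(\beta s-\mu)i$, where the paper quotes the formula of Theorem \ref{thm:computationOfImplicitBeta} together with $s'=-r'$; for item 3 your explicit value $h_0(M(t_{relax}))=R_0\,i_{trig}$ is a pleasant sharpening of the paper's observation that $h_0$ is monotone in $r_{relax}=1-s_{herd}-i_{trig}$; and for item 5 your duration formula $(s_{trig}-s_{herd})/(\mu\,i_{trig})$ equals the paper's $(r(t_{relax})-r(t_{trig}))/(\mu\,i_{trig})$ because $r+s$ is constant on the plateau, while your lower bound on the numerator via $M_{init}$ plays the same role as the paper's slope estimate $\frac{dr}{ds}=-\frac{1}{R_0s}\geq-\frac{1}{R_0s_{min}}>-1$ (both sketches need, and admit, the same small uniformity argument over the varying leaves when $M_0$ varies, so you are at the paper's level of rigor there). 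The genuine divergence is item 4. The paper matches the two plateau segments by the \emph{affine map sending trigger to trigger and relax to relax}, which covers the whole constraint interval of $\beta_{i'_{trig}}$ and leaves it to verify pointwise $\beta_{i'_{trig}}(\phi(t))\leq\beta_{i_{trig}}(t)$ (geometrically, $s$ is larger on the dilated segment) and $\phi'>1$ (from $dt=dr/(\mu i)$ and $dr'>dr$), so that Theorem \ref{thm:dominatedImpliesInferior} applies verbatim. You instead match $s$-values, which gives exact equality $\beta_2(\phi(t))=\beta_1(t)$ and the constant dilation $\phi'=i_{trig}/i'_{trig}>1$, at the cost of covering only a terminal subinterval of $I_2$; your supplement by the nonnegative cost of the uncovered initial piece is sound, since every cost function is nonnegative. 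Your route buys the stronger quantitative bound $c(\beta_{i'_{trig}})\geq\frac{i_{trig}}{i'_{trig}}\,c(\beta_{i_{trig}})$ and with it a clean strictness criterion, whereas the paper's buys a direct citation of its lemma. Your caveat about strictness is in fact a point where you are more careful than the paper: with the paper's definition of a cost function (decreasing with $c(\beta_0)=0$, so $c\equiv 0$ qualifies), the strict inequality asserted in item 4 cannot hold literally for \emph{every} cost function, and the paper's own proof, resting on Theorem \ref{thm:dominatedImpliesInferior}, only delivers the weak inequality; your condition (that $c$ not vanish identically on the attained constraint values, e.g.\ $c$ strictly decreasing, so $c(\beta_{i_{trig}})>0$ and the factor $\frac{i_{trig}}{i'_{trig}}>1$ bites) is the correct repair.
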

\begin{proof}
  First we remark that $t_{trig}$ is well
  defined. Indeed, by hypothesis $i_{max}>
  i_{hosp}\geq i_{trig} \geq i(M_0) \geq 0$.  Thus, without mitigation $i(t)$ would
  increase from $i(M_0)$ to $i_{max}$ and there is some $t$ with
  $i(t)=i_{trig}$ by intermediate value theorem.\\
  In the absence of mitigation,  $i_{max}$ is realised at a point $(s,r)$ with
  $s=s_{herd}$ and $s$ is a strictly decreasing
  function of $t$. Since the mitigation occurs before $i=i_{max}$,
  we have
  $s_{trig}>s_{herd}$. For $t\geq t_{trig}$, the trajectory is on the
  diagonal line $D$ with equation  $r+s=1-i_{trig}$. It is geometrically clear that $D$ intersects
the vertical line   $V$ with equation $s=s_{herd}$ in a point $N\in T$. Thus $t_{relax}$
  is well defined : it is defined by $t_{relax}-t_{trig}$ is the time
  spent on the segment $[M(t_{trig})N]$.
  \\
  Since on a $R_0$-leaf, $i(t)$ is increasing iff $s(t)\geq
  s_{herd}$ (proposition \ref{prop:i_increasing_then_decreasing}), the first item holds. 
  \\
  The second item follows from the formula for $\beta$ in Theorem
  \ref{thm:computationOfImplicitBeta}, with the remarks that $s'=-r'$
  when $i(t)$ is constant and that
  $s(t_{relax})=s_{herd}=\frac{1}{R_0}$.
  \\
  The number $r_{\infty}$ is an increasing function of the energy
  level $h_0(s,r)$ associated to a
  $R_0$-curve. For $t\geq t_{relax}$, $M(t)$ lies on a fixed $R_0$
  curve,  thus it suffices to compute the value of $h_0$ at time
  $t=t_{relax}$ to characterise $r_{\infty}$. Now
  $M(t_{relax})=(s_{herd},r_{relax}=1-s_{herd}-i_{relax}=1-s_{herd}-i_{trig})$. Since the
  first variable $s=s_{herd}$ is fixed, the formula for $h_0$ yields that $h_0$ is a
  monotonous function of $r_{relax}$, hence of $i_{trig}$. This proves
  the third item.   \\
  As for the fifth item, since $dt=\frac{dr}{\mu i} $ by the
  sir equations,  we have
  $t_{relax}-t_{trig}=\int_{M(t_{trig})}^{M(t_{relax})}\frac{dr}{\mu i}
  =\frac{1}{\mu i_{trig}}
  \int_{M(t_{trig})}^{M(t_{relax})}dr=\frac{r(t_{relax})-r(t_{trig})}{\mu
    i_{trig}} $. When $i_{trig}(n)$ is a sequence that  tends
  to $0$, and $M_0(n)$ is a sequence of initial points that stays in
  the zone $[s_{min},1[$, then the limit ( as a function of $n$ ) of
  $r(t_{relax}(n))=1-s(t_{relax}(n))-i_{trig}(n)=1-s_{herd}-i_{trig}(n)$ is
  $1-s_{herd}$. Since $i_{trig}(n)$ tends to $0$, and since in the
  zone $s>s_{min}$, a $R_0$-leaf has a tangent with slope
  $\frac{dr}{ds} =-\frac{1}{ R_0s}\geq \frac{-1}{ R_0s_{min}}>-1$, the distance between
  $M_0(n)$ and $M(t_{trig}(n))$ tends to $0$. In particular, $\limsup
  r(t_{trig}(n))= \limsup r(M_0(n))=1-\liminf s(M_0(n))\leq 1-s_{min}<
  1-s_{herd}$. The formula for $t_{relax}-t_{trig}$
  then implies that $\lim_{n\to +\infty}
  t_{relax}(n)-t_{trig}(n)=+\infty$. \\
  The trajectory triggering at the level $i_{trig}$ is constrained on
  the segment $[M(t_{trig})),M(t_{relax})]$. This segment is included
  in a line $L:i=cte$. The same remark for $i'_{trig}$ corresponds to
  the segment $[M'(t'_{trig}),M'(t'_{relax})]$ and to a line $L'$. We consider $\tilde \phi:L\to L'$
  the geometrical affine map which sends $M(t_{trig})$ to
  $M'(t'_{trig})$ and $M(t_{relax})$ to  $M'(t'_{relax})$. We consider
  $\phi$ the induced temporal map $[t_{trig},t_{relax}] \to
  [t'_{trig},t'_{relax}]$ defined by 
  $\tilde \phi(M(t))=M'(\phi(t))$. For simplicity, we use the notations
  $t'=\phi(t)$, $\beta_{i_{trig}}=\beta$ and $\beta_{i'_{trig}}=\beta'$. By theorem
  \ref{thm:dominatedImpliesInferior}, to settle the fourth item,
  we need to prove that $\beta'(t')\leq \beta(t)$ and that
  $\frac{d\phi}{dt}\geq 1$. We have $\beta(t)=\frac{\mu}{s(M(t))} $ and
  $\beta'(t')=\frac{\mu}{s(M'(t'))} $ by Theorem
  \ref{thm:computationOfImplicitBeta}. The inequality $s(M'(t'))>s(M(t))$ is
  clear geometrically and easy to prove. The map $\phi$ sends
  $[t,t+dt]$ to $[t',t'+\frac{d\phi}{dt} dt=t'+dt']$. We want
  $dt'>dt$. By the sir equations, $dt=\frac{dr}{\mu i_{trig}} $ and
  $dt'=\frac{dr'}{\mu i'_{trig}} $. Since $\tilde \phi$ is an affine
  dilatation, we have $dr'>dr$. Finally since $i'_{trig}<i_{trig}$, it
  follows that $dt'>dt$.   
\end{proof}

\begin{coro}
  For all $i_{trig}$, $j_{trig}\in ]0,i_{hosp}[$, the corresponding
  controls $\beta_{i_{trig}}$ and $\beta_{j_{trig}}$
  are not comparable, thus no strategy is preferable. 
\end{coro}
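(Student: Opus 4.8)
The plan is to read the result directly off the two strict-monotonicity statements in Theorem \ref{thm:witoutRebound}, which do all the real work; the corollary is then a short verification against the definition of the preorder $\geq$ on controls. Since $\beta_{i_{trig}}=\beta_{j_{trig}}$ when $i_{trig}=j_{trig}$, I may assume the two trigger levels are distinct and, relabelling if necessary, that $i_{trig}<j_{trig}$.

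First I would invoke item 3 of Theorem \ref{thm:witoutRebound}, which asserts that $r_\infty$ is a \emph{strictly} increasing function of the trigger level. Applied to $i_{trig}<j_{trig}$ this gives $r_\infty(\beta_{i_{trig}})<r_\infty(\beta_{j_{trig}})$: the lower trigger level is strictly better on the infection count. Next I would invoke item 4, which says that a smaller trigger level is universally more costly: since $i_{trig}<j_{trig}$, for every cost function $c$ one has $c(\beta_{i_{trig}})>c(\beta_{j_{trig}})$. Thus the lower trigger level is strictly worse on every cost.

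These two facts point in opposite directions, and it remains only to unwind the definition of $\geq$. Recall that $\beta_1\geq\beta_2$ means both $r_\infty(\beta_1)\leq r_\infty(\beta_2)$ and $c(\beta_1)\leq c(\beta_2)$ for every cost function $c$. The relation $\beta_{i_{trig}}\geq\beta_{j_{trig}}$ would require $c(\beta_{i_{trig}})\leq c(\beta_{j_{trig}})$ for all $c$, which contradicts the strict cost inequality just obtained; and the relation $\beta_{j_{trig}}\geq\beta_{i_{trig}}$ would require $r_\infty(\beta_{j_{trig}})\leq r_\infty(\beta_{i_{trig}})$, which contradicts the strict $r_\infty$ inequality. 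Hence neither control is preferable to the other, i.e.\ they are incomparable, and no strategy dominates.

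There is essentially no obstacle here: the genuine content sits in items 3 and 4 of Theorem \ref{thm:witoutRebound}, and the corollary merely records that lowering $i_{trig}$ trades a strictly smaller $r_\infty$ against strictly higher constraints. The only point worth a word is that both inequalities are \emph{strict}, so that each candidate domination fails on precisely one of its two required conditions; were either inequality merely weak one would have to examine the tie case, but strictness rules this out and no further argument is needed.
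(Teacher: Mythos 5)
Your proof is correct and is exactly the paper's argument: the paper disposes of the corollary in one sentence by citing the strict monotonicity of $r_\infty$ in $i_{trig}$ (item 3 of Theorem \ref{thm:witoutRebound}) against the universal strict cost inequality (item 4), and you have merely unwound the definition of the preorder $\geq$ to make the two failed dominations explicit. Your added remarks on strictness and on reducing to the case $i_{trig}<j_{trig}$ are harmless elaborations, not a different route.
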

\begin{proof}
  By Theorem \ref{thm:witoutRebound},  if $i_{trig}<j_{trig}$, $r_{\infty}$ is lower for $i_{trig}$ but
   at the price of a more costly mitigation. 
\end{proof}

\subsubsection{Scenario with rebound}
The only difference between the scenario with rebound and the scenario
without rebound is on the choice of $t_{relax}$. In the previous
scenario without rebound,
the end of the mitigation was late, $t_{relax}$ was chosen large so that $i(t)$ decreases
for  $t\geq t_{relax}$. In the scenario with rebound, the mitigation is
relaxed sooner. As a consequence,  a rebound of $i(t)$ occurs when the mitigation
stops.
However, the rebound must be small enough not to overload
the health system. In technical terms, the inequality $i(t)\leq i_{hosp}$ must remain
true for $t\geq t_{relax}$. This property implicitly defines $t_{relax}$ :
The mitigation is relaxed as soon as possible provided 
the health system is not overwhelmed by the rebound.
By construction, there are less constraints for this scenario with
rebound in comparison to the scenario without rebound, since this
is the same level of constraints, but relaxed earlier. 
The formal definition of the strategy with rebound
is given in the next definition. 

\begin{figure}[h]
\centering
\includegraphics[scale=.3]{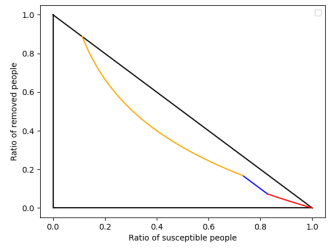}
\caption{Scenario with rebound, $M_0=(0.999,0)$.}
\label{illustrationAvecRebond}
\end{figure}

\begin{defi} The scenario with rebound starts at $t=0$ at the point
  $M_0\in T$, with $i(M_0)>0$, $s(M_0)>s_{herd}$. It
  depends on the parameter
  $i_{trig}\in [i(M_0),i_{hosp}]$ and it is defined via its
  control $\beta_{i_{trig}}$ as follows. 
  \begin{itemize}
  \item The scenario is divided in three stages $t\in [0,t_{trig}]$,
    $t\in [t_{trig},t_{relax}]$, $t\geq t_{relax}$.
  \item For $t\leq t_{trig}$, the disease evolves with no constraint:
    $( \beta_{trig}(t)=\beta_0)$ and $i(t)\leq i_{trig}$.
  \item $t_{trig}$ is defined as the smallest $t$ such that
    $i(t)=i_{trig}$.
  \item The mitigation policies are adjusted so that $i(t)=i_{trig}$
    for $t \in [t_{trig},t_{relax}]$
  \item For $t\geq t_{relax}$, $\beta_{trig}(t)=\beta_0$ : all
    constraints are removed.
  \item $t_{relax}$ satisfies $t_{relax}\geq t_{trig}$ and it is the
    smallest such $t$ satisfying $i([t,+\infty[)\subset
    [0,i_{hosp}]$ in the absence of mitigation after $t$.
    In other words, the health system is never
    overwhelmed after $t_{relax}$.
  \end{itemize}

\end{defi}

Remark \ref{rem:low_itrig_allowed} applies here and we will
can replace the condition  $i_{trig}\in [i(M_0),i_{hosp}]$ 
with $i_{trig}\in ]0,i_{hosp}]$ if $H(M_0)>0$. 

\begin{lm}
  Let $C_{hosp}$ be the $R_0$-curve containing the point
  $(s,r)=(s_{herd},1-i_{hosp}-s_{herd})$. Let $\Delta_{i_{trig}}$ be the
  line $1-r-s=i_{trig}$. The intersection $C_{hosp}\cap
  \Delta_{i_{trig}}$ contains a point $M_r$ satisfying
  $s_{herd}<s(M_r)$.  Moreover, $M(t_{relax})=M_r$.
  In particular, $r_{\infty}(\beta_{i_{trig}})$ is independent of
  $i_{trig}$ as for $t$ large enough, the point $M(t)$  is on the $R_0$-curve
  $C_{hosp}$. 
\end{lm}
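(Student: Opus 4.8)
The plan is to read off the geometry of the leaf $C_{hosp}$ using the fact that on any $R_0$-leaf the function $i$ is unimodal in $r$, with its unique maximum at $s=s_{herd}$ --- a fact already contained in Proposition~\ref{prop:i_increasing_then_decreasing} together with the graph description $s=e^{c-R_0r}$ of the leaves in Theorem~\ref{thm:foliation}. First I would note that the point $(s_{herd},1-i_{hosp}-s_{herd})$ defining $C_{hosp}$ has $i=i_{hosp}$ and sits at $s=s_{herd}$, so it is exactly the peak of $C_{hosp}$; hence the maximal value of $i$ along $C_{hosp}$ equals $i_{hosp}$. Along $C_{hosp}$, the value of $i$ rises from $0$ at $M_{init}$ to $i_{hosp}$ at the peak on the branch $s>s_{herd}$, then falls back to $0$ at $M_{\infty}$ on the branch $s<s_{herd}$. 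Since $i_{trig}\le i_{hosp}$, the level $i=i_{trig}$ is therefore attained on the branch $s>s_{herd}$, producing the desired point $M_r\in C_{hosp}\cap\Delta_{i_{trig}}$ with $s(M_r)>s_{herd}$ (strict exactly when $i_{trig}<i_{hosp}$). A one-line check substituting $r=1-i_{trig}-s$ into $\ln s+R_0 r=c_{hosp}$ confirms this: the resulting function of $s$ is maximised at $s_{herd}$, with maximal value exceeding $c_{hosp}$ by $R_0(i_{hosp}-i_{trig})\ge 0$, so there are two intersection points, one on each side of $s_{herd}$.

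Next I would prove $M(t_{relax})=M_r$ by tracking the rebound peak. During the mitigation $i(t)\equiv i_{trig}$, so $M(t)$ slides along $\Delta_{i_{trig}}$ with $r$ increasing and $s$ decreasing from $s_{trig}>s_{herd}$. If one relaxes at $M(t)$, the subsequent uncontrolled motion follows the $R_0$-leaf through $M(t)$, and its future maximum of $i$ is the peak of that leaf, namely $i$ evaluated at $s=s_{herd}$ on the leaf $H_0=c(t)$, where $c(t):=H_0(M(t))$. Writing this peak as $i_{peak}(c)=1-s_{herd}-\frac{c-\ln s_{herd}}{R_0}$ shows it is strictly decreasing in $c$; and since $dH_0/ds=\frac1s-R_0<0$ for $s>s_{herd}$, the quantity $c(t)$ strictly increases while $M(t)$ travels along $\Delta_{i_{trig}}$ with $s>s_{herd}$ (equivalently $dh_0/dt<0$ by Theorem~\ref{thm:breakingPower}). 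Hence $i_{peak}$ strictly decreases in $t$. At $t=t_{trig}$ the relevant leaf is the one through $M_0$, whose peak is $i_{max}>i_{hosp}$; as $t$ grows the peak falls, reaching $i_{hosp}$ exactly when $M(t)$ lands on $C_{hosp}$, i.e. at $M_r$. Thus relaxing keeps $i\le i_{hosp}$ for all later times precisely when $M(t)$ has reached $M_r$ or beyond, so the smallest admissible relaxation time is the first arrival at $M_r$, giving $M(t_{relax})=M_r$.

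Finally, the independence of $r_\infty$ is immediate: for $t\ge t_{relax}$ the control is $\beta_0$ and the trajectory issued from $M_r\in C_{hosp}$ stays on $C_{hosp}$, so by Proposition~\ref{prop:trajectoriesInFoliations}(4) its limit is $M_\infty(C_{hosp})$ and $r_\infty=r\big(M_\infty(C_{hosp})\big)$; since $C_{hosp}$ depends only on $i_{hosp}$ and $R_0$, the value $r_\infty(\beta_{i_{trig}})$ is the same for every $i_{trig}$ (and by Theorem~\ref{thm:damageIncreasesWithEnergy} is the $r_\infty$ attached to the energy level of $C_{hosp}$). The step I expect to be the main obstacle is the rigorous characterisation of $t_{relax}$ in the second paragraph: one must justify both that the future maximum of $i$ after relaxing is the monotone quantity $i_{peak}(c(t))$ along the whole mitigation segment, and that it is exactly this maximum --- and not any later feature of the rebound --- that decides admissibility. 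This is precisely where the unimodality of $i$ on $R_0$-leaves and the sign of $dH_0/ds$ for $s>s_{herd}$ are used.
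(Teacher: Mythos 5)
Your proof is correct and follows essentially the same route as the paper: your $i_{peak}(c)$ is exactly the paper's auxiliary function $i_{carac}$ (the value of $i$ at $s=s_{herd}$ on the $R_0$-leaf through a point), and both arguments locate $M_r$ on the mitigation segment via the intermediate value theorem and then pin down $t_{relax}$ by the dichotomy ``leaf peak above $i_{hosp}$ forces overload, leaf peak below it contradicts minimality.'' If anything, you make explicit a detail the paper leaves implicit, namely the strict monotonicity of $c(t)=H_0(M(t))$ along $\Delta_{i_{trig}}$ for $s>s_{herd}$, which justifies that the admissible relaxation points form exactly the sub-segment from $M_r$ onward.
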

\begin{proof}
  We have parameterised any $R_0$-leaf by a constant $c$.
  The $R_0$-leaf with constant $c$ has equation $\ln(s)+R_0r=c$.
  Let $i_{carac}:=1-s_{herd}-\frac{c-\ln(s_{herd})}{R_0}
  $. Geometrically, the meaning of $i_{carac}$ is the following. If the $R_0$-leaf
  intersects the vertical line $s=s_{herd}$ in a point $M$, $i_{carac}$ is the 
  value of $i(M)$.

  Note that any $R_0$-leaf is characterised by the
  value of $i_{carac}$. The $R_0$-leaf passing through $M(0)=M_0$ has $i_{carac}=i_{max}$ by definition
  of $i_{max}$ and proposition
\ref{prop:i_increasing_then_decreasing}.
The $R_0$-curve $C_{hosp}$ is defined by $i_{carac}=i_{hosp}$. 

  We extend the definition of $i_{carac}$ from $R_0$-leafs to points $M\in T$ : we let
  $i_{carac}(M)=i_{carac}(C)$ where $C$ is the $R_0$-curve through
  $M$ ( In formula :  $i_{carac}(M)=1-s_{herd}-\frac{\ln(s(M))+R_0r(M)-\ln(s_{herd})}{R_0} $). 
  
The two points $M(t_{trig})$ and $N=(s_{herd},1-s_{herd}-i_{trig})$
are on $\Delta_{i_{trig}}\cap T$. We have $i_{carac}( M(t_{trig}))=
i_{carac}(M(0))=i_{max}$ and $i_{carac}(N)=i_{trig}$.
 Since $i_{trig}\leq i_{hosp} \leq i_{max}$,  by
  the intermediate value theorem, there exists $M_r \in
  [M(t_{trig}),N]$ with $i_{carac}(M_r)=i_{hosp}$ (
  \textit{i.e.}. $M_r\in C_{hosp}$).

By definition of the strategy, the point $M(t_{relax})$ is a point on
the segment $[M(t_{trig}),N]$. If $ M(t_{relax})\in
[M(t_{trig}),M_r[$, then $i_{carac}(M(t_{relax}))>i_{hosp}$ is too large and the
health system is overwhelmed.  If $ M(t_{relax})\in ]M_r,N]$, then
 $i_{carac}(M(t_{relax}))<i_{hosp}$ and it is possible to reduce the
 duration of the mitigation with no overload on the health system, contradicting
 the minimality of $t_{relax}$.  Thus $M(t_{relax})=M_r$.


\end{proof}

\begin{thm}\label{thm:withRebound}
  Scenario with rebound.
  \begin{enumerate}
  \item Let $t_{herd}$ be the time such that $s(t_{herd})=s_{herd}$. Then
    $t_{herd}>t_{relax}$. 
    The quantity $i(t)$ is increasing for $t\leq t_{trig}$, constant for
    $t\in [t_{trig},t_{relax}]$, increasing for
    $t\in[t_{relax},t_{herd}]$, decreasing for $t\geq t_{herd}$. 
  \item $R(t):=\frac{\beta(t)}{\mu}=R_0$  for $t\leq t_{trig}$,
    increasing continuously for
    $t\in [t_{trig},t_{relax}]$ from $\frac{1}{s(t_{trig})} $ to $\frac{1}{s(t_{relax})}$ ,
    constant equals to $R_0$ for $t\geq t_{relax}$. In particular,
    $R(t)$ is continuous except at $t=t_{trig}$ and $t_{relax}$.  
  \item The maximum of constraint is
    $infess(\beta)=\frac{\mu}{s(t_{trig})}$. It is an increasing
    function of $i_{trig}$. Thus, from this point of view, the mitigation is harsher for a small
    $i_{trig}$. 
  \item $r_{\infty}$ is constant independent of the
    parameter $i_{trig}$ thus the strategies are ordered only by the
    cost functions. 
  \item Suppose $H(M_0)\geq 0$. There exists a unique $i_{min}$ such that the duration
    $d(i_{trig}):=t_{relax}-t_{i_{trig}}$ of the
    mitigation as a function of $i_{trig}$ is decreasing for
    $i_{trig}\in ]0,i_{min}]$, increasing for $i_{trig}\in
    ]i_{min},i_{hosp}]$.
  \item     If $i_{trig}<j_{trig}\leq i_{min}$, then for every  cost function $c$,
    $c(\beta_{i_{trig}})>c(\beta_{j_{trig}})$. In other words, a small
      $i_{trig}$ is universally more costly.
  \end{enumerate}
\end{thm}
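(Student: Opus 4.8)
The plan is to handle the six items in increasing order of difficulty, disposing of items 1--4 quickly from the dynamics and the preceding lemma, and concentrating the real effort on item 5. For item 1: on $[0,t_{trig}]$ we have $\beta=\beta_0$, and since $i(t)\le i_{trig}\le i_{hosp}<i_{max}$ with $i_{max}$ attained at $s=s_{herd}$, the trajectory stays on the branch $s>s_{herd}$, so $i$ is increasing by Proposition \ref{prop:i_increasing_then_decreasing}; it is constant on $[t_{trig},t_{relax}]$ by construction. After $t_{relax}$ we relax at $M(t_{relax})=M_r$, and the preceding lemma gives $s(M_r)>s_{herd}$, so Proposition \ref{prop:i_increasing_then_decreasing} forces $i$ to increase until $s$ reaches $s_{herd}$ at a time $t_{herd}>t_{relax}$ (strict, as $s$ decreases strictly from $s(M_r)>s_{herd}$) and to decrease afterwards. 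For item 2, on the mitigated interval $i$ is held constant, so $s'=-r'$ and the formula $\beta=-\mu s'/(s r')$ of Theorem \ref{thm:computationOfImplicitBeta} collapses to $\beta=\mu/s$, i.e. $R(t)=1/s(t)$; as $s$ decreases from $s(t_{trig})$ to $s(t_{relax})$, $R$ rises continuously between $1/s(t_{trig})$ and $1/s(t_{relax})$, with jumps at $t_{trig}$ and $t_{relax}$ since $s(t_{trig}),s(t_{relax})>s_{herd}=1/R_0$. Item 4 is the final sentence of the preceding lemma. For item 3, $\mathrm{infess}(\beta)$ is the least value of $\beta=\mu/s$, attained at the largest $s$, namely $s(t_{trig})$; as $i_{trig}$ grows, $M(t_{trig})$ slides down the increasing branch of the $R_0$-leaf through $M_0$, so $s(t_{trig})$ shrinks and $\mu/s(t_{trig})$ grows.

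For item 5, I would set $s_{trig}:=s(t_{trig})$ and $s_r:=s(t_{relax})=s(M_r)$. Using $dt=dr/(\mu i)$ with $i\equiv i_{trig}$ and $r=1-i_{trig}-s$ on the mitigated segment, the duration is
\[
d(i_{trig})=\frac{r(t_{relax})-r(t_{trig})}{\mu\, i_{trig}}=\frac{s_{trig}-s_r}{\mu\, i_{trig}}.
\]
Writing the characteristic level $i_{carac}$ of the preceding lemma on the line $i=i_{trig}$ gives $i_{carac}=g(s)-s_{herd}+i_{trig}$ with $g(s):=s-\tfrac{1}{R_0}\ln(s/s_{herd})$. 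Evaluating at $M(t_{trig})$ (where $i_{carac}=i_{max}$) and at $M_r$ (where $i_{carac}=i_{hosp}$) yields $g(s_{trig})=i_{max}+s_{herd}-i_{trig}$ and $g(s_r)=i_{hosp}+s_{herd}-i_{trig}$, hence the invariant
\[
g(s_{trig})-g(s_r)=i_{max}-i_{hosp},
\]
independent of $i_{trig}$. Writing $p=i_{trig}$ and differentiating the two relations, $g'(s_{trig})s_{trig}'=g'(s_r)s_r'=-1$, and the sign of $d'(p)$ equals that of $N(p):=(s_{trig}'-s_r')p-(s_{trig}-s_r)$.

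The hard part, and the main obstacle, is proving that $N$ vanishes exactly once, changing sign from $-$ to $+$. My plan is to show $N$ is strictly increasing: differentiating gives $N'(p)=(s_{trig}''-s_r'')\,p$, and from $g'(s)s'=-1$ one finds $s''=-g''(s)/(g'(s))^3$. The substitution $u=R_0 s$ turns $g''(s)/(g'(s))^3$ into $R_0\,u/(u-1)^3$, and since $u\mapsto u/(u-1)^3$ has derivative $-(2u+1)/(u-1)^4<0$ for $u>1$, the map $s\mapsto g''(s)/(g'(s))^3$ is decreasing on $\{s>s_{herd}\}$; as $s_r<s_{trig}$ this gives $s_{trig}''-s_r''>0$, so $N'(p)>0$. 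Combined with $N(0^+)<0$ (the product term vanishes while $s_{trig}>s_r$) and $N(p)\to+\infty$ as $p\to i_{hosp}$ (there $s_r\to s_{herd}$, so $g'(s_r)\to0^+$ and $s_r'\to-\infty$), strict monotonicity forces a unique zero $i_{min}$, whence $d$ decreases on $\,]0,i_{min}]$ and increases on $\,]i_{min},i_{hosp}]$. The hypothesis $H(M_0)\ge 0$ is exactly what licenses the full range $i_{trig}\in\,]0,i_{hosp}]$ through Remark \ref{rem:low_itrig_allowed}.

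Finally, for item 6 I would fix $i_{trig}<j_{trig}\le i_{min}$ and appeal to Theorem \ref{thm:dominatedImpliesInferior}. Both mitigations lie on horizontal segments $i\equiv\text{const}$, along which $t$ is affine in $s$ (because $dt=-ds/(\mu i)$), and $i_{trig}<j_{trig}$ forces $s_{trig}^{(i)}>s_{trig}^{(j)}$ and $s_r^{(i)}>s_r^{(j)}$ via the relations for $g$. Let $\phi$ be the increasing affine map from the $j$-mitigation interval onto the $i$-mitigation interval, which is longer by item 5; then $\phi'=d(i_{trig})/d(j_{trig})>1$. Matching progress fractions identifies the $s$-coordinates $\sigma_i(\lambda),\sigma_j(\lambda)$ affinely in $\lambda$, and since $\sigma_i>\sigma_j$ at both endpoints it holds throughout, giving $\beta_{i_{trig}}(\phi(t))=\mu/\sigma_i\le\mu/\sigma_j=\beta_{j_{trig}}(t)$. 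Theorem \ref{thm:dominatedImpliesInferior} then yields $c(\beta_{i_{trig}})\ge c(\beta_{j_{trig}})$ for every cost function, with strictness coming from $\phi'>1$ on the whole mitigated interval, where $\beta_{i_{trig}}<\beta_0$.
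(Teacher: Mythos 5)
Your proof is correct, and for items 1--4 and 6 it follows essentially the paper's own route: items 1--3 are transported from the no-rebound analysis (Theorem \ref{thm:witoutRebound}) via Proposition \ref{prop:i_increasing_then_decreasing} and the formula $\beta=\mu/s$ from Theorem \ref{thm:computationOfImplicitBeta}; item 4 is read off the preceding lemma identifying $M(t_{relax})=M_r\in C_{hosp}$; and item 6 uses exactly the paper's device, an affine time-change between the two constant-$i$ segments whose \emph{global} dilation factor $d(i_{trig})/d(j_{trig})>1$ is supplied by item 5 and fed into Theorem \ref{thm:dominatedImpliesInferior}. Where you genuinely diverge is item 5. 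Both you and the paper reduce to the sign of $N(p)=p(s_1'-s_2')-(s_1-s_2)$ with $s_k'=R_0s_k/(1-R_0s_k)$, but the paper factors $N=(s_1-s_2)\bigl[pR_0/((1-R_0s_1)(1-R_0s_2))-1\bigr]$ and shows $W(p)=pR_0-(R_0s_1-1)(R_0s_2-1)$ is strictly increasing using only first derivatives ($s_1,s_2$ decreasing), with $W(0)<0$ and $W(i_{hosp})=i_{hosp}R_0>0$; you instead prove $N$ itself strictly increasing, via $N'(p)=(s_1''-s_2'')p$, the identity $s''=-g''(s)/(g'(s))^3$ obtained from $g'(s)s'=-1$, and the monotonicity of $u\mapsto u/(u-1)^3$ for $u>1$. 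Your computation checks out (the derivative $-(2u+1)/(u-1)^4$, and the endpoint behaviour $N(0^+)<0$ and $N\to+\infty$ at $i_{hosp}$ where $g'(s_r)\to 0^+$), and your bookkeeping through $g(s)=s-\ln(s/s_{herd})/R_0$, with the invariant $g(s_{trig})-g(s_r)=i_{max}-i_{hosp}$, is equivalent to the paper's constancy of $H$ on the two bounding $R_0$-leaves. The trade-off: the paper's factorization is slicker and avoids second derivatives (its $W$ is even finite at $i_{hosp}$, whereas your $N$ blows up there --- harmless, since positivity near the endpoint is all that is used), while your normalization $u=R_0s>1$ makes the signs transparent; indeed the paper's parenthetical claim that $1-R_0s_1$ and $1-R_0s_2$ are positive is a slip (both are negative; only their product is positive), which your version sidesteps. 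One shared caveat, not a gap relative to the paper: the strict inequality in item 6 tacitly requires the cost function not to vanish identically on the relevant range of constraints, an implicit assumption the paper's own proof also makes.
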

\begin{proof}
  The first three items are proved like in the scenario without
  rebound ( Theorem \ref{thm:witoutRebound}).
  The fourth item is a direct consequence of the previous
  lemma.

  We now prove item 5. Since $H(M_0)\geq 0$, we suppose that
  $i_{trig}\in ]0,i_{hosp}]$ by remark \ref{rem:low_itrig_allowed}.
  \\
    We define $s_1,s_2,r_1,r_2$ functions of $i_{trig}$ by
  $(s_1(i_{trig}),r_1(i_{trig}))=M(t_{trig})$ and
  $(s_2(i_{trig}),r_2(i_{trig}))=M(t_{relax})$.
  By construction, for $k=1$ or $2$, 
 $r_k+s_k= 1-i_{trig}$ and $R_0 r_k + \ln s_k=H_k$ is a constant
 independent of $i_{trig}$.  By derivation with respect to $i_{trig},$ it follows: 
 $r_{k}'+s_{k}'=-1$,  $R_0 r_{k}'+ \frac{s_{k}'}{s_k}=0$,
 and $s_{k}' = \frac{R_0 s_k}{1-R_0 s_k}$.

 We have $d(i_{trig})= \int_{M(t_{trig})}^{M(t_{relax})}dt=
 \frac{r_2(i_{trig})-r_1(i_{trig})}{\mu i_{trig}}$, which follows from
 the sir-relation $\frac{dr}{dt} =\mu i_{trig}$. 
 Thus $d'$ has the sign of $i_{trig}(r_{2}'-r_{1}')-(r_2-r_1)=
 i_{trig} ( \frac{R_0 s_1}{1-R_0s_1}-\frac{R_0 s_2}{1-R_0 s_2}
 )-(s_1-s_2)=(s_1-s_2)   \left[
   \frac{i_{trig}R_0}{(1-R_0s_1)(1-R_0s_2)}-1 \right]$.  Since
 $s_1-s_2$, $1-R_0s_1$ and $1-R_0s_2$ are positive, the sign of $d'$ is the sign of the function
 (of $i_{trig})$ $W= i_{trig}R_0- (R_0 s_1-1)(R_0s_2-1)$.  Now we make the following remarks: 

 \begin{itemize}
\item $W(i_{trig})$ is a strictly increasing function of $i_{trig}\in
  ]0,i_{hosp}]$.  This comes from
  the fact that  $i_{trig}R_0$  is strictly increasing and that $s_1$ and $s_2$ are decreasing.

\item $s_2(i_{hosp})= \frac{1}{R_0}$ and then $W(i_{hosp})= i_{hosp}R_0>0$. 

\item  The functions $s_1,s_2$ have limits $s_1(0)$ and $s_2(0)$ when
 $i_{trig}$ tends to $0$. We have $s_1(0)>\frac{1}{R_0} $ and
 $s_2(0)>\frac{1}{R_0} $. Thus the limit
 $W(0)=-(R_0s_1(0)-1)(R_0s_2(0)-1)$ of $W$ is strictly negative.

\end{itemize}
It follows by the intermediate value theorem that $d'(i_{min})=0$ for
some $i_{min}\in ]0,i_{hosp}[$, and that 
$d'(]0,i_{min}[)\subset ]-\infty,0[$, $d'(]i_{min},i_{hosp}])\subset
]0,+\infty[$. This implies the fifth item.

  The last item is proved like in Theorem
  \ref{thm:witoutRebound}, with the following change
  to prove that $\frac{d\phi}{dt} \geq 1$, \textit{i.e.} that $\phi$ is a
  local dilatation of the time. The map $\phi$
  is affine in $r$, and since $i$ is constant, the sir equation
  $\frac{dr}{dt} =\mu i$ implies that $\phi$ is affine in $t$.
  It follows that $\phi$ is locally a dilatation of the time $t$ if and only if it is
  globally a dilatation. Item $5$ proves that $\phi$ is
  globally a dilatation and concludes the proof of item 6.

\end{proof}

\begin{coro}
  For all $i_{trig}<j_{trig}\in ]0, i_{min}]$,  the
  control $\beta_{j_{trig}}$  is preferable to the control $\beta_{i_{trig}}$. In particular, all
  strategies with $i_{trig}<i_{min}$ should be avoided. \\ 
  For all $i_{trig},j_{trig}\in [i_{min},i_{hosp}]$, $\beta_{i_{trig}}$ and $\beta_{j_{trig}}$
  are not comparable, thus no strategy is preferable. 
\end{coro}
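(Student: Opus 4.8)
The plan is to deduce both statements directly from Theorem \ref{thm:withRebound} together with the comparison criterion of Theorem \ref{thm:criterion_for_maximal_control}, without reopening any differential computation. The key enabling fact is item 4 of Theorem \ref{thm:withRebound}: all the controls $\beta_{i_{trig}}$, $i_{trig}\in ]0,i_{hosp}]$, share a single value of $r_{\infty}$. Hence this whole family satisfies the hypothesis of Theorem \ref{thm:criterion_for_maximal_control}, and comparability between any two of its members is decided purely by the cost functions.

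For the first assertion, fix $i_{trig}<j_{trig}\leq i_{min}$. Item 4 gives $r_{\infty}(\beta_{j_{trig}})=r_{\infty}(\beta_{i_{trig}})$, and item 6 gives $c(\beta_{j_{trig}})<c(\beta_{i_{trig}})$ for every cost function $c$. By the definition of the preorder on controls, these two facts are exactly the statement $\beta_{j_{trig}}\geq \beta_{i_{trig}}$, i.e. $\beta_{j_{trig}}$ is preferable to $\beta_{i_{trig}}$. Taking $j_{trig}=i_{min}$ and letting $i_{trig}$ range over $]0,i_{min}[$ shows that every strategy with trigger strictly below $i_{min}$ is universally dominated, hence should be avoided.

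For the second assertion, fix $i_{min}\leq i_{trig}<j_{trig}\leq i_{hosp}$ and suppose, for contradiction, that the two controls are comparable. As they have the same $r_{\infty}$, comparability means one is preferable to the other, and Theorem \ref{thm:criterion_for_maximal_control} then forces that preferable control to have simultaneously the smaller duration and the larger essential infimum of $\beta$ (the milder peak constraint). In the range $[i_{min},i_{hosp}]$ these two monotonicities pull in opposite directions: item 5 makes $d(\beta_{i_{trig}})$ strictly increasing in $i_{trig}$, so $d(\beta_{i_{trig}})<d(\beta_{j_{trig}})$ (the smaller trigger is shorter), while item 3 makes $infess(\beta_{i_{trig}})=\mu/s(t_{trig})$ strictly increasing in $i_{trig}$, so $infess(\beta_{i_{trig}})<infess(\beta_{j_{trig}})$ (the smaller trigger is harsher). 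If $\beta_{i_{trig}}$ were preferable, the criterion would give $infess(\beta_{i_{trig}})\geq infess(\beta_{j_{trig}})$, contradicting the second inequality; if $\beta_{j_{trig}}$ were preferable it would give $d(\beta_{j_{trig}})\leq d(\beta_{i_{trig}})$, contradicting the first. Hence neither is preferable and the two controls are incomparable.

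The argument carries essentially no analytic content beyond what is already packaged in Theorem \ref{thm:withRebound}; the only points demanding care are bookkeeping ones --- matching the direction of the preorder $\geq$ in the definition against the inequality directions in Theorem \ref{thm:criterion_for_maximal_control}, and verifying that item 4 genuinely licenses the use of that criterion by supplying a constant $r_{\infty}$ across the whole family. The real mathematical work (the existence and location of $i_{min}$ through the sign analysis of $W(i_{trig})$, and the universal cost comparison via the time-dilation map $\phi$) has already been carried out inside the proof of Theorem \ref{thm:withRebound}, so this corollary is a pure repackaging of items 3--6.
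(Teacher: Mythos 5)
Your proposal is correct and follows essentially the same route as the paper: item 4 of Theorem \ref{thm:withRebound} supplies the common $r_{\infty}$, item 6 yields the universal cost domination below $i_{min}$ (hence preferability by the definition of the preorder), and items 3 and 5 combined with Theorem \ref{thm:criterion_for_maximal_control} give the incomparability on $[i_{min},i_{hosp}]$ because duration and maximal constraint vary in opposite directions. Your write-up merely makes explicit the two contradiction branches that the paper's proof leaves implicit; no gap.
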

\begin{proof}
  All the strategies with rebound have the same $r_{\infty}$, thus
  they are ordered by the cost functions. For $i_{trig}<i_{min}$, a
  small $i_{trig}$ corresponds to a universally high cost by the
  theorem.   
  If $i_{trig}\in [i_{min},i_{hosp}]$, then a small $i_{trig}$ corresponds to
  a short duration by item 5 but a high maximum constraint by item 3.
  Two strategies associated to
$i_{trig},j_{trig}\in [i_{min},i_{hosp}]$
  are then incomparable by Theorem \ref{thm:criterion_for_maximal_control} 
\end{proof}

In summary,  the theorem tells that if the
mitigation occurs when the health system is insufficiently full,
$\beta_{i_{trig}}$  is not a maximal control and other
controls are preferable. 
For the choice of
$i_{trig}$ to be
sound, it is necessary that $i_{trig}\geq i_{min}$.
Once $i_{trig}\geq i_{min}$, all choices make sense  and
compromises occur : a 
larger $i_{trig}$
corresponds to interventions that last longer with softer maximal constraint. In
particular, the control $\beta_{i_{hosp}}$ has the longest constraint
time but the softest maximal constraint.

\bibliographystyle{plain} 

\end{document}